\newtheorem{clm}[theorem]{Claim}
\Crefname{algocf}{Algorithm}{Algorithms}
\crefname{algocfline}{line}{lines}
\Crefname{invariant}{Invariant}{Invariants}
\Crefname{clm}{clm}{Claims}
\Crefname{clm}{clm}{Claims}
\Crefname{figure}{Figure}{Figures}
\Crefname{subclaim}{Subclaim}{Subclaims}
\DeclareMathOperator*{\argmin}{arg\,min}
\newcommand{\Mid}{{\mathsf{Mid}}}
\newcommand{\Bal}{{\mathsf{Bal}}}
\newcommand{\X}{\mathcal{X}}
\newcommand{\veps}{\varepsilon}
\newcommand{\eps}{\varepsilon}
\newcommand{\E}{\mathbf{E}}
\newcommand{\D}{\tilde{D}}
\newcommand{\cS}{{\mathcal S}}
\newcommand{\tpi}{{\widetilde \pi}}
\newcommand{\tO}{{\widetilde O}}
\definecolor{mycolor}{rgb}{1, 0.0, 0.0}
\newcommand{\supp}{{\textsf{supp}}}
\newcommand{\loss}{{\textsf{loss}}}
\newcommand{\conloss}{{\textsf{concatloss}}}
\newcommand{\robust}{{\textsf{Robust Sort}}\xspace}
\newcommand{\LCS}{{\textsf{LCS}}}
\newcommand{\lcs}{{\textsf{lcs}}}
\newcommand{\fvs}{{\textsf{Feedback Vertex Set}}\xspace}
\newcommand{\RS}{{\texttt{ROBUST-SORT}}}
\begin{document}

\title{Robust-Sorting and Applications to Ulam-Median}
\author{Ragesh Jaiswal, Amit Kumar, Jatin Yadav}
\institute{Department of Computer Science and Engineering, IIT Delhi}

\maketitle

\begin{abstract}
Sorting is one of the most basic primitives in many algorithms and data analysis tasks. Comparison-based sorting algorithms, like quick-sort and merge-sort, are known to be optimal when the outcome of each comparison is error-free. However, many real-world sorting applications operate in scenarios where the outcome of each comparison can be noisy. 
In this work, we explore settings where a bounded number of comparisons are potentially corrupted by erroneous agents, resulting in arbitrary, adversarial outcomes.

\ \ \ \ \ We model the sorting problem as a query-limited tournament graph where edges involving erroneous nodes may yield arbitrary results. Our primary contribution is a randomized algorithm inspired by quick-sort that, in expectation, produces an ordering close to the true total order while only querying $\tilde{O}(n)$ edges. 
We achieve a distance from the target order $\pi$ within $(3 + \epsilon)|B|$, where $B$ is the set of erroneous nodes, balancing the competing objectives of minimizing both query complexity and misalignment with $\pi$. Our algorithm needs to carefully balance two aspects -- identify a pivot that partitions the vertex set evenly and ensure that this partition is ``truthful'' and yet query as few ``triangles'' in the graph $G$ as possible. Since the nodes in $B$ can potentially {\em hide} in an intricate manner, our algorithm requires several technical steps that ensure that progress is made in each recursive step. 

\ \ \ \ \ Additionally, we demonstrate significant implications for the Ulam-$k$-Median problem. This is a classical clustering problem where the metric is defined on the set of permutations on a set of $d$ elements.  Chakraborty, Das, and Krauthgamer gave a $(2-\varepsilon)$ FPT approximation algorithm for this problem, where the running time is super-linear in both $n$ and $d$. We give the first $(2-\varepsilon)$ FPT linear time approximation algorithm for this problem. Our main technical result gives a strengthening of the results in Chakraborty et al. by showing that a good 1-median solution can be obtained from a constant-size random sample of the input. We use our robust sorting framework to find a good solution from such a random sample. We feel that the notion of robust sorting should have applications in several such settings. 
\end{abstract}
\section{Introduction}
Sorting is one of the most basic primitives in many algorithms and data analysis tasks. The classical model 
of comparison-based sorting has been extensively studied, where one aims to sort 
a list of $n$ objects using pairwise comparisons. It is well-known that, in 
this model, sorting requires at least $ n \log n $ comparisons in the worst case. 
Popular algorithms such as merge sort, quick-sort, and heap-sort achieve 
this bound with $ O(n \log n) $ comparisons.

However, in practical scenarios, sorting is often applied to very large datasets 
where errors or imperfections in the comparisons are unavoidable. In real-world 
applications involving noisy data or large-scale distributed systems, comparisons 
may occasionally be faulty due to hardware imperfections, data corruption, or other 
noisy sources. Thus, it is crucial to extend classical sorting algorithms to handle 
such imperfections effectively.
An approach for dealing with such noisy errors in sorting was initiated by Feige, Raghavan, Peleg, and Upfal \cite{FRPU94}. In their model, 
each comparison's outcome is flipped independently with some known probability $p$. Assuming $p$ is a constant, repeatedly querying a pair $\theta(\log n)$ times would ensure that the outcome is correct with high probability. However, this would lead to $O(n \log^2 n)$ number of comparisons.~\cite{FRPU94} showed that one could instead obtain an algorithm that outputs the sorted order with high probability and performs $O(n \log n)$ comparisons. There has been much recent work~\cite{wgw22,gx23} on obtaining tight dependence on the parameter $p$.

Our work is rooted in adversarial settings where each key is controlled by an agent, and some of the agents may be erroneous. Thus, any comparison involving one of the erroneous agents could result in an arbitrary, though deterministic, outcome. Such models are of practical relevance. For example, in distributed computing environments, 
each data element may be processed by different nodes, and some nodes could behave 
erratically due to hardware failures or software bugs. Similarly, in data integration 
tasks, comparisons may be unreliable due to discrepancies between data sources with 
inconsistent quality. In these scenarios, it is natural to seek sorting algorithms 
that are robust to a bounded number of adversarial errors.

To formalize this setting, we model the outcome of all pair-wise comparisons between $n$ keys as a tournament graph $G$ (recall that a tournament is a directed graph that has a directed arc between each pair of vertices) and assume that there is a total order $\pi$ on the vertices of $G$. If there were no erroneous edges, then the edges of $G$ would be consistent with $\pi$ (i.e., $G$ has a directed arc $(u,v)$ iff $u$ appears before $v$ in $\pi$). However, there are some ``bad'' nodes $B$ in this graph: querying any edge involving such a bad node as an end-point can lead to an (adversarially) arbitrary outcome. But the outcome of querying any other edge is consistent with $\pi$.  

The algorithm queries some of the edges of $G$, and then outputs an ordering on all the vertices of $G$. 
The goal is to output an ordering with minimum distance (measured in terms of the length of the longest common subsequence) with respect to $\pi$. 

Observe that there are two competing objectives here: query as few edges of $G$ as possible and minimize the mismatch with the hidden input sequence. If we query all the $O(n^2)$ edges of $G$, it is not difficult to show that a 2-approximation algorithm for the feedback vertex set on tournaments~\cite{fvst} can be used to obtain an ordering that has distance at most $3|B|$ from $\pi$.  In this work, we address the following question: 
\begin{quote}
    \emph{Can we get an efficient algorithm that queries $\tilde{O}(n)$ edges in $G$, and outputs an ordering that has distance at most $O(|B|)$ from $\pi$? }
\end{quote}

We answer this question in the affirmative and give a randomized algorithm that comes within distance  $(3+\epsilon)|B|$ of $\pi$. Furthermore, it runs in $\tilde{O}(n)$ time.
Our algorithm is based on careful identification of good pivots which are likely to be outside the set $B$. Since the algorithm does not know the set $B$, we can only use indirect approaches for this. Our algorithm relies on finding directed triangles in $G$ -- we know that there must be at least one vertex of $B$ in such a triangle. However, each failed attempt to identify such a triangle increases the query complexity. Thus, the algorithm needs to carefully balance these two aspects -- identify a pivot that partitions the vertex set evenly and ensure that this partition is ``truthful'' and yet query as few triangles in the graph $G$ as possible. Direct implementations of these ideas do not work for several subtle reasons: (i) a pivot belonging to the set $B$ may be able to {\em hide} itself by participating in few triangles, and yet, may create large number of misalignments if used for partitioning the set of elements, (ii) elements in $V \setminus B$ will result in truthful partitions of $V$, but may be involved in lot of triangles (involving elements in $B$), and hence, the algorithm may not use them to act as pivots. The solution lies in a technically more involved strategy: first, ensure by random sampling that there are not too many triangles in $G$. Subsequently, we show that an element that results in an almost balanced partition and is not involved in too many triangles can be used as a pivot. Proving this fact lies at the heart of our technical contribution. \\

\noindent {\bf{Implications for Ulam-$k$-Median Problem.}} Our result on robust sorting has interesting implications for another well-studied problem, namely the Ulam-$k$-Median problem. Given a positive integer $d$, let $[d]$ denote the set $\{1, \ldots, d\}$ and $\Pi^d$ the set of all permutations over $[d]$. We can define a metric $\Delta$ over $\Pi^d$ as follows: given $\sigma_1, \sigma_2$, $\Delta(\sigma_1, \sigma_2) \coloneqq d-|\lcs(\sigma_1, \sigma_2)|$, where $\lcs(\sigma_1, \sigma_2)$ denotes the length of the longest common subsequence between $\sigma_1$ and $\sigma_2$. The metric $(\Pi^d, \Delta)$ is also popularly known as the Ulam metric.
An instance of this problem is specified by a subset $S \subseteq \Pi^d$ and an integer $k$. The goal is to find a subset $F$ of $k$ permutations in $\Pi^d$ (which may not lie in $S$) such that the objective value
$$ \sum_{\sigma \in S} \min_{\pi \in F} \Delta(\pi, \sigma)$$
is minimized. In other words, it is the $k$-median problem where the metric is given by $\Delta$ on $\Pi^d$. When $k=1$, there is a simple 2-approximation algorithm (which works for any metric): output the best point in the input set $S$. 

Breaking the approximation guarantee of $2$ for specific metrics, such as the Ulam metric, was open for a long time. Recently, a sequence of works~\cite{cdk21,cdk23} breaks this 2-factor barrier for the Ulam $k$-median problem. The algorithm is a fixed parameter tractable (FPT) algorithm (with respect to the parameter $k$) that gives an approximation guarantee of $(2 - \delta)$ for a very small but fixed constant $\delta > 0$. The running time of this algorithm is $(k \log{nd})^{O(k)} nd^3$, which can be written as $f(k)\cdot poly(nd)$ and hence is FPT time.
Note that the running time is super-linear in $n$ and cubic in $d$.
This contrasts with several FPT approximation algorithms for the $k$-median/means problems in the literature~\cite{kss,jks,bjk18,bgjk20,gj20}  that have linear running time in the input size. For example, the running time of the FPT $(1+\veps)$-approximation algorithms for the Euclidean $k$-median/means problems in \cite{kss,jks,bjk18,bgjk20}  is linear in $nd$, where $d$ is the dimension. Similarly, the running time for the FPT constant factor approximation algorithms~\cite{gj20} in the metric setting is linear in $n$. Thus, we ask: 
\begin{quote}
{\it Is it possible to break the barrier of 2-approximation for the Ulam $k$-median problem using an FPT algorithm with a \textbf{linear} running time in the input size, i.e., $nd$ ?}
\end{quote}

We show that the answer is affirmative using two crucial ideas, one of which relies on the robust sorting problem.
We build on the ideas of \cite{cdk23} for designing a $(2-\delta)$-approximation for the 1-median problem. They show that for any input set $S$, there exist five permutations $\sigma_1, ..., \sigma_5$ in $S$ from which we can find a permutation $\tilde{\sigma}$ that has a small distance with respect to the optimal $\sigma^*$. To obtain linear in $n$ dependence on the running time, we show that a constant size randomly sampled subset of input permutations contains these five permutations with high probability. To obtain linear dependence on $d$, given guesses $\sigma_1, ..., \sigma_5$ for the five permutations, we can assign each pair of symbols $(a,b)$ a direction based on the {\em majority} vote among these permutations. Now, it turns out that a good solution to the corresponding robust sorting instance is close to the optimal solution $\sigma^*$.

\section{Preliminaries and Problem Definition}
We discuss two problems -- {\em Robust Sorting and Ulam median}. We start with the robust sorting.\\

\noindent {\bf Robust Sorting:} 
We are given a set of $n$ elements $V$. There is an (unknown) total order $\pi$ over $V$. However, we have imperfect access to the total order $\pi$. We are given an implicit directed graph $G = (V, E)$, where we have a directed edge between every pair of elements in $V$ (such graphs are often denoted {\em tournaments}). In the ideal (zero error) scenario, the edge set $E$ would correspond to $\pi$, i.e., for every distinct pair $u, v \in V$, we have the directed arc $(u,v)$ iff $u$ comes before $v$ in $\pi$. However, we allow the graph $G$ to represent $\pi$ in an imperfect manner as formalized below: 

\begin{definition}[Imperfect representation]
We say that a tournament $G=(V, E)$ is $B$-imperfect with respect to a total order $\pi$ on $V$, where $B$ is a subset of $V$, if for every pair of distinct $u,v \in V \setminus B$, $G$ has the arc $(u,v)$  iff $u$ comes before $v$ in $\pi$. For an integer $b$, we say that $G$ is $b$-imperfect w.r.t. $\pi$ if there is a subset $B$ of $V$, $|B| \leq b$ such that $G$ is $B$-imperfect with respect to $\pi$. 
\end{definition}
In other words, if $G$ is $B$-imperfect w.r.t. $\pi$, then the arcs with both end-points outside $B$ represent $\pi$ correctly, but we cannot give any guarantee for arcs incident with vertices in $B$. We shall often ignore the reference to $\pi$ if it is clear from the context. For edge $(u, v) \in E$, we will sometimes use the notation $u < v$ or $v > u$. Note that the relations $<, >$ are not transitive, except when $|B| = 0$. Similarly, we will sometimes say that $u$ is {\it lesser than} (resp. {\it greater than}) $v$ if $(u, v) \in E$ (resp. $(v, u) \in E$).

The \robust problem is as follows: given a set of points $V$ with an implicit total order $\pi$, and a query access to the edges in a tournament  $G$, output an ordering $\tpi$ on $V$ which maximizes $\lcs(\pi, \tpi)$ while using as few queries to $G$ as possible.  Here $\LCS(\pi, \tpi)$ denotes the  longest common subsequence between $\pi$ and $\tpi,$ and $\lcs(\pi, \tpi)$ denotes the length of $\LCS(\pi, \tpi)$. Observe that when $G$ is $0$-imperfect w.r.t. $\pi$, one can obtain the total order $\pi$ with $O(n \log n)$ queries to $G$.

The \robust problem can be reduced to the well-studied \fvs problem on tournaments (FVST).
A feedback vertex set in a directed graph is a subset of vertices whose removal makes the graph acyclic. The Feedback Vertex Set in Tournaments (FVST) problem is to find a feedback vertex set of the smallest size in a given Tournament graph.
If we are allowed to query all the edges in $G$, then an $\alpha$-approximation algorithm for FVST can be utilized to obtain an $(\alpha+1)$-approximation algorithm for \robust as follows: {\em find an $\alpha$-approximate feedback vertex set $B'$ and then find a topological ordering $\pi_1$ on $V \setminus B'$. Output the concatenation $\tpi$ of any arbitrary ordering on $B'$ with $\pi_1$}. 
Clearly, $\LCS(\pi, \tpi) \geq |\pi_1| - |B| = |V| - |B'| - |B| \geq |V| - (\alpha+1)|B|.$ Hence, using exponential time, we can find an optimal feedback vertex set and therefore a $2$-approximate solution to \robust. However, since feedback vertex set on tournaments is NP-hard to approximate better than a factor of $2$, and a polynomial time 2-approximation algorithm is known~\cite{fvst}, $3$-approximation is the best we can get from such a direct reduction to FVST. Indeed, consider a simple example where $V = \{v_1, v_2, v_3, v_4\}$, $\pi = (v_2, v_1, v_3, v_4), B = \{v_2\}$. Consider a $B$-imperfect tournament $G$, such that for all unordered pairs $(v_i, v_j)$ with $i < j$, there is an arc from $v_i$ to $v_j$ in $G$, except the direction of the arc is reversed for $(v_2, v_3)$ and $(v_2, v_4)$. In this example, a $2$-approximation algorithm might return $\{v_3, v_4\}$ as the feedback vertex set. Hence, $\pi_1 = (v_1, v_2)$ is the only topological ordering on the remaining vertices, and we might return $\tpi = (v_4, v_3, v_1, v_2)$. Here, $\LCS(\pi, \tpi) = 1 = |V| - 3|B|.$

Although the algorithm of  Lokshtanov et al.~\cite{fvst} is also inspired by quick-sort, it queries $\Omega(n^2)$ edges and runs in $O(n^{12})$ time. In our setting, we are constrained by near-linear running time and, hence, can only check a small subset of triangles for consistency. However, this causes several other issues: a bad pivot can masquerade as a good one, and a good pivot may not get a chance to partition the elements into almost equal halves. The fact that we are on a very tight budget in terms of the number of queries and errors makes the algorithm and analysis quite subtle. \\

\noindent {\bf Ulam median: } The Ulam $ k$-median problem is simply the $k$-median problem defined over the Ulam metric $(\Pi^d, \Delta)$ -- {\it given a set $S \subset \Pi^d$ of $n$ elements and a positive integer $k$, find a set $C$ of $k$ elements (called centers) such that the objective function $Obj(S, C) \coloneqq \sum_{s \in S} \min_{c \in C} \Delta(s, c)$ is minimized.}
Here, $\Pi^d$ is the set of all permutations of $[d] \coloneqq \{1, 2, 3, ..., d\}$, which can also be seen as all $d$-length strings with distinct symbols from set $\{1, 2, ..., d\}$.
The distance function $\Delta$ is defined as $\Delta(x, y) \coloneq d - \lcs(x, y)$, where $\lcs(x, y)$ denotes the length of the {\em longest common subsequence} of permutations $x, y \in \Pi^d$. There is a trivial 2-approximation algorithm and it has been a long open challenge to obtain a better approximation. Breaking this barrier of 2 was recently achieved by Chakraborty et al.~\cite{cdk23} who gave a parameterized algorithm (with parameter $k$) with a running time of $(k \log{nd})^{O(k)} nd^3$ and approximation guarantee of $(2-\delta)$ for a small constant $\delta$. Our goal was to achieve the same using a parameterized algorithm with running time $\tilde{O}(f(k) \cdot nd)$.

\section{Our Results}
Our first result gives an algorithm for \robust that queries $\tO(n)$ (here $\tO$ hides poly-logarithmic factors) edges in $G$ and achieves nearly the same guarantees as that obtained by an efficient algorithm querying all the edges in $G$ followed by a reduction to FVST.
\begin{theorem} \label{thm:rsort}
    Consider an instance of $\robust$ given by a tournament graph $G=(V,E)$, where $|V| = n$,  and a parameter 
   $\varepsilon > 0$. Suppose $G$ is $b$-imperfect w.r.t. an ordering $\pi$ on $V$. Then, there is an efficient algorithm that queries $O\left(\dfrac{n \log^3 n}{\varepsilon^2}\right)$ edges in $G$ and outputs a sequence $\tpi$ such that expectation of $\lcs(\pi, \tpi)$ is at least  $n-(3+\varepsilon)b.$ Further, the algorithm does not require  knowledge of the quantity $b$ and has running time $O\left(\dfrac{n \log^3 n}{\varepsilon^2}\right)$ (assuming each query takes constant time). 
\end{theorem}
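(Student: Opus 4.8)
The plan is to design a recursive quick-sort-style procedure that, at each level, picks a candidate pivot, verifies it is "good enough" by querying only a small number of triangles, and then recurses on the two sides of the partition. I would first reduce the error budget to a manageable form: by an initial random sampling step, query $O(n\log^2 n/\eps^2)$ random triples to estimate, for the whole vertex set, whether the number of directed triangles is small (say $O(b\cdot n\,\mathrm{polylog})$); if the estimate is too large we can afford to recurse after removing a detected bad vertex, and if it is small we are in the "clean" regime where balanced low-triangle pivots are safe. The key structural lemma — which I expect is proved in the body before this theorem — is the statement that an element $v$ which (i) partitions $V$ into near-equal halves according to its out/in-neighborhoods and (ii) is involved in few queried triangles can be used as a pivot while charging only $O(\text{(number of bad vertices on each side)})$ mismatches to $B$ at this level of recursion. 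Granting that lemma, the proof is an accounting argument over the recursion tree.

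The key steps, in order: (1) Show that a uniformly random pivot gives a roughly balanced split with constant probability (standard quick-sort argument using that at most a $2/3$ fraction can be "unbalanced"), so in $O(\log n)$ tries we find a balanced candidate; for each candidate we query its $n-1$ incident edges — total $O(n\log n)$ per level across the recursion is where one factor of $\log n$ comes from. (2) For a balanced candidate $v$, query a random sample of $O(\log^2 n/\eps^2)$ triangles through $v$; if a directed triangle is found, $v$ or one of the two other triangle-vertices is in $B$, so we can recurse having "made progress" on $B$ (this failed-pivot cost is bounded because each such event consumes a distinct element of $B$ up to constant factors — this is the delicate part). (3) If no triangle is found, invoke the structural lemma to conclude $v\notin B$ with the partition being truthful up to the bad vertices it straddles; recurse on the two sides, whose sizes shrink geometrically, giving $O(\log n)$ depth and hence the claimed $O(n\log^3 n/\eps^2)$ total query and time bound. (4) Sum the mismatches: each bad vertex is charged $O(1)$ times as a misplaced element at the level where it is "exposed," plus the $\eps b$ slack comes from the $\eps$ in the sampling accuracy, yielding $\lcs(\pi,\tpi)\ge n-(3+\eps)b$ in expectation. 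The constant $3$ (rather than $2$) arises because, unlike the full-query feedback-vertex-set reduction, the few-query regime can misplace both the bad vertices themselves and a bounded number of good vertices that get separated from their true neighbors by a bad pivot before it is detected.

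The main obstacle I expect is step (2)–(3): controlling the total cost of \emph{failed} pivot attempts. A bad vertex can "hide" by participating in very few directed triangles — so a small triangle-sample may miss the evidence that it is bad — yet it can still cause $\Theta(n)$ misalignments if used to partition. Conversely a genuinely good vertex may lie in many triangles (all routed through bad vertices) and so look suspicious. The resolution, as the introduction hints, is the two-phase strategy: the global sampling phase first guarantees the \emph{total} triangle count is small, and only then does "balanced $+$ few triangles" become a sufficient certificate of goodness. Making this charging argument airtight — ensuring that in \emph{every} recursive call genuine progress is made either on the size of the subproblem or on the remaining budget of bad vertices, with the polylog factors not blowing up — is the technical heart, and I would structure the proof around a potential function combining the sizes of the current subproblems and the number of as-yet-undetected bad vertices in each.
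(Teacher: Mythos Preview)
Your high-level architecture matches the paper: quick-sort with a pre-pivoting triangle-reduction phase, a balance test on sampled elements, and a triangle test through the candidate pivot. But the analysis you sketch has a genuine gap at exactly the point you flag as the obstacle.

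You propose that after the global sampling phase, ``balanced $+$ few triangles'' becomes a \emph{sufficient certificate of goodness}, i.e.\ you can conclude the pivot $v\notin B$ (your step (3)). The paper does \emph{not} establish this, and in fact it is false: a bad pivot $p$ can pass the $k$ triangle tests and still create a nonzero concatenation loss $M_p$. The paper's resolution is an expected-value argument, not a certification argument. The key lemma shows that the probability $q_p$ of $p$ passing the triangle test satisfies $q_p \le (1-M_p^2/4n^2)^k$, because any concatenation loss of $M_p$ forces at least $M_p^2/4$ triangles through $p$. Then the expected damage from a bad pivot that sneaks through is $q_p M_p \le 2n\cdot x(1-x^2)^k$ with $x=M_p/2n$, and the function $x(1-x^2)^k$ is maximized at $O(1/\sqrt{k})$. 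With $k=\Theta(\log^2 N/\eps^2)$ this gives expected damage $O(\eps n/\log N)$ per bad pivot; summing over the $\le b$ bad pivots and the $O(\log N)$ recursion depth yields the $\eps b$ slack. Your charging scheme (``each bad vertex is charged $O(1)$ times at the level where it is exposed'') does not capture this: a bad pivot may never be exposed, and the cost it causes is controlled only in expectation, not by a deterministic per-element charge.

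Relatedly, the paper does not use a potential function. It proves directly by induction on $n$ the hypothesis $L(n,b)\le 3b + cb\log n$ with $c=\eps/\log N$, where $L(n,b)$ is the worst-case expected loss over all size-$n$ inputs with $b$ bad elements. The $3b$ term comes from triangle removals (three elements discarded per bad element caught); the $cb\log n$ term absorbs both the $q_pM_p$ contribution from bad pivots and the small failure probabilities, and the balanced split shrinks $\log n$ by a constant at each level. The role of the global triangle-reduction phase is \emph{not} to make the pivot certifiably good, but to guarantee that enough \emph{good} elements in the middle third have small $T_p$, so that $q_p$ is bounded \emph{below} for them and the negative term $-cb\log(5/4)\sum_{p\in\mathrm{Mid}(S)}\gamma_p q_p$ in the recurrence is large enough to dominate the positive $q_pM_p$ term. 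That balance is the technical heart, and it does not fit into the potential-function framework you outline.
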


It is worth emphasizing that the parameter $b$ may not be constant. In fact, much of the technical difficulty lies in handling cases when $b$ may be sub-linear. 
Following is our main result for the Ulam $k$-median problem.
\begin{theorem}
There is a randomized algorithm for the Ulam $k$-median problem that runs in time $\tilde{O}((2k)^k \cdot nd)$ and returns a center set $C$ with $Obj(S, C) \leq (2-\delta) \cdot OPT$ with probability at least $0.99$, where $\delta$ is a small but fixed constant.
\end{theorem}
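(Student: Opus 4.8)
The plan is to reduce the Ulam $k$-median problem to $k$ (approximate) Ulam $1$-median subproblems via the standard sampling-and-peeling framework for FPT $k$-median (as in \cite{gj20,bjk18}), and to solve each $1$-median subproblem in $\tO(nd)$ time by combining a strengthening of the structural result of Chakraborty et al.~\cite{cdk23} with our robust-sorting algorithm (\Cref{thm:rsort}). Let $S_1,\dots,S_k$ be the clusters of an optimal solution $\{\sigma^*_1,\dots,\sigma^*_k\}$. The peeling framework maintains a set of ``uncovered'' points, repeatedly draws a uniform random sample of constant size from it, branches over a small family of candidate centers extracted from the sample, removes the points a guessed center serves up to a guessed radius, and recurses to depth $k$. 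Correctness rests on the fact that a constant-size uniform sample of the uncovered set hits the currently densest uncovered optimal cluster in enough points; charging the few ``sparse'' clusters against a $(1+\delta')$ slack and guessing radii over $O(\log(nd))$ (or, after the usual refinement, $\mathrm{poly}(k)$) scales, this produces $k^{O(k)}$ candidate center sets overall, each scorable in $O(nd)$ time. This is the source of the $k^k\cdot nd$ factor.

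It thus suffices to prove: given $P\subseteq\Pi^d$, a uniform random sample $R$ of constant size from $P$ contains, with probability $\Omega(1)$, a $5$-tuple $(\sigma_1,\dots,\sigma_5)$ from which we can compute in $\tO(d)$ time a permutation $\tilde\sigma$ with $Obj(P,\{\tilde\sigma\})\le(2-\delta)\cdot OPT_1(P)$, where $OPT_1(P)$ is the optimal $1$-median cost. Chakraborty et al.~\cite{cdk23} establish the existence of such a $5$-tuple inside $P$; the first new ingredient is to observe that it may be taken to be a uniform random sample. This holds because the properties their analysis demands of the $\sigma_i$'s are met by every point in a constant fraction of $P$ --- e.g.\ by Markov's inequality a constant fraction of $P$ lies within a constant multiple of the average radius of $\sigma^*$, which is the kind of condition the argument uses --- so a constant-size uniform sample contains five such points with probability $\Omega(1)$, and enumerating all $\binom{|R|}{5}=O(1)$ five-tuples costs only a constant factor.

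The second new ingredient converts a candidate $5$-tuple into a permutation via robust sorting. Given $(\sigma_1,\dots,\sigma_5)$, build a tournament $G$ on the symbol set $[d]$ by orienting each pair $\{a,b\}$ according to the majority relative order among $\sigma_1,\dots,\sigma_5$. The key claim is that $G$ is $b$-imperfect with respect to $\sigma^*$ for $b=O\!\big(\sum_{i=1}^{5}\Delta(\sigma_i,\sigma^*)\big)$: for each $i$, let $B_i\subseteq[d]$ be the symbols not on a fixed longest common subsequence of $\sigma_i$ and $\sigma^*$, so $|B_i|\le\Delta(\sigma_i,\sigma^*)$; every pair with both endpoints outside $\bigcup_i B_i$ appears in the same relative order in each $\sigma_i$ as in $\sigma^*$, hence is oriented by the majority consistently with $\sigma^*$, so $B=\bigcup_i B_i$ witnesses the imperfection. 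Applying \Cref{thm:rsort} to $G$ with parameter $\varepsilon$ yields, in $O(d\log^3 d/\varepsilon^2)$ time, an ordering $\tilde\sigma$ of $[d]$ with $\E[\Delta(\sigma^*,\tilde\sigma)]=d-\E[\lcs(\sigma^*,\tilde\sigma)]\le(3+\varepsilon)b$. Plugging this into the (suitably strengthened) analysis of \cite{cdk23} --- where the $5$-tuple is chosen precisely so that an ordering this close to $\sigma^*$ on the relevant coordinates certifies a center of cost at most $(2-\delta)\,OPT_1(P)$ --- gives the $1$-median guarantee; we then evaluate $Obj(P,\{\tilde\sigma\})$ in $O(nd)$ time and keep the best candidate.

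Combining everything, the peeling framework generates $k^{O(k)}$ candidate $k$-center sets (each level contributing $O(1)$ candidate $1$-medians through the procedure above), each scored in $O(nd)$ time, and the best of them is a $(2-\delta)$-approximation with probability $\Omega(1)$, boosted to $0.99$ by $O(1)$ independent repetitions; the total running time is $\tO(k^k\cdot nd)$. The main obstacle is the middle step: one must verify that the structural guarantee of \cite{cdk23} is robust enough to (i) accept a uniformly random constant-size sample in place of their carefully chosen $5$-tuple and (ii) tolerate the $(3+\varepsilon)b$ loss of robust sorting rather than exact recovery --- that is, that $b=O(\sum_i\Delta(\sigma_i,\sigma^*))$ is genuinely small on the scale of the average radius, so that the $3b$-type error still leaves the ratio strictly below $2$. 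Re-deriving their case analysis (whether $\sigma^*$ is close to or far from the sampled points) under these two relaxations is the delicate part, and it is where essentially all of the technical work for this theorem lies.
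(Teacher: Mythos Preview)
Your proposal has a genuine gap in the robust-sorting step that would prevent the argument from breaking $2$. You take $B=\bigcup_{i=1}^5 B_i$ (symbols missing from some $\LCS(\sigma_i,\sigma^*)$), which indeed makes the majority tournament $B$-imperfect, but then $|B|$ is on the order of $\sum_i \Delta(\sigma_i,\sigma^*)\approx 5\cdot OPT/n$. After \Cref{thm:rsort} you get $\Delta(\tilde\sigma,\sigma^*)\le (3+\varepsilon)|B|\approx 15\cdot OPT/n$, and the triangle inequality only yields $Obj(P,\{\tilde\sigma\})\le OPT + n\cdot \Delta(\tilde\sigma,\sigma^*)\approx 16\cdot OPT$, not $(2-\delta)\,OPT$. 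The hope you express at the end --- that this $b$ is ``genuinely small on the scale of the average radius'' --- is simply false for this choice of $B$: it is exactly that scale.

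The paper's key observation (which is the whole point of taking a \emph{majority} over five permutations) is that a symbol lying in \emph{at most one} $B_i$ is still good: for any two such symbols, at least three of the five $\sigma_i$ agree with $\sigma^*$ on their relative order. Hence the correct bad set is $\Bad=\bigcup_{i\neq j}(B_i\cap B_j)$, the symbols misaligned in at least two of the five. Now property~(P1) --- small pairwise intersections $|I_{\sigma_i}\cap I_{\sigma_j}|\le \varepsilon|I_{\sigma_i}|$ --- is exactly what makes $|\Bad|\le 10\varepsilon|I_{\sigma_4}|=O(\varepsilon\cdot OPT/n)$, so robust sorting gives $\Delta(\tilde\sigma,\sigma^*)=O(\varepsilon\cdot OPT/n)$ and the final ratio is $1+O(\varepsilon)<2$. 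Relatedly, your sampling argument (``properties are met by a constant fraction of $P$'') is too coarse: one does not need five arbitrary nearby points but five points with small \emph{pairwise} $|I_{\sigma_i}\cap I_{\sigma_j}|$, and the paper gets this through a case analysis (either a single random point is already a good center, or most points lie outside each other's $\varepsilon$-balls so sequential random choices avoid prior balls with constant probability).

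A secondary difference: for the extension to $k$ centers the paper does not use a peeling-with-uniform-sampling-and-radius-guessing framework, but rather $D$-sampling (sampling proportional to distance from current centers) in the style of~\cite{jks}. This avoids the $O(\log(nd))$ radius guesses and directly boosts the probability of hitting small clusters; the $2^k$ repetitions are for probability amplification, and the $k^{O(k)}$ factor arises from enumerating $\eta$-subsets of an $O(\eta k)$-size $D$-sample at each of $k$ levels.
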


\subsection{Our Techniques}
We now give an overview of our techniques for the robust sorting and the Ulam-$k$-median problems. 

\subsubsection{Robust Sorting:}
Consider an instance of \RS\  given by a graph $G$ on a vertex set $V$ of size $n$. Assume $G$ is $B$-imperfect for some $B \subseteq V$, $|B|=b$. 
We shall call the elements of $B$     ``bad'' elements and the rest ``good'' elements. Observe that every directed triangle in $G$ must contain at least one bad element. One potential idea is to keep finding and removing directed triangles in $G$, until $G$ is acyclic (recall that a tournament is acyclis if and only if it has no triangles). Suppose we remove $t$ triangles. As each removed triangle has at least one bad element, $t \leq b$ and the number of remaining bad elements is at most $b - t$. Hence, the number of remaining good elements is at most $n - b - 2t - (b - t) = n - 2b - t \geq n - 3b$. Thus, we produce an ordering $\tpi$ with $\lcs(\pi, \tpi) \geq n - 3 b$. However, this approach uses a quadratic number of queries on $G$. On the other hand, there are many sorting algorithms that use $O(n \log n)$ queries in the classical setting, i.e., the 0-imperfect setting. Direct generalizations of such sorting algorithms fail to get the required guarantees on the $\lcs$ between the output $\tpi$ and $\pi$. 

For example, consider Merge Sort. 
Here, bad elements can result in the merge procedure to output permutations with a large distance with respect to $\pi$. Indeed, suppose we partition the input set $V$ into equal sized $V_1, V_2$ and recursively get good orderings $\tpi_1$ and $\tpi_2$ on them, respectively. Further, assume that all the good elements in $V_1$ appear before those in $V_2$. However, it is possible that the first element in $\tpi_1$ happens to be a bad element $x$, and $x$ turns out to be larger than all the elements in $V_2$. In such a setting, the merge procedure would place all the elements in $V_2$ before $x$, which is clearly an undesirable outcome.

We now show that using randomized quick sort directly would also lead to an undesirable outcome. In randomized quick sort, one chooses a pivot at random and recursively solves the problem on the elements smaller than the pivot and those larger than the pivot, placing the pivot between the two recursive outputs. For the feedback arc set in tournaments (FAST) problem, where the goal is to minimize the number of inverted pairs, this algorithm was shown to return a $3$-approximation~\cite{acn08} in expectation. However, this simple algorithm does not work for our problem. Indeed, let $b \ll n$ and consider a random input where a subset $B$ of size $b$ is chosen at random as the set of bad elements, and a random permutation is selected among the set of good elements. Edges where both endpoints are good respect the permutation, and each edge incident on a bad element is oriented randomly. Now, the random quick-sort algorithm chooses a bad pivot $x$ with probability $b/n$ -- assume this event happens. Let $V_1, V_2$ be the elements less than and greater than $x$, respectively (with respect to the graph $G$). Since $x$ is a bad pivot, our assumption implies that $V_1, V_2$ form a roughly random partition of $V$ into equal-sized subsets. Thus, if $X \coloneqq V \setminus B$ denote the set of good elements and $X_1, X_2$ be the left and the right half of $X$ respectively, then roughly $|X_1|/2$ elements of $X_1$ will end up in $V_2$ and similarly for $X_2$. Note that $|X_1| = |X_2| = \frac{n-b}{2}$. Hence, roughly $(n-b)/4$ elements will be {\em wrongly} placed by the pivot $x$. Let $f(n,b)$ denote the distance between the output produced by this algorithm on an instance of size $n$ containing $b$ bad elements and the ordering $\pi$. Then, we have shown that $f(n,b)$ is at least $\Omega(n-b)$ with a high probability if we choose a bad pivot. Thus, we get the approximate recurrence (note that both $V_1, V_2$ will have roughly $b/2$ bad elements): 
$$f(n,b) \approx \frac{b}{n} \cdot \Omega(n-b) + 2 f(n/2,b/2)$$
It is easy to verify that this results in $f(n,b) = \Omega(b \log n)$, whereas we desire an output for which this quantity is $O(b)$.

The above analysis indicates that we cannot afford to have ``arbitrarily'' bad elements as pivots. The following idea takes care of examples as above: when we choose a pivot $p$, check (logarithmic number of times)  if $p$ forms a triangle with randomly sampled pair $(x,y)$. If a triangle is found, we can remove $p, x, y$ (and hence, at least one bad element gets removed) and try a new pivot; otherwise, it is guaranteed that $p$ would be involved in very few triangles (note that in the example above, a triangle would be found with constant probability if the pivot is a bad one).  However, this simple idea also does not work. The reason is as follows: (i) randomly chosen good elements, which would ideally act as good pivots, are involved in a lot of triangles and, hence, would not be selected as pivots, and (ii) there could be bad elements, which are not involved in too many triangles, and hence, would sneak in as a pivot. It may seem that the latter scenario is not undesirable  -- if a bad element participates in a few triangles, then it perhaps acts like a good element and can be used to partition the input set. Unfortunately the quantity $f(n,b)$ as defined above turns out to be large for this algorithm.  This happens for the following subtle reason: say there are $b$ bad elements, and suppose each of the good elements participates in many triangles. Then, with probability $(1-b/n)$ (which can be considered to be close to 1), the algorithm picks a good element as a random pivot and finds a triangle containing it. This would reduce the problem size by only three elements, i.e., the recursive problem has almost the same size. On the other hand, with probability $b/n$, which may be small, the algorithm partitions using a bad element as a pivot. As outlined above, when we pick a bad element as a pivot, the resulting partition may have many good elements on the wrong side of the partition. Thus, in the overall calculation, the large misalignment created due to these low probability events overwhelms the expected value of $f(n,b)$. In other words, one obtains a recurrence of the form: 
$$f(n,b) \approx \frac{b}{n} \cdot m_b + f(n-3,b-1)$$
where $f(n-3,b-1)$ refers to the sub-problem obtained when a triangle is found, and $m_b$ refers to the misalignment caused by a typical bad element. Since a bad element participates in few triangles, it is possible that $m_b \ll \Omega(n-b)$ (comparing with the previous recurrence above), but still, this can be high enough to lead to a recurrence where $f(n,b)$ is not $O(b)$. The issue arises because the problem size does not shrink sufficiently to balance the expected misalignment. One way to handle this is to consider separately the case where there are too many triangles in the tournament. So, before picking a potential pivot and testing its goodness, in a {\em pre-pivoting step}, we check randomly chosen triples for triangles and remove them in case they are found. This ensures that at the time of pivot selection, there is a reasonable chance the problem size shrinks without too much increase in the expected misalignment. Our algorithm and analysis basically work by balancing these quantities in a carefully devised inductive argument. One of the key technical insights is the following: given two orderings $\sigma_1$ and $\sigma_2$ of a partition $V_1$ and $V_2$ of $V$ respectively, we define the notion of {\em concatenation loss} -- this captures the extra misalignment (with respect to the implicit ordering) created by concatenating $\sigma_1$ and $\sigma_2$. Our key technical result shows that if such a partitioning is created by a pivot involved in a few triangles, then the corresponding concatenation loss of the orderings on $V_1$ and $V_2$ output by the algorithm is small.

\subsubsection{Ulam-$k$-Median}
There is a trivial and well-known 2-approximation algorithm for the Ulam 1-median problem -- {\it output the best permutation from the input.}  To break the barrier of 2, one must consider a stronger version of the triangle inequality that holds specifically for the Ulam metric. Let $\sigma_1, ..., \sigma_n$ be the permutations in the input and let $\sigma^*$ denote the optimal 1-median. Let $I_{\sigma_i}$ denote the subset of symbols that are not in $\LCS(\sigma_i, \sigma^*)$, i.e., the misaligned symbols in $\sigma_i$ and $\sigma^*$. So, $\Delta(\sigma_i, \sigma^*) = |I_{\sigma_i}|$. The following (stronger version) of the triangle inequality holds for the Ulam metric: $\Delta(\sigma_i, \sigma_j) \leq |I_{\sigma_i}| + |I_{\sigma_j}| - |I_{\sigma_i} \cap I_{\sigma_j}| = \Delta(\sigma_i, \sigma^*) + \Delta(\sigma_j, \sigma^*) - |I_{\sigma_i} \cap I_{\sigma_j}|$. Chakraborty et al.~\cite{cdk23} exploit this inequality to break the 2-approximation barrier. 
Even though the technical details are intricate, at a very high level, the key idea in \cite{cdk23} is to show that either one of the input permutations is a good center or there are five permutations $\sigma_1, ..., \sigma_5$ such that $\forall i, j \in \{1,2,3,4,5\}$ with $i \neq j$, $|I_{\sigma_i} \cap I_{\sigma_j}|$ is small, i.e., the number of {\em common} misaligned symbols is small.

We strengthen this result as follows -- we show that either there are a significant number of permutations that will be good centers, or there are a significant number of permutations with a small number of pair-wise common misaligned symbols.
This allows us to argue that an $\eta$-sized (for some {\bf constant} $\eta$) random subset of permutations is sufficient to find a good center, so we do not need to consider $\binom{n}{5}$ possibilities for finding a good center. Chakraborty et al.~\cite{cdk23} gave a similar sampling lemma. However, they required a random sample of size $O(\log n)$. Using this result would lead to an additional $(\log{n})^{O(k)}$ factor in the running time for the $k$-median problem. 
One of our key contributions is showing that a constant-sized sample suffices. 
The number of common misaligned symbols in the five permutations being small implies that for most pairs $(a, b)$ of symbols, their relative order in $\sigma^*$ matches that in at least 3 out of 5 permutations $\sigma_1, ..., \sigma_5$. This is used to find a permutation with a good agreement with $\sigma^*$ and hence is a good center. \cite{cdk23} uses an $O(d^3)$ procedure to find such a center, whereas we improve this to $\tilde{O}(d)$ by using our robust sorting algorithm. For the Ulam $k$-median problem, we use our sampling-based algorithm, {\tt ULAM1}, within the $D$-sampling framework of \cite{jks} to obtain an $\tilde{O}(f(k) \cdot nd)$-time algorithm. Here is the summary of the key ideas: Let $S_1, ..., S_k$ be the dataset partition that denotes the optimal $k$ clustering, and let $\sigma^*_1, ..., \sigma^*_k$ denote the optimal centers, respectively. Let us try to use {\tt ULAM1} to find good centers for each of $S_1, ..., S_k$. We would need $\eta$ uniformly sampled points each from $S_1, ..., S_k$. 
The issue is that the optimal clustering $S_1, ..., S_k$ is not known.
If the clusters were balanced, i.e., $|S_1| \approx |S_2| \approx ... \approx |S_k|$, then uniformly sampling $O(\eta k)$ points from $S$ and then considering all possible partitions of these points would give the required uniform samples $X_1, ..., X_k$ from each of the optimal clusters. We can then use {\tt ULAM1($X_i$)} to find good center candidates for $S_i$. 
In the general case, where the clusters may not be balanced, we use the $D$-sampling technique\footnote{$D$-sampling is sampling proportional to the distance of an element from the closest previously chosen center.} to boost the probability of sampling from small-sized optimal clusters, which may get ignored when sampling uniformly at random from $S$. 

\subsection{Related Work}
We have already seen the connection between robust sorting and the Feedback Vertex Set in Tournaments (FVST) problem~\cite{fvst,mnich16}. Another problem related to the FVST problem is the Feedback Arc Set in Tournaments (FAST) problem~\cite{acn08,ms07}, where the goal is to find an ordering of the nodes of a given tournament such that the number of edges going backward (an edge is said to go backward if it is directed from a node that comes later to a node that comes earlier as per the ordering) is minimized. This is the restricted variant of the maximum acyclic subgraph problem \cite{fk99}, where the goal is to find the maximum subset of edges that induces a directed acyclic graph (DAG) in a directed graph. The FAST problem may be seen as robust sorting under adversarial corruption of edges rather than adversarial corruption of nodes, as in our formulation. The FVST and FAST problems are known to the $\mathsf{NP}$-hard. A 2-approximation, which is tight under UGC, is known~\cite{fvst} for the FVST problem, and a PTAS is known~\cite{ms07} for the FAST problem. 

Several works have been done on sorting in the presence of a noisy comparison operator, also called noisy sorting. Feige et al.~\cite{FRPU94} consider a noise model in which the comparison operator gives the correct answer independently with probability at least $(1/2 + \gamma)$ each time a query is made on an element pair. This can be regarded as {\em noisy sorting with resampling} since we can get the correct answer for a pair by repeatedly querying the operator on the same pair. So, each time a comparison needs to be made for a pair, by repeatedly querying $O(\log{n})$ times, one can obtain a $O(n \log^2{n})$ algorithm. 
A better algorithm with $O(n \log{n})$ queries can be obtained~\cite{FRPU94,kk07}.
In more recent works~\cite{wgw22,gx23}, a deeper investigation was made into the constant, which is dependent on the bias $\gamma$, hidden in the $O(n \log{n})$ sorting algorithm of \cite{FRPU94} for noisy sorting {\bf with} resampling.
A more interesting noise model was considered by Braverman and Mossel~\cite{bm08}, where the ordering algorithm cannot repeat a comparison query.\footnote{This can also be modeled by the constraint that the errors are independent but {\em persistent}, i.e., if a comparison is repeated, then you get the same answer.}
This is called the {\em noisy sorting \textbf{without} resampling (NSWR)} problem. The NSWR problem can also be seen as a stochastic version of the Feedback Arc Set on Tournament (FAST) problem -- the tournament is generated using the noisy comparator (with respect to some total order $\pi$), and the goal is to find an ordering of the vertices such that the number of edges going backward is minimized. \cite{bm08} gave an algorithm that runs in time $n^{O((\beta+1)\gamma^{-4})}$ and outputs an optimal ordering with probability at least $(1-n^{-\beta})$. Another objective function in this setting is to minimize the maximum dislocation and total dislocation of elements, where the dislocation of an element is the difference between its ranks in $\pi$ and the output ordering. 
Optimal bounds for this have been achieved in a recent sequence of works~\cite{gllp20,gllp19}.

The key references~\cite{cdk21,cdk23} for the Ulam median problem have already been discussed. The detailed discussions on the Ulam median problem are given in ~\Cref{sec:ulamdetails}

\section{Algorithm for \robust}\label{sec:robustsort}
In this section, we present our algorithm for \robust (\Cref{algo:robust}). The algorithm discards a subset $V'$ of $V$ and returns an ordering $\pi'$ on the remaining elements $V \setminus V'$. We can obtain an ordering $\tpi$ on $V$ by appending an arbitrary ordering on $V'$ to $\pi'$. Since the size of $V'$ shall turn out to be small, $\lcs(\pi, \pi')$ will be close to $\lcs(\pi, \tpi)$. 

\begin{algorithm}[htbp]
    \caption{\RS($S$)}
    \label{algo:robust}
    {\bf Input:} A subset $S$ of the original set $T$ of elements\\
    {\bf Output:} An ordering $\pi'$ on a  subset of $S$\\
    {\bf Global: } $N = |T|, \epsilon$ \\
    
    $k \gets \left(\frac{10000 \log N}{\epsilon} \right)^2$ \tcc*{A parameter deciding the amount of testing to be done} \label{l:choosek}
    
    \While{$|S| > 0$ \label{l:while}}{ \label{l:repeat}
    
    \For{$i = 1, 2, \ldots 72  k \cdot \log N$}{ \label{l:triangle_test_begin}
        Sample three elements $x, y, z$ independently and uniformly at random from $S$\\
        \If{$x, y, z$ form a triangle}{
            $S \gets S \setminus \{x, y, z\}$ \label{l:discardtriangle}\\
            Go to line \ref{l:repeat} \label{l:triangle_test_end}
        }
    }
    
    \For{$r = 1, 2,  \ldots 36 \cdot \log N$}{ \label{l:no_triangle_found}
        Sample an element $p \in S$ uniformly at random \label{l:samplepivot}
        \tcc*{Choose a random pivot}
        $L_1, R_1 \gets \phi, \phi$ \label{l:L_1_defined}\\
        $k' \gets 10^5 \cdot \log N$ \\
        \For{$j = 1, 2, \ldots k'$ \label{l:forloop}}{ \label{l:partition_test_begin}
            Sample an element $s \in S \setminus \{p\}$ uniformly at random \\
            \If{$s < p$}{
                $L_1 \gets L_1 \cup \{s\}$
            } \Else{
                $R_1 \gets R_1 \cup \{s\}$ \label{l:partitionend}
            }
        }
        \If{$\min(|L_1|, |R_1|) \leq \frac{k'}{5} + \frac{k'}{40}$ \label{l:test}}{
                Go to the next iteration of the line \ref{l:no_triangle_found} \label{l:partition_test_end} \tcc*{Imbalanced partition}
            }
        \For{$j=1, 2, \ldots k$ \label{l:ktrials}}{ \label{l:balanced_pivot}
            Sample two elements $x, y$ independently and uniformly at random from $S \setminus \{p\}$\\
            \If{$x, y, p$ form a triangle}{
                $S \gets S \setminus \{x, y, p\}$ \label{l:checktriangle}\\
                Go to line \ref{l:repeat}
            }
        } \label{l:triangle_check_end}
            
        $L, R \gets \phi, \phi$ \label{l:endpart1} \\
        \For{$s \in S$}{
            \If{$s < p$}{
                $L \gets L \cup \{s\}$
            } \Else{
                $R \gets R \cup \{s\}$ \label{l:endpart2}
            }
        }
        {\bf return} $\pi'$= (\RS($L$), $p$, \RS($R$)) \tcc*{Recursively sort $L$ and $R$} \label{l:recursive_calls}
        }
        {\bf return} $\{\}$ \label{l:no_balanced_pivot} \tcc*{Too many bad elements, return an empty sequence}
    }
    {\bf return} $\{\}$ \tcc*{No more elements left, return an empty sequence}
    \end{algorithm}
    
\Cref{algo:robust}  is based on a divide-and-conquer strategy similar to quick-sort. Consider a recursive sub-problem given by a subset $S$ of $V$. We choose a parameter $k = O\left(\frac{\log^2 N}{\epsilon^2}\right)$ (line~\ref{l:choosek}), where $N = |V|$, and then proceeds in four steps:
\begin{enumerate}

\item {\bf Testing random triplets for triangles}: In this step, we randomly sample $O( k\log N)$ triplets of elements  from $S$ uniformly at random (line~\ref{l:triangle_test_begin}). For each such triplet $(x,y,z)$, we check if it forms a triangle, i.e., if $G$ contains the arcs $(x,y),(y,z)$ and $(z,x)$. If so, we discard (elements in) the triangle (line~\ref{l:discardtriangle}) and go back to the beginning of the procedure. Note that checking whether a triplet is a triangle requires three queries to $G$.  
    
\item {\bf Finding a balanced pivot}: In this step, it tries to find a good pivot -- this pivot finding step is repeated $O(\log N)$ times (line~\ref{l:no_triangle_found}) to ensure that one of these succeeds with high probability). We first select a randomly chosen element $p$ of $S$ as the pivot (line~\ref{l:samplepivot}). Then we check whether it is a {\em balanced} pivot as follows: we sample (with replacement) $k' \coloneqq O(\log N)$ elements from $S$ and partition these $k'$ elements with respect to $p$ (lines~\ref{l:partition_test_begin}--\ref{l:partitionend}). Let $L_1$ and $R_1$ denote the partitioning of these $k'$ elements with respect to $p$. In line~\ref{l:test}, we check if both these sets are of size at least $k'/5 + k'/40$. If not, we repeat the process of finding a pivot (line~\ref{l:partition_test_end}). Otherwise, we continue to the next step.  Note that if this pivot finding iteration fails for $36 \log N$ trials, then we discard all elements of $S$ (line~\ref{l:no_balanced_pivot}).

\item {\bf Testing for triangles involving the pivot}: In this step, we test if the balanced pivot $p$ chosen in the previous step forms a triangle with randomly chosen pairs of elements from $S$. More formally, we repeat the following process $k$ times (line~\ref{l:ktrials}): sample two elements $x$ and $y$ (with replacement) from $S \setminus \{p\}$. If $(x,y,p)$ forms a triangle, we discard these three points from $S$ (line~\ref{l:checktriangle}) and go back to the beginning of the procedure (line~\ref{l:repeat}). 

\item {\bf Recursively solving the subproblems}: Assuming the pivot $p$ found in the second step above does not yield a triangle in the third step above, we partition the entire set $S$ with respect to $p$ into two sets $L$ and $R$ respectively (lines~\ref{l:endpart1}--\ref{l:endpart2}). We recursively call the algorithm on $L$ and $R$ and output the concatenation of the orderings returned by the recursive calls (line~\ref{l:recursive_calls}). 
\end{enumerate}

\section{Analysis}
In this section, we analyse~\Cref{algo:robust}. Let the input instance be given by a tournament $G=(V, E)$, and assume that $G$ is $B$-imperfect with respect to a total order $\pi$ on $V$, for some subset $B$ of $V$. Let $|V| = N$. Since $G$ is $B$-imperfect, we know that $G$ induced on $V \setminus B$ is a DAG. We shall refer to the elements in $V \setminus B$ as {\em good} elements. Let $\pi_g$ be the restriction of $\pi$ on the good elements. 
We begin with some key definitions: 

\begin{definition}[Balanced partition]
\label{def:balanced}
    Let $p$ be an element of a subset $S \subseteq V$. 
    A partition $L \cup R$ of $S \setminus \{p\}$ with respect to the pivot $p$ is said to be {\it balanced} if $\min(|L|, |R|) \geq \frac{|S|}{5}$.
\end{definition}

\begin{definition}[Support and loss of a sequence] Let $\sigma$ be a sequence on a subset of elements in $V$. 
    The {\em support} of the sequence $\sigma$, denoted $\supp(\sigma)$, is  defined as $\LCS(\sigma, \pi_g)$, i.e., the longest subsequence of good elements in $\sigma$ which appear in the same order as in $\pi$. If there are multiple choices for $\supp(\sigma)$,  we choose the one that is lexicographically smallest with respect to the indices in $\sigma$. Let $\loss(\sigma)$, the loss of $\sigma$,  be defined as the number of elements in $\sigma$ that are not in $\supp(\sigma)$.
\end{definition}

\begin{definition}[Concatenation Loss]
Consider two sequences $\sigma_1$ and $\sigma_2$. Let $\sigma$ be the sequence formed by the concatenation of $\sigma_1$ and $\sigma_2$. The {\em concatenation loss} of sequences $\sigma_1$ and $\sigma_2$, denoted  $\conloss(\sigma_1, \sigma_2)$, is defined as  $ |\supp(\sigma_1)| + |\supp(\sigma_2)| - |\supp(\sigma)|= \loss(\sigma) - \loss(\sigma_1) - \loss(\sigma_2)$. 
\end{definition}

Fix a subset $S \subseteq V$, and consider the recursive call $\RS(S)$ corresponding to $S$. Observe that the set $S$ changes during the run of this recursive call -- we shall use the index $t$ to denote an iteration of the {\bf while} loop in line~\ref{l:while} and use $S_t$ to denote the set $S$ during this iteration. Note that each iteration of the while loop either ends with the removal of a triangle or with recursive calls to smaller subproblems ($L$ and $R$).
We define several failure events with respect to an iteration of the while loop and show that these events happen with low probability: 
\begin{itemize}
\item {\bf Triangle Detection Test Failure}: This event, denoted ${\cal F}_1$, happens when $G[S_t]$ has at least $\frac{|S_t|^3}{24 k}$ triangles, but the {\bf for} loop in lines~\ref{l:triangle_test_begin}--\ref{l:triangle_test_end} does not find any triangle. 

\item {\bf Balanced Partition Failure}: This failure event, denoted ${\cal F}_2$, occurs when the procedure executes lines~\ref{l:endpart1}--\ref{l:endpart2} and then makes recursive calls to $L$ and $R$, but the partition $L \cup R$ is not a balanced partition of $S_t \setminus \{p\}$.  

\item {\bf Density Test Failure}: This event, denoted ${\cal F}_3$, happens when $|S_t|$ has at most $|S_t|/3$ bad elements, but we execute line~\ref{l:partition_test_end} in each of the $36 \log N$ iterations of the {\bf for} loop (line~\ref{l:no_triangle_found}). In other words, we are not able to find a good partition of the sampled $k'$ elements in any of the $36 \log N$ iterations of this {\bf for} loop. 
\end{itemize}
We now show that with high probability, none of the failure events happen. 
\begin{lemma}
    \label{lem:nofailure}
    Let $N$ be large enough (greater than some constant). Then
    $\Pr[{\cal F}_i] \leq \frac{1}{N^3} \forall i \in \{1, 2, 3\}$.
\end{lemma}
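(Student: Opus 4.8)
The plan is to bound each of the three failure events separately, using a standard Chernoff/union-bound argument tuned to the parameter choices $k = (10000\log N/\epsilon)^2$ and $k' = 10^5\log N$, and the fact that the three \texttt{for} loops run $72k\log N$, $36\log N$, and $k$ times respectively.

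\emph{Event ${\cal F}_1$.} Suppose $G[S_t]$ has at least $|S_t|^3/(24k)$ triangles. A uniformly random ordered triple $(x,y,z)$ from $S_t$ (sampling with replacement, so $|S_t|^3$ equally likely outcomes) forms a directed triangle with probability at least roughly (number of directed triangles)$\cdot 6/|S_t|^3 \ge 1/(4k)$ (each unordered triangle corresponds to a constant number of cyclic orderings; I will be slightly careful about the coincidence cases $x=y$ etc., which only lose lower-order terms). Over $72k\log N$ independent trials the probability that \emph{no} triple is a triangle is at most $(1 - \tfrac{1}{4k})^{72k\log N} \le e^{-18\log N} = N^{-18} \le N^{-3}$.

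\emph{Event ${\cal F}_3$.} Suppose $S_t$ has at most $|S_t|/3$ bad elements. In a single iteration of the \texttt{for} loop in line~\ref{l:no_triangle_found}, condition on the event that the sampled pivot $p$ is \emph{good}; this has probability at least $2/3$. For a good pivot, every $s \in S_t \setminus \{p\}$ with both $s,p$ good lands on the side dictated by $\pi$, and at least $2|S_t|/3 - 1$ elements are good; since the fraction of good elements on each side of $p$ within the good set need not be balanced, I need the extra slack built into line~\ref{l:test}. Concretely: the bad elements number at most $|S_t|/3$, so at least one of the two $\pi$-sides of $p$ restricted to good elements — call the smaller good side $G_{\min}$ — I claim contributes enough mass. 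Actually the cleaner route: the total number of elements $s$ that are forced (good $s$, good $p$) is $\ge 2|S_t|/3$; if \emph{both} true sides had fewer than $(1/5+1/40)|S_t|$ good elements we would need... hmm, this does not immediately work since one true side could be tiny. The correct statement to use is that for a good pivot the sampled partition $(L_1,R_1)$ concentrates around the \emph{true} partition sizes, and a good pivot with a balanced-enough true split exists — but in fact we do not need the true split to be balanced: if the true split is unbalanced, the smaller side of the sampled $k'$ points will be small and line~\ref{l:test} rejects, and that is fine because ${\cal F}_3$ only asks that \emph{some} iteration succeeds, and we have not yet guaranteed a balanced good pivot exists. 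Let me restate the intended argument: ${\cal F}_3$ is about failing to find \emph{any} partition passing the test; but a good pivot whose true partition has both sides $\ge |S_t|/4$ must exist when bad elements are $\le |S_t|/3$ — take the median good element; its true split of the good set is $(\ge (|S_t|\!-\!|B_t|)/2 \ge |S_t|/3$ each side$)$, so both true sides have $\ge |S_t|/3 > (1/5+1/40)|S_t|$ good elements. Such a pivot is sampled with probability $\ge \tfrac{1}{|S_t|}$ in a given iteration — too small. So the actual argument must be: \emph{every} good pivot whose true good-split is not too lopsided passes, and a constant fraction of elements are such pivots. The good elements in the "middle" (say ranks between $\tfrac{3}{20}|S_t|$ and $\tfrac{17}{20}|S_t|$ among good elements) number $\ge \tfrac{7}{10}|S_t| - |B_t| \ge \tfrac{7}{10}|S_t| - \tfrac13|S_t| = \Omega(|S_t|)$ hmm that is not positive-fraction-of-all... it is $\tfrac{11}{30}|S_t|$, a constant fraction. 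Each such pivot has true good-split with both sides $\ge \tfrac{3}{20}|S_t| > (1/5+1/40)|S_t|$? No: $3/20 = 0.15 < 0.225$. I will re-choose the "middle" window to $[\tfrac14,\tfrac34]$-quantiles of the \emph{good} set, giving both true good-sides $\ge \tfrac14(|S_t|-|B_t|) \ge \tfrac14\cdot\tfrac23|S_t| = \tfrac16|S_t| < 0.225|S_t|$ — still short. The resolution is that the slack $k'/5 + k'/40$ is a fraction of $k'$, i.e. $0.225$, but that is a count among the \emph{sampled $k'$} points; the expected count of good points on the larger-or-equal good side is $\ge \tfrac12(1-|B_t|/|S_t|)k' \ge \tfrac13 k' = 0.333k' > 0.225k'$, and on the smaller good side we need $\ge 0.225 k'$: for a pivot at the good-median both sides have true good-fraction $\ge \tfrac12\cdot\tfrac23 = \tfrac13 > 0.225$, so in expectation both $|L_1|,|R_1| \ge \tfrac13 k'$, and Chernoff gives both $> 0.225 k'$ with probability $\ge 1 - 2e^{-\Omega(k')} = 1 - N^{-\Omega(1)}$. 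So: condition on picking \emph{any} pivot in the good-middle-band (constant probability $\ge \tfrac{11}{30}$, say, over good elements which themselves are $\ge \tfrac23$ of $S_t$), then with probability $1-N^{-\Omega(1)}$ the test passes. One iteration thus succeeds with probability $\ge c$ for an absolute constant $c$; over $36\log N$ iterations, failure probability $\le (1-c)^{36\log N} \le N^{-3}$ for suitable constants (I will set the band and constants so $c \ge 1/6$, giving $(5/6)^{36\log N} \le N^{-6}$).

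\emph{Event ${\cal F}_2$.} Here recursive calls are made, so the test at line~\ref{l:test} passed: $\min(|L_1|,|R_1|) > (1/5+1/40)k'$. The true partition $(L,R)$ of $S_t\setminus\{p\}$ fails to be balanced, i.e. $\min(|L|,|R|) < |S_t|/5$; WLOG $|L| < |S_t|/5$. But $|L_1|$ is a sum of $k'$ i.i.d. Bernoulli($|L|/(|S_t|-1)$) variables, wait — sampling is from $S_t\setminus\{p\}$, so each sampled $s$ lands in $L_1$ with probability exactly $|L|/(|S_t|-1) < \tfrac{|S_t|/5}{|S_t|-1} \le \tfrac{1}{5}\cdot\tfrac{|S_t|}{|S_t|-1}$. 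For $|S_t|$ large this is $< 1/5 + o(1) < 1/5 + 1/80$. So $\E|L_1| < (1/5 + 1/80)k'$, and the test required $|L_1| > (1/5+1/40)k'$, a deviation of $\ge (1/80)k' = \Omega(k') = \Omega(\log N)$ above the mean; Chernoff gives probability $\le e^{-\Omega(k')} \le N^{-3}$ (here the constant hidden in $k' = 10^5\log N$ is generous enough). A small subtlety: if $|S_t|$ is below an absolute constant the inequality $\tfrac{|S_t|}{|S_t|-1}$ is not close to $1$; but then $\min(|L|,|R|) \ge |S_t|/5$ is automatic or the recursion bottoms out trivially, so this case is vacuous — I will dispatch it with the "$N$ large enough" hypothesis and a remark that tiny $S_t$ are handled directly.

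\emph{Finishing.} Combining, each ${\cal F}_i$ has probability $\le N^{-3}$ (after choosing the absolute constants in the three loop lengths and in $k,k'$ to absorb the $\Omega(\cdot)$'s — the paper's constants $72$, $36$, $10^5$, $10000$ are chosen precisely so these go through), which is the claim. The main obstacle is ${\cal F}_3$: one has to correctly identify a \emph{constant fraction} of pivots whose true split — even though it may be far from balanced in the worst case — is balanced enough to clear the $1/5+1/40$ threshold in expectation, and verify the threshold slack ($1/40$ vs. $1/80$ deviations) is consistent with both the ${\cal F}_2$ and ${\cal F}_3$ Chernoff bounds simultaneously; the bookkeeping of which fraction is "among good elements" versus "among all of $S_t$" is where errors creep in, and I would do that part carefully rather than by the hand-waving above.
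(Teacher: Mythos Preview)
Your overall approach matches the paper's: three separate Chernoff-type arguments, with the loop lengths and the constants in $k,k'$ chosen so everything clears $N^{-3}$. For $\mathcal{F}_1$ your outline is essentially what the paper does. For $\mathcal{F}_2$ you are also on the paper's track, but you omit one step: the pivot-selection loop runs up to $36\log N$ times, so you need a union bound over those attempts. The paper's per-attempt Chernoff bound is $e^{-(1/40)^2(k'/5)/3}\le N^{-4}$, so the union bound still lands at $N^{-3}$; your $e^{-\Omega(k')}$ is strong enough for this, you just need to say it.

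The real gap is in $\mathcal{F}_3$, and you correctly flag it. Your attempts at a ``middle band'' of good pivots keep coming up short because you are widening the band to boost the hit probability, which shrinks the smaller-side guarantee below $0.225|S_t|$. The fix is to go the other way: take a \emph{narrower} band. The paper uses the middle $|S_t|/6$ good elements (in the order $\pi_g$). For any such pivot, each side contains at least $\tfrac12(|G|-|S_t|/6)\ge \tfrac12(2|S_t|/3-|S_t|/6)=|S_t|/4$ good elements, and $1/4 = 10/40 > 9/40 = 1/5+1/40$, with slack exactly $k'/40$ in expectation. Chernoff then gives that such a pivot passes the test with probability at least $3/4$; the band itself is hit with probability at least $1/6$, so a single iteration succeeds with probability at least $1/8$, and $(7/8)^{36\log N}\le N^{-3}$. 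The point you were circling is that $1/6$ is already a constant, so narrowing the band to secure the $|S_t|/4$ side-count costs nothing; once you see that, the bookkeeping you were worried about disappears.
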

\begin{proof}
We first consider ${\cal F}_1.$ Suppose $S_t$ has at least  $\frac{|S_t|^3}{24 k}$ triangles. The probability that an iteration of the {\bf for} loop in lines~\ref{l:triangle_test_begin}--\ref{l:triangle_test_end} does not find a triangle is at most $1-\frac{1}{24k}$. Thus, the probability that none of the $72 k \log N$ iterations find a triangle is at most 
 $\left(1 - \frac{1}{24 k} \right)^{72 k \log N} \leq \frac{1}{N^3}.$

 We now analyze the event ${\cal F}_2. $ Suppose we choose a pivot $p$ in line~\ref{l:samplepivot} which is not balanced w.r.t. $S_t$. We need to argue that we shall execute line~\ref{l:partition_test_end} (which happens when $\min(|L_1|, |R_1|) \leq \frac{k'}{5} + \frac{k'}{40}$) with high probability. Let $L$ and $R$ denote the set of elements in $S_t \setminus \{p\}$ that are lesser than and greater than $p$, respectively. Assume $|L| < |S_t|/5$ (the other case is similar). Thus, the expected size of $L_1$ is at most $\frac{k'}{5}$. It follows from standard Chernoff bounds that the probability that $|L_1| > \frac{k'}{5} + \frac{k'}{40}$ is at most:
 $e^{\frac{-\left(\frac{1}{40}\right)^2 \cdot \frac{k'}{5}}{3}} \leq e^{-4 \log N} \leq \frac{1}{N^4}$.
Since we can execute line~\ref{l:samplepivot} at most $36 \log N$ times during a particular iteration $t$ of the {\bf while loop}, it follows by union bound that $\Pr[{\cal F}_2] \leq \frac{1}{N^3}.$ (for large enough $N$).

Finally, we consider the event ${\cal F}_3$. Suppose $S_t$ has at most $|S_t|/3$ bad elements, i.e., there are at least $2|S_t|/3$ good elements. Consider the middle (in the ordering $\pi_g$) $|S_t| / 6$ good elements of $S_t$. Any such element has at least $\frac{1}{2} \cdot (2|S_t| / 3 - |S_t| / 6) = |S_t| / 4 $ good elements on either side.  It follows that if the pivot $p$ is chosen among these $|S_t|/6$ elements, then the expected size of the sets $L_1, R_1$ defined in lines~\ref{l:L_1_defined}--\ref{l:partitionend} is at least $\frac{k'}{4} = (\frac{k'}{5} + \frac{k'}{40}) + \frac{k'}{40}$. Thus, for such a pivot, using the Chernoff bound, the probability of executing line~\ref{l:partition_test_end} (that is, $\min(|L_1|, |R_1|) \leq \frac{k'}{5} + \frac{k'}{40}$) is at most  $1/4$.
    Hence, the probability that we do not execute line~\ref{l:no_balanced_pivot} in a particular iteration of the {\bf for} loop in line~\ref{l:no_triangle_found} is at least $\frac{1}{6} \cdot \frac{3}{4} = \frac{1}{8}.$ Thus, $\Pr[{\cal F}_3] = $ the probability that we execute line~\ref{l:no_balanced_pivot} in each of the iterations of this {\bf for} loop is at most 
$\left(1 - \frac{1}{8} \right)^{36 \log N} \leq \frac{1}{N^3}.$
$\qed$
\end{proof}

\noindent
{\bf{Induction Hypothesis.}} Given a subset $S$ of $V$, let $\pi'(S)$ be the sequence generated by $\RS(S)$. Note that $\pi'(S)$ is a random sequence. Given integers $n$ and $b$, let $\cS(n,b)$ denote all subsets $S$ of size $n$ of $V$ which have  $b$ bad elements. Let $L(n,b)$ denote the maximum expected loss of the sequence output by~\Cref{algo:robust} when run on a subset $S \in \cS(n, b)$. More formally, 
$$  L(n,b) \coloneqq \max_{S \in \cS(n,b)} \E[\loss(\pi'(S))].$$
We now state the induction hypothesis that shall prove the main result~\Cref{thm:rsort}:
\begin{align}
    \label{eq:IH}
  L(n,b) \leq 3 b + c b \log n, \quad \forall n \in \mathbb{N}, 0 \leq b \leq n \leq N, \mbox{ where } c = \frac{\varepsilon}{\log N}
\end{align}
We prove the above result by induction on $n$. When $n=1$, the result follows trivially. Now assume it is true for all $L(n',b'),$ where $n' \leq n-1$. Now, we would like to show it for $L(n,b)$ for some given $b \leq n$. Fix a subset $S \in \cS(n,b)$. We need to show that 
\begin{align}
    \label{eq:IH1}
    \E[\loss(\pi'(S))] \leq 3b + cb \log n.
\end{align}
We first check an easy case. 

\begin{clm}
    \label{cl:trianglecase}
    Suppose $S$ has at least $\frac{n^3}{24k}$ triangles, where $k$ is as stated in line~\ref{l:choosek}, then $\E[\loss(\pi'(S))] \leq 3b + cb \log n$.
\end{clm}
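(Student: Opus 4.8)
\textbf{Proof plan for Claim~\ref{cl:trianglecase}.}

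The plan is to exploit the fact that when $S$ has many triangles, the first phase of the algorithm (lines~\ref{l:triangle_test_begin}--\ref{l:triangle_test_end}) is overwhelmingly likely to find one, and in that case the recursion proceeds on a strictly smaller instance with at least one fewer bad element. First I would set up the dichotomy conditioned on the high-probability event that the triangle-detection test succeeds: by Lemma~\ref{lem:nofailure}, the failure event ${\cal F}_1$ has probability at most $1/N^3$, so with probability at least $1 - 1/N^3$ the algorithm discards a triangle $\{x,y,z\}$ and recurses (via the jump to line~\ref{l:repeat}) on $S \setminus \{x,y,z\}$. Since every directed triangle contains at least one bad element, the new instance $S'$ satisfies $|S'| = n - 3$ and has at most $b-1$ bad elements, hence lies in $\cS(n-3, b')$ for some $b' \le b-1$. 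Crucially, discarding three elements changes $\loss$ only in a controlled way: the elements output by $\RS(S)$ in this case are exactly those output by $\RS(S')$ (the three discarded elements are never placed), so $\loss(\pi'(S)) = \loss(\pi'(S'))$ on this event, and we may apply the induction hypothesis \eqref{eq:IH} to get $\E[\loss(\pi'(S')) \mid \text{triangle found first}] \le L(n-3, b-1) \le 3(b-1) + c(b-1)\log(n-3) \le 3b - 3 + cb\log n$.

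Next I would bound the contribution of the complementary event. Conditioned on ${\cal F}_1$ (probability $\le 1/N^3$), the loss of $\pi'(S)$ is at most $n \le N$ trivially, since the loss of any sequence on at most $N$ elements cannot exceed $N$. Combining the two cases:
\begin{align}
\E[\loss(\pi'(S))] &\le \big(3b - 3 + cb\log n\big) + \frac{1}{N^3}\cdot N \nonumber \\
&\le 3b + cb\log n - 3 + \frac{1}{N^2} \nonumber \\
&\le 3b + cb\log n, \nonumber
\end{align}
where the last step holds for $N$ larger than a constant. This establishes \eqref{eq:IH1} in the many-triangles case.

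The one subtlety I would need to be careful about is the claim that $\loss(\pi'(S)) = \loss(\pi'(S'))$ on the good event — more precisely, whether conditioning on "a triangle is found in the first phase" biases the distribution of which triangle (and hence $S'$) gets removed in a way that breaks the clean appeal to $L(n-3, b-1)$. This is handled by noting that $L(n-3, b'')$ is a worst-case (maximum over instances) quantity and is monotone in the second argument, so no matter which triangle is removed, the resulting instance $S'$ has at most $b-1$ bad elements and its expected loss is at most $L(n-3, b-1)$; the conditioning only affects the identity of $S'$, not the validity of the bound. I do not expect this step to be a genuine obstacle — it is really just bookkeeping — and indeed this "easy case" is meant to be dispatched quickly so that the bulk of the induction can focus on the case where $S$ has few triangles, where the pre-pivoting step, the balanced-pivot search, and the concatenation-loss analysis all come into play.
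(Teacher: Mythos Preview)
Your proposal is correct and follows the same approach as the paper: invoke Lemma~\ref{lem:nofailure} to bound $\Pr[{\cal F}_1]$, apply the induction hypothesis to the instance $S'$ obtained by removing a triangle (which has at most $b-1$ bad elements, using monotonicity of the bound in $b$), and absorb the failure event via the trivial bound $\loss \le n$.

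The one bookkeeping difference worth noting: the paper writes $\E[\loss(\pi'(S))] \le (1-1/N^3)(L(n-3,b-1) + 3) + n/N^3$, i.e.\ it adds ``$+3$'' for the three discarded triangle vertices, whereas you argue $\loss(\pi'(S)) = \loss(\pi'(S'))$ since the discarded elements never appear in the output sequence and hence, under the stated definition of $\loss(\sigma)$, do not contribute to its loss. Your equality is the sharper statement for the claim as written, and both computations close to $3b + cb\log n$ (yours with about $3$ units of extra slack). The paper's ``$+3$'' is harmless here; it effectively tracks $|S| - |\supp(\pi'(S))|$ rather than $\loss(\pi'(S))$, which is what is ultimately needed to derive the $\lcs$ guarantee in Theorem~\ref{thm:rsort}, but for proving~\eqref{eq:IH1} your accounting is fine.
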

\begin{proof}
    We know by~\Cref{lem:nofailure} that the failure event ${\cal F}_1$ happens with probability at most $1/N^3$. Thus, with probability at least $1-1/N^3$, the algorithm shall make a recursive call on a subset $S'$ obtained from $S$ by removing a triangle $(x,y,z)$ found in line~\ref{l:discardtriangle} (notice that, whenever we remove a triangle $\{x, y, z\}$, we start from scratch after setting $S \gets S \setminus \{x, y, z\}$). This triangle must contain at least 1 bad element. We can assume that it has exactly one bad element (otherwise, the induction hypothesis applied on $S'$ only gets stronger because the r.h.s. of~\Cref{eq:IH} is monotonically increasing with $b$). Thus, 
    $\E[\loss(\pi'(S))] \leq \left(1 - \frac{1}{N^3} \right) (L(n-3,b-1) + 3) + \frac{n}{N^3}$,
    where the second term on the r.h.s. appears because the loss can never exceed $n$. Applying induction hypothesis on $L(n-3, b-1)$, we see that the above is at most 
    $$3(b-1) + c(b-1)\log(n-3) + 3 + \frac{n}{N^3} \leq 3b + cb \log n - c \log (n-3) + \frac{n}{N^3} \leq 3b + cb \log n,$$
    where the second inequality uses $c = \frac{\epsilon}{\log N}$ and assumes that $N$ exceeds a large enough constant (for a fixed $\epsilon$) $\qed$ 
\end{proof}
Henceforth, assume that $S$ has at most $\frac{n^3}{24k}$ triangles. Further, if the algorithm finds a triangle in line~\ref{l:discardtriangle}, then the same argument as above applies. Thus, we can condition on the event that no triangles are found in the {\bf for} loop in lines~\ref{l:triangle_test_begin}--\ref{l:triangle_test_end}. We can also assume that $b \leq n/3$, otherwise the condition~\eqref{eq:IH1} follows trivially. Assume that ${\cal F}_3$ does not occur (we shall account for this improbable event later). In that case, we know that we shall find a balanced pivot $p$ of $S$ during one of the iterations of the {\bf for} loop in line~\ref{l:no_triangle_found}, and such that the condition in line~\ref{l:test} will not hold.
Let $L$ and $R$ be as defined in lines~\ref{l:endpart1}--\ref{l:endpart2}. We now give a crucial definition. 

\begin{definition}
    Let $p \in S$ be a pivot and $L$ and $R$ be the partition of $S \setminus \{p\}$ consisting of elements lesser than and greater than $p$, respectively. Define $M_p$ as: 
    $ M_p \coloneqq \max_{X, Y} \conloss(X,Y)$,
    where $X$ varies over all sequences of distinct elements of $L$ and $Y$ varies over all sequences of distinct elements of $R$. 
\end{definition}
Recall that, we say that $a$ is {\it less than} $b$, if $(a, b) \in E$. Using the definition of $\loss$, it is easy to verify that $M_p=0$ if $p$ is a good element. 
For an element $p \in S$, let $q_p$ denote the probability that no triangles are found in the {\bf for} loop in line~\ref{l:ktrials} conditioned on the fact that $p$ is chosen in the pivot in line~\ref{l:samplepivot} and the execution reaches the {\bf for} loop in line~\ref{l:ktrials}. We have the following key lemma: 
\begin{lemma}
    \label{lem:qp}
    For any $p \in S$,  $\left( 1 - \frac{2T_p}{n^2} \right)^k \leq q_p \leq \left( 1 - \frac{M_p^2}{4n^2} \right)^k,$ where $T_p$ denotes the number of triangles in $S$ containing $p$.
\end{lemma}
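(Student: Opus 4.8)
The quantity $q_p$ is the probability that, in $k$ independent trials, each drawing an ordered pair $(x,y)$ uniformly with replacement from $S \setminus \{p\}$, we never get a triangle $(x,y,p)$. Since the trials are i.i.d., $q_p = (1 - \rho_p)^k$, where $\rho_p$ is the probability that a single random ordered pair $(x,y)$ forms a directed triangle together with $p$. So the whole lemma reduces to sandwiching $\rho_p$: I want to show
\[
\frac{M_p^2}{4n^2} \;\le\; \rho_p \;\le\; \frac{2 T_p}{n^2}.
\]
Then raising to the $k$-th power and using monotonicity gives the claim (with a harmless off-by-one from sampling out of $S\setminus\{p\}$, which has $n-1$ rather than $n$ elements — I will absorb this constant-factor slack into the stated bounds, or note $|S\setminus\{p\}| = n-1 \ge n/ \sqrt 2$ suffices for the constants chosen, but more cleanly: the number of ordered pairs is $(n-1)(n-2)$ if we forbid repeats, or I can allow repeats in which case a repeated pair can never be a triangle, so $\rho_p \le T_p \cdot 2 / (n-1)^2$; I'll just carry $n$ and note the constants are loose enough).

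The upper bound is the easy direction. An unordered pair $\{x,y\}$ contributes a triangle with $p$ only if $\{x,y,p\}$ is one of the $T_p$ triangles through $p$; each such triangle is hit by exactly one of the two orderings of its non-$p$ vertices (the one consistent with the cyclic orientation). So the number of \emph{ordered} pairs $(x,y)$ with $(x,y,p)$ a directed triangle is exactly $T_p$ (out of $(n-1)(n-2) \le n^2$ ordered pairs without repetition, or $(n-1)^2$ with repetition, where repeats never help). Hence $\rho_p \le T_p / (n-1)(n-2) \le 2T_p/n^2$ for $n$ larger than a small constant, giving $q_p \ge (1 - 2T_p/n^2)^k$.

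The lower bound is the substantive half and I expect it to be the main obstacle. I need to show $\rho_p \ge M_p^2/(4n^2)$, i.e., that there are at least $M_p^2/4$ ordered pairs $(x,y)$ closing a triangle with $p$. The intuition: $M_p = \max_{X,Y}\conloss(X,Y)$ measures how badly the partition $(L,R)$ misrepresents $\pi_g$; concretely, since $M_p = 0$ when $p$ is good, $M_p > 0$ forces ``misplaced'' good elements — good elements of $L$ that are actually $>_\pi$ some good element of $R$. Here is the plan: I will argue that $\conloss(X,Y)$ is maximized (essentially) by taking $X = \pi_g|_L$, $Y = \pi_g|_R$ reversed appropriately, and that its value equals the size of a maximum set of ``crossing inversions'' — pairs $(x,y)$ with $x \in L$, $y \in R$, both good, yet $y <_\pi x$. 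Call this set of bad-for-$p$ good pairs $P$; I'll show $|P| \ge M_p$ (roughly: each unit of concatenation loss is chargeable to at least one such inverted pair, after a careful exchange/LCS argument on $\supp(\sigma_1) + \supp(\sigma_2) - \supp(\sigma)$). Now for each such pair $x \in L, y \in R$ with $y <_\pi x$: in $G$ we have $x <_G p$ (since $x \in L$), $p <_G y$ (since $y \in R$), and — because $x,y$ are good — the arc between them respects $\pi$, so $y <_G x$. That is exactly the directed triangle $(x, p, y)$... wait, let me orient it: $x \to p$, $p \to y$, $y \to x$ — a directed 3-cycle. So the ordered pair $(x,y)$ (or $(y,x)$, whichever matches what the algorithm queries for ``$(x,y,p)$ forms a triangle'') closes a triangle with $p$. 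Thus each pair in $P$ gives a triangle through $p$, so $T_p \ge |P| \ge M_p$, and moreover the number of \emph{ordered} sampled pairs hitting one of these is $\ge |P| \ge M_p$ — but I need $M_p^2/4$, not $M_p$.

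To get the quadratic bound I need to be cleverer about counting triangles, not just $P$-pairs: the point is that if many good elements of $L$ are misplaced to the right of many good elements of $R$, the inversions multiply. Concretely, suppose after an optimal alignment there are $a$ good elements of $L$ that lie (in $\pi_g$) after some block, and $M_p$ counts a ``staircase'' of such inversions; a cleaner route is to show $M_p \le \min(|L^{\mathrm{mis}}|, |R^{\mathrm{mis}}|) \cdot (\text{something})$ — actually the standard fact is that concatenation loss of two sequences is at most the number of inversions across the cut but is \emph{realized} by a rectangle of inversions: there exist sets $A \subseteq L$, $B \subseteq R$ of good elements with $|A|,|B| \ge M_p/2$ (or $|A|\cdot|B| \ge M_p^2/4$ after a more careful argument) such that \emph{every} $a \in A$ satisfies $b <_\pi a$ for every $b \in B$. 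Each such $(a,b)$ then gives a distinct directed triangle $(a,p,b)$, so the number of ordered pairs $(x,y)$ from $S\setminus\{p\}$ closing a triangle with $p$ is at least $|A|\cdot|B| \ge M_p^2/4$; dividing by the $\le n^2$ ordered pairs gives $\rho_p \ge M_p^2/(4n^2)$, hence $q_p \le (1 - M_p^2/(4n^2))^k$. So the crux — and the step I'd budget the most care for — is the combinatorial claim that a concatenation loss of $M_p$ certifies a \emph{rectangle} $A \times B$ of cross-inversions with $|A|\,|B| \ge M_p^2/4$; this should follow from an LCS/Dilworth-type argument (the complement of the longest common subsequence of $\pi_g|_{L}$-then-$\pi_g|_{R}$ against $\pi_g$ decomposes into such a structure), but getting the constant $4$ exactly right is where the bookkeeping lives. $\qed$
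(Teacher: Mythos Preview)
Your overall plan coincides with the paper's: write $q_p=(1-\rho_p)^k$, bound $\rho_p$ from above by counting triangles through $p$, and bound $\rho_p$ from below by exhibiting a ``rectangle'' $A\times B$ of good-element cross-inversions with $|A|\cdot|B|\ge M_p^2/4$, each of which forces a directed triangle $a\to p\to b\to a$. The one place you leave open---and flag as the crux---is the construction of this rectangle, which you defer to ``an LCS/Dilworth-type argument.'' That is the gap, and the actual argument is both simpler and more direct than Dilworth.

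Here is how the paper fills it. Take any $(X,Y)$ with $\conloss(X,Y)=M_p$ (these exist since $M_p$ is defined as a maximum). Let $A$ be the last $\lceil M_p/2\rceil$ elements of $\supp(X)$ and $B$ the first $\lceil M_p/2\rceil$ elements of $\supp(Y)$; both are sorted sequences of good elements. Suppose some $a\in A$, $b\in B$ satisfied $a<_\pi b$. Then the prefix of $\supp(X)$ ending at $a$ followed by the suffix of $\supp(Y)$ starting at $b$ is a sorted sequence of good elements lying inside the concatenation of $X$ with $Y$, and its length exceeds $|\supp(X)|+|\supp(Y)|-M_p$. That directly contradicts $\conloss(X,Y)=M_p$. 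Hence every pair in $A\times B$ is an inversion, and (as you already argued) each such pair closes a triangle through $p$, giving at least $M_p^2/4$ triangles. No antichain decomposition is needed; the constant $4$ falls out of the half/half split.

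One minor correction on the easy direction: the algorithm's check ``$x,y,p$ form a triangle'' is on the \emph{unordered} triple, so \emph{both} orderings of $(x,y)$ detect a given triangle through $p$, giving $2T_p$ favorable ordered pairs rather than $T_p$. (Both you and the paper are loose with $n$ versus $n-1$ here; the slack in the algorithm's constants absorbs this.)
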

\begin{proof}
    We first prove the lower bound on $q_p$. Consider an element $p \in S$.  The probability that during an iteration of the {\bf for} loop in line~\ref{l:ktrials}, we pick the (unordered) triplet $\{x,y,p\}$ for a triangle $(x,y,p)$ is $\frac{2}{n^2}$. Since there are $T_p$ triangles containing $p$, the probability that we pick a triangle containing $p$ is at most $\frac{2T_p}{n^2}$. Thus, the probability that we do not pick any such triangle during the $k$ iterations of the {\bf for} loop in line~\ref{l:ktrials} is at least $\left( 1 - \frac{2T_p}{n^2} \right)^k.$

Now, we prove the upper bound. Assume $M_p > 0$; otherwise, the claim is trivial.  Let $L$ and $R$ denote the partition of $S \setminus \{p\}$ with respect to the pivot $p$. Let $X$ and $Y$ be sequences over subsets of $L$ and $R$ respectively such that $M_p = \conloss(X,Y)$. Recall that $\supp(X)$ (or $\supp(Y)$) is the longest common subsequence between $X$ (or $Y$) and the optimal sequence $\pi^\star_g$ on the good elements. In particular, $\supp(X)$ and $\supp(Y)$ consist of sorted good elements.
\begin{figure}[h]
    \centering
    \begin{tikzpicture}

\foreach \i in {1,...,15} {
    \node[draw, circle, minimum size=4mm] (\i) at (\i*0.8, 0) {};
}

\foreach \i in {1,...,7} {
    \fill[green] (\i) circle[radius=2mm];
}
\foreach \i in {9,...,15} {
    \fill[green] (\i) circle[radius=2mm];
}

\fill[red] (8) circle[radius=2mm];

\node[below=3mm] at (6) {\(x\)};
\node[below=3mm] at (8) {\(p\)};
\node[below=3mm] at (11) {\(y\)};

\draw[decorate,decoration={brace,mirror,amplitude=30pt}] (1.south west) -- (7.south east) node[midway,below=30pt] {\(\text{supp}(X)\)};
\draw[decorate,decoration={brace,mirror,amplitude=30pt}] (9.south west) -- (15.south east) node[midway,below=30pt] {\(\text{supp}(Y)\)};

\draw[decorate,decoration={brace,amplitude=15pt}] (5.north west) -- (7.north east) node[midway,above=15pt] {\(|X_1| = \frac{M_p}{2}\)};
\draw[decorate,decoration={brace,amplitude=15pt}] (9.north west) -- (11.north east) node[midway,above=15pt] {\(|Y_1| = \frac{M_p}{2}\)};

\draw[dashed,->] (6.300) to[out=300,in=240] (11.240);
\foreach \i in {1,...,6} {
    \draw[solid,->] (\i.east) -- (\the\numexpr\i+1.west);
}
\foreach \i in {9,...,14} {
    \draw[solid,->] (\i.east) -- (\the\numexpr\i+1.west);
}

\end{tikzpicture}

    \caption{An arc from $x$ to $y$ is a contradiction to the concatenation loss being $M_p$. Thus, the tuple $(x,p,y)$ forms a triangle $x\rightarrow p \rightarrow y  \rightarrow x$.} 
    \label{fig:concat_loss}
    \end{figure}
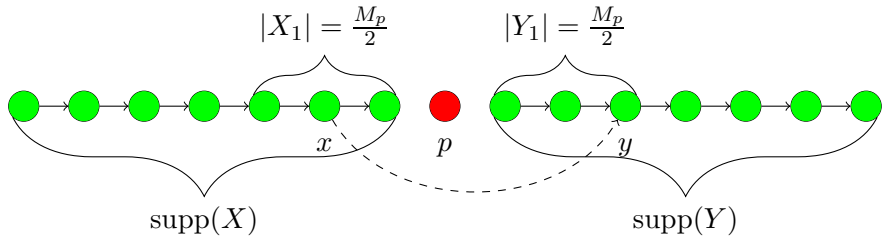
    Let $X_1$ denote the suffix of length $M_p / 2$ of $\supp(X)$, and $Y_1$ denote the prefix of length $M_p / 2$ of $Y$. We claim that for any $x \in X_1, y \in Y_1, (x,y,p)$ forms a triangle. Suppose not. Then, $x$ is less than $y$. But then, consider concatenating the prefix of $\supp(X)$ till $x$ and the suffix of $Y$ from $y$ (see \Cref{fig:concat_loss}). These sequence of good elements will be sorted and hence a subsequence of $\pi^\star_g$. But the length of this sequence is larger than  $\supp(X) + \supp(Y) - M_p. $ This implies that $\conloss(X,Y) < \supp(X) + \supp(Y) - (\supp(X) + \supp(Y)  - M_p) = M_p,$ a contradiction. Thus, we see that there are at least $\frac{M_p^2}{4}$ triangles involving $p$ and elements of $S \setminus \{p\}$. So, any particular iteration of the {\bf for} loop in line~\ref{l:ktrials} shall find such a triangle with probability at least $\frac{M_p^2}{4n^2}$. This proves the desired upper bound on $q_p$.
    $\qed$
\end{proof}

 Let $\Bal(S)$ denote the set of all balanced pivots of $S$, that is, the elements $p$ of $S$ for which the partition $L \cup R$ of $S \setminus \{p\}$ is balanced ($\min(|L|, |R|) \geq \frac{n}{5}$). For a pivot $p$, let $h_p$ denote the probability that, when $p$ is chosen as pivot (in line~\ref{l:samplepivot}),  we get a balanced partition $(L_1, R_1)$ of the sampled elements, that is, $\min(L_1, R_1) > \frac{k'}{5} + \frac{k'}{40}$ (and hence, line~\ref{l:partition_test_end} is not executed).  Recall that $b \leq \frac{n}{3}$. Hence, assuming ${\cal F}_3$ does not occur, there must be some pivot, among the (up to) $36 \log N$ sampled pivots, that passes the balanced partition test (i.e., line~\ref{l:partition_test_end} is not executed for this pivot). Let $\gamma_p$ denote the probability that this pivot is $p$. It is easy to see that $\gamma_p = \frac{h_p}{\sum_{u \in S} h_u}$.

\begin{lemma}
\label{lem:gamma_bound}
    Let $\Mid(S)$ denote the set of middle (in the ordering $\pi_g$) $|S|/6$ good elements of $S$. We have:
    \begin{itemize}
        \item[(i)] $\sum_{p \in S \setminus \Bal(S)} \gamma_p = \Pr[{\cal F}_2] \leq \frac{1}{N^3}$.
        \item[(ii)] For all $p \in \Mid(S)$, $\gamma_p \geq \frac{1}{2n} $.
        \item[(iii)] For all $p \in B$, $\gamma_p \leq \frac{12}{n}$.
    \end{itemize}
\end{lemma}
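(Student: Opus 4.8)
The plan is to bound each quantity $\gamma_p = h_p / \sum_{u \in S} h_u$ by controlling the individual probabilities $h_u$ (that pivot $u$ passes the balanced-partition test) from both sides, depending on whether $u$ lies in $\Bal(S)$, in $\Mid(S)$, or in $B$. Part~(i) is essentially a restatement of the definition of the failure event ${\cal F}_2$: if the selected pivot (the one among the $\le 36\log N$ sampled pivots that survives the balanced test) lies outside $\Bal(S)$, then by definition the subsequent recursive partition $L \cup R$ of $S_t \setminus \{p\}$ is not balanced, which is exactly ${\cal F}_2$. So $\sum_{p \notin \Bal(S)} \gamma_p = \Pr[{\cal F}_2]$, and \Cref{lem:nofailure} gives the bound $1/N^3$. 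I would also note here the trivial fact that every $p \in \Mid(S)$ is a good element with at least $n/4$ good elements on either side (shown already in the proof of \Cref{lem:nofailure} for event ${\cal F}_3$), hence $\Mid(S) \subseteq \Bal(S)$, which is the structural fact linking (i) and (ii).

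For part~(ii), fix $p \in \Mid(S)$. As observed in the analysis of ${\cal F}_3$, the expected sizes of $L_1$ and $R_1$ are each at least $\tfrac{k'}{4} = \bigl(\tfrac{k'}{5} + \tfrac{k'}{40}\bigr) + \tfrac{k'}{40}$, so a Chernoff bound gives $h_p \ge 3/4$ (the complement of the two $\le 1/4$ tail events, being slightly careful about the union but any constant $> 1/2$ suffices after re-scaling constants). On the other hand, $\sum_{u \in S} h_u \le |S| = n$ trivially since each $h_u \le 1$. Therefore $\gamma_p = h_p / \sum_u h_u \ge (3/4)/n \ge 1/(2n)$. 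For part~(iii), fix a bad element $p \in B$. The only handle on $h_p$ from below is trivial ($h_p \le 1$), so I need a \emph{lower} bound on the denominator $\sum_{u \in S} h_u$; and for this I use part~(ii): $\sum_{u \in S} h_u \ge \sum_{u \in \Mid(S)} h_u \ge |\Mid(S)| \cdot \tfrac{3}{4} = \tfrac{|S|}{6} \cdot \tfrac{3}{4} = \tfrac{n}{8}$. Hence $\gamma_p = h_p / \sum_u h_u \le 1 / (n/8) = 8/n \le 12/n$.

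The routine work is the Chernoff estimate for $h_p \ge 3/4$ when $p \in \Mid(S)$ — the deviation is $\tfrac{k'}{40} = \Omega(\log N)$ below a mean of $\Omega(\log N)$, and with $k' = 10^5 \log N$ the tail is far below any fixed constant, so $h_p \ge 3/4$ is very comfortable; I would just cite the same Chernoff computation used in \Cref{lem:nofailure}. The main subtlety — though it is conceptual rather than computational — is getting the \emph{direction} of the bounds right: for good middle elements one wants $h_p$ large and the denominator controlled by the trivial upper bound $n$, whereas for bad elements one wants the denominator controlled by a \emph{lower} bound coming from the middle good elements, with $h_p$ itself only trivially bounded. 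The slightly loose constant $12/n$ (rather than $8/n$) in (iii) leaves room in case one wants $h_p \ge 2/3$ instead of $3/4$, or wants $|\Mid(S)|$ to be $\lceil |S|/6 \rceil$ versus $\lfloor |S|/6 \rfloor$ with small-$n$ rounding; I would absorb any such slack into that constant rather than tracking it. $\qed$
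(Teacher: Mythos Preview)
Your proposal is correct and follows essentially the same approach as the paper: part~(i) by the definition of ${\cal F}_2$ together with \Cref{lem:nofailure}; part~(ii) from $h_p \ge 3/4$ for $p \in \Mid(S)$ (reusing the Chernoff estimate in the ${\cal F}_3$ analysis) combined with the trivial upper bound $\sum_u h_u \le n$; and part~(iii) from the trivial $h_p \le 1$ together with the lower bound $\sum_u h_u \ge \sum_{p \in \Mid(S)} h_p$. The only cosmetic difference is that the paper plugs in the looser $h_p \ge \tfrac{1}{2}$ in part~(iii) to get $\sum_u h_u \ge \tfrac{n}{6}\cdot\tfrac{1}{2} = \tfrac{n}{12}$ and hence exactly $\gamma_p \le \tfrac{12}{n}$, whereas you use $\tfrac{3}{4}$ to obtain the sharper $\tfrac{8}{n} \le \tfrac{12}{n}$; either is fine.
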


\begin{proof}
    The first statement follows from the definition of ${\cal F}_2$ and ~\Cref{lem:nofailure}. Now, notice that for $p \in \Mid(S), h_p \geq \frac{3}{4}$ (this was argued in the proof of~\Cref{lem:nofailure} when bounding the probability of $F_3$), and $\sum_{p \in S} h_p \leq n$. Hence, $\gamma_p \geq \frac{1}{2n}$ for all $p \in \Mid(S)$.
    For the third statement, note that $\sum_{p \in S} h_p \geq \sum_{p \in \Mid(S)} h_p \geq \frac{n}{6} \cdot \frac{1}{2} = \frac{n}{12}$. Also, $h_p \leq 1$  for all $b \in B.$ Hence, $\gamma_p \leq \frac{12}{n}$ for all $p \in B$. $\qed$

\end{proof}
 
Now, once a pivot $p$ passes the partition test, there are two possibilities: 
\begin{itemize}
\item[(a)] With probability $q_p$, we partition $S$ into $L$ and $R$ with respect to $p$ and recursively call $\RS(L)$ and $\RS(R)$. Let the output of these recursive calls be sequences $\sigma_L$ and $\sigma_R$, respectively (recall that the output of $\RS(L)$ or $\RS(R)$ can be a sequence on a subset of $L$ or $R$ respectively). Let $\sigma$ denote the concatenated sequence $(\sigma_L, p, \sigma_R)$.  If $p$ is a good element, then $M_p=0$, and hence $\loss(\sigma) = \loss(\sigma_1) + \loss(\sigma_2)$. If $p$ is a bad element, $\loss(\sigma)$ is same as the loss of the sequence $(\sigma_1, \sigma_2)$. The latter quantity, by the definition of $M_p$, is at most $\loss(\sigma_1) + \loss(\sigma_2) + M_p.$ Let $L$ and $R$ have $b_L$ and $b_R$ bad elements, and let their sizes be $n_L$ and $n_R.$ For, $p \in S \setminus \Bal(S), \E[\loss(\sigma)] \leq n$. For $p \in \Bal(S)$, 
    \begin{align*}
        \E[\loss(\sigma)] & \leq \E[\loss(\sigma_L)] + \E[\loss(\sigma_R)] + M_p \\
        & \leq 3b_L + cb_L \log n_L + 3b_R + cb_R \log n_R + M_p \\
        & \leq 3b + c b \log \max(n_L, n_R) + M_p  \\
        &\leq 3b + cb \log n - c b \log(5/4)  +M_p,
    \end{align*}
where the second inequality follows from the induction hypothesis, and the last inequality uses $\max(n_L, n_R) \leq \frac{4n}{5}$.

\item[(b)] With probability $(1-q_p)$, a triangle is found in lines~\ref{l:ktrials}--\ref{l:triangle_check_end}. In this case, we effectively recurse on a subset $S'$ of $S$, where $S'$ has $n-3$ elements and at most $b-1$ bad elements. In this case, the induction hypothesis implies that the expected loss of the sequence returned by $\RS(S)$ is, at most
$$ \E[\pi'(S')] +3 \leq 3(b-1) +  c(b-1) \log (n-3) + 3 \leq 3b + c b \log n$$ where the expectation is over the choice of $S'$ and the $\RS(S')$ procedure. 
\end{itemize}
  Putting everything together, we see that (conditioned on ${\cal F}_3$ not happening), $\E[\loss(\pi'(S)) | {\cal F}_3 \text{does not happen}]$ is at most (recall that $B$ is the set of bad elements):
 \begin{align}
     &  3 b+ cb \log n - c b \log(5/4) \sum_{p \in \Bal(S) \setminus B} q_p \gamma_p  + \sum_{p \in S \cap B} q_p \gamma_p \cdot M_p   + \sum_{p \in S \setminus \Bal(S)} \gamma_p n\notag \\
     \leq &  3 b+ cb \log n - c b \log(5/4) \sum_{p \in \Mid(S)} \frac{1}{2n} q_p  + \sum_{p \in S \cap B} \frac{12}{n} q_p \cdot M_p +  \frac{n}{N^3} \notag \\
      \leq & \ 3b + cb \log n - c b \log(5/4) \sum_{p \in \Mid(S)} \frac{1}{2n} \left( 1 - \frac{2T_p}{n^2} \right)^k   + \frac{12}{n} \sum_{p \in S \cap B}  M_p \left(1 - \frac{M_p^2}{4n^2}\right)^k + \frac{1}{N^2} \label{eq:calc}
    \end{align}
    where the first inequality uses the observation that $\Mid(S) \subseteq \Bal(S) \setminus B$, and utilizes the bounds on $\gamma_p$ derived in ~\Cref{lem:gamma_bound}. The second inequality uses the bounds on $q_p$ derived in ~\Cref{lem:qp}.
     Now, observe that, the expression $x (1 - x^2 )^k$ over $x > 0$, is maximized at $x = \frac{1}{\sqrt{2k + 1}}$, and hence,     
     
    $$M_p \left(1 - \frac{M_p^2}{4n^2}\right)^k =  2n \frac{M_p}{2n} \left(1 - \left(\frac{M_p}{2n}\right)^2\right)^k \le \frac{2n}{\sqrt{2k + 1}} \left(1 - \frac{1}{2k + 1} \right)^k \le \frac{2n}{\sqrt{2k + 1}}$$
    
    Substituting this in~\eqref{eq:calc}, we see that $\E[\loss(\pi'(S)) | {\cal F}_3 \text{does not happen}]$ is at most:
    \begin{align*}
         &3b + cb \log n  - cb \log (5/4) \sum_{p \in \Mid(S)} \frac{1}{2n} \left( 1 - \frac{2T_p}{n^2} \right)^k  + \frac{12 |S \cap B|}{n} \frac{2n}{\sqrt{2k + 1}} + \frac{1}{N^2}\\
         & \leq 3b + cb \log n  - cb \log (5/4) \cdot \frac{1}{2n} \cdot \sum_{p \in \Mid(S)}  \left( 1 - \frac{2T_p}{n^2} \right)^k  + \frac{12 \sqrt{2} b}{\sqrt{k}} + \frac{1}{N^2}
    \end{align*}

After Claim \ref{cl:trianglecase}, we had assumed that $\sup_p T_p \leq \frac{n^3}{24 k}$. Therefore, there can be at most $n/12$ elements for which $T_p > \frac{n^2}{2k}$. It follows that there are at least $n/6 - n/12 = n/12$ elements $p$ of $\Mid(S)$ for which $T_p \leq \frac{n^2}{2k}$. Therefore, the expression above is, at most: 
\begin{align*}
    & 3b + cb \log n  - cb \log (5/4) \cdot \frac{1}{2n} \cdot \frac{n}{12} \left( 1 - \frac{1}{k} \right)^k  + \frac{12 \sqrt{2} b}{\sqrt{k}} + \frac{1}{N^2} \\
    & \leq 3b + cb \log n - \frac{c}{96} \, \log(5/4) b + \frac{12 \sqrt{2} b}{\sqrt{k}} + \frac{1}{N^2}\\
    & \leq 3b + cb \log n - \frac{b \epsilon}{\log N} \left( \frac{\log(5/4)}{96} - \frac{12 \sqrt{2}}{10000}\right) + \frac{1}{N^2}\\
    & \leq 3b + cb \log n - \frac{1}{N^2}
\end{align*}
where the first inequality uses the fact that $k \geq 2$ and hence, $(1-1/k)^k \geq 1/4$, the second one uses the fact that  $c = \frac{\epsilon}{\log N}$, and the last inequality assumes that $N$ exceeds a large enough constant (for a fixed $\epsilon$), and $b \geq 1$ (for $b = 0$, $L(n, b) = 0$ trivially holds). 
Now, 
\begin{align*}
    \E[\loss(\pi'(S))] = &\Pr[{\cal F}_3 \text{ does not happen}] \cdot \E[\loss(\pi'(S)) | {\cal F}_3 \text{ does not happen}]\\
    & + \Pr[{\cal F}_3 \text{ happens}] \cdot \E[\loss(\pi'(S)) | {\cal F}_3 \text{ happens}] \\
    & \leq 1 \left(3b + cb \log n - \frac{1}{N^2} \right) + \frac{1}{N^3} \cdot n\\
    & \leq 3b  + cb \log n
\end{align*}
Finally, using $c = \frac{\epsilon}{\log N}$, we see that the expected loss is at most $(3+\epsilon)b$. 
This proves the statement about the quality of the output permutation in~\Cref{thm:rsort}. It remains to calculate the expected number of queries by the algorithm. 
\begin{lemma}
    \label{lem:countqueries}
    Conditioned on the failure event ${\cal F}_2$ not happening during any of the recursive calls of $\RS$, the number of queries made by the algorithm is $O\left(\frac{N \log^3 N}{\epsilon^2}\right)$. The same bound holds for the running time of the algorithm. 
\end{lemma}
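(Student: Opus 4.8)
The plan is to charge queries (and running time) against the recursion/iteration structure of $\RS$, splitting the work in one execution of the \textbf{while} loop (line~\ref{l:while}) into three phases: the triplet‑testing phase (lines~\ref{l:triangle_test_begin}--\ref{l:triangle_test_end}), the pivot‑search phase (lines~\ref{l:no_triangle_found}--\ref{l:triangle_check_end}), and the final partitioning phase (lines~\ref{l:endpart1}--\ref{l:recursive_calls}). First I would bound the number of recursive calls and \textbf{while}‑loop iterations. Every call that reaches line~\ref{l:recursive_calls} consumes a \emph{distinct} pivot $p\in V$, since once an element is used as a pivot it is written to the output and never handed to a sub‑call; so there are at most $N$ such calls, each spawns exactly two children, and hence the recursion tree has $O(N)$ nodes, i.e.\ $O(N)$ recursive calls in total. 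Moreover, every iteration of the \textbf{while} loop either ends by deleting a directed triangle from $S$ (line~\ref{l:discardtriangle} or line~\ref{l:checktriangle}) and jumping to line~\ref{l:repeat}, or it ends by returning. A triangle deletion permanently removes three elements of $V$, so there are at most $N/3$ triangle‑deleting iterations over the whole execution; the number of returning iterations is at most the number of recursive calls, which is $O(N)$. Thus the \textbf{while} loop runs $O(N)$ times in total.

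Next I would bound the cost of a single \textbf{while}‑loop iteration operating on a current set $S_t$ of size $m$. Lines~\ref{l:triangle_test_begin}--\ref{l:triangle_test_end} make at most $3\cdot 72k\log N = O(k\log N)$ queries; the \textbf{for} loop of line~\ref{l:no_triangle_found} runs at most $36\log N$ times, each performing $k'=10^5\log N$ partition queries, for a total of $O(\log^2 N)$ queries; the pivot‑triangle test of line~\ref{l:ktrials} is entered at most once (the first sampled pivot that passes the balance test of line~\ref{l:test} either leads to a restart via line~\ref{l:checktriangle} or to a return) and costs at most $3k = O(k)$ queries; and the final partition of $S_t$ in lines~\ref{l:endpart1}--\ref{l:endpart2} costs $m-1$ queries, but only in a returning iteration. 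Hence a triangle‑deleting iteration costs $O(k\log N)$ queries, and a returning iteration costs $O(k\log N + m)$ queries, where $m$ is bounded by the size of the set on which that call was invoked.

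Finally I would combine these bounds. The ``$O(k\log N)$'' terms sum to $O(N)\cdot O(k\log N)=O(Nk\log N)$ over all $O(N)$ iterations. For the ``$m$'' terms, this is exactly where the conditioning on ${\cal F}_2$ not occurring enters: if every recursive call makes a balanced partition, then by~\Cref{def:balanced} each child call receives a set of size at most $\tfrac{4}{5}$ that of its parent's initial set, so the recursion has depth $O(\log_{5/4}N)=O(\log N)$; since the sets handled at any fixed depth are pairwise disjoint subsets of $V$, they sum to at most $N$, and therefore $\sum_{\text{calls}} m = O(N\log N)$. (Without this conditioning the partitions could be arbitrarily skewed, the depth could be $\Omega(N)$, and this sum could reach $\Theta(N^2)$ — this is the one place the balanced‑partition guarantee is indispensable, and is the main obstacle the lemma sidesteps.) Adding up, the total number of queries is $O(Nk\log N + N\log N)=O(Nk\log N)$, and substituting $k=(10000\log N/\epsilon)^2=\Theta(\log^2 N/\epsilon^2)$ yields $O\!\left(\frac{N\log^3 N}{\epsilon^2}\right)$. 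For the running‑time claim I would note that if $S$ is maintained as an array then uniform sampling is $O(1)$ time and deleting an element is $O(1)$ time (swap it to the end and shrink the array, which is harmless since the internal order of $S$ is irrelevant), while building $L$ and $R$ costs $O(m)$; thus every non‑query operation is dominated by the query count, and the same bound $O\!\left(\frac{N\log^3 N}{\epsilon^2}\right)$ holds for the running time.
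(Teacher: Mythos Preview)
Your proof is correct and follows essentially the same approach as the paper: both identify that each \textbf{while}-loop iteration costs $O(k\log N)=O(\log^3 N/\epsilon^2)$ in non-partitioning work, and both use the balanced-partition guarantee (from $\neg{\cal F}_2$) to bound the total partitioning work by $O(N\log N)$. The only difference is presentational---the paper packages the argument into a single recurrence $T(n)=\max\bigl(T(n-3),\ \max_{n/5\le n_1\le 4n/5}(T(n_1)+T(n-1-n_1)+O(n))\bigr)+O(\log^3 N/\epsilon^2)$ and asserts its solution, whereas you unwind the same accounting explicitly by globally bounding the number of \textbf{while}-loop iterations and summing the per-iteration costs.
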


\begin{proof}
    It is easy to verify that each iteration of the {\bf while loop} (\ref{l:while}) takes $O(\frac{\log^3 N}{\epsilon^2})$ time, and either finds a triangle or divides the problem into two smaller subproblems (in this case taking an additional $O(|S|)$ time). Since ${\cal F}_2$ does not happen, we have that the subproblem sizes are at most $O(\frac{4}{5} |S|)$.  Hence, the time complexity recursion (which subsumes the query complexity) is:
    $$T(n) = \max\left (T(n-3), \max_{n/5 \leq n_1 \leq 4n/5} (T(n_1) + T(n - 1 - n_1) + O(n)) \right) + O\left(\frac{\log^3 N}{\epsilon^2}\right).$$
    It is easy to inductively prove that $T(n) = O\left(n \log n + n \frac{\log^3 N}{\epsilon^2}\right).$
    $\qed$
\end{proof}

The probability of ${\cal F}_2$ happening during an iteration of the {\bf while loop} (\ref{l:while}) is at most $1/N^3$ (~\Cref{lem:nofailure}). Since there are up to $N$ iterations of the while loop overall, using union bound, the probability that ${\cal F}_2$ never happens is at least $1 - 1/N^2$. Also, the worst-case running time (and the query complexity) of $\RS$ is $O(N^2)$. Hence, the expected running time (also the expected query complexity) of the algorithm is at most:
$$\left(1-1/N^2\right) \cdot O\left(\frac{N \log^3 N}{\epsilon^2}\right) + \frac{1}{N^2} \cdot N^2 = O\left(\frac{N \log^3 N}{\epsilon^2}\right).$$

Note that, when proving $L(n, b) \le 3b + cb \log n$, where $c = \frac{\epsilon}{\log N}$, we assumed that $N$ is large enough at certain places. It can be easily verified that $N = \Omega (\frac{1}{\epsilon^{2/3}})$ suffices in all these places. To extend the small loss guarantee to smaller $N$, we can simply tweak our algorithm to run the triangle removal algorithm if $N = O(\frac{1}{\epsilon^{2/3}})$, resulting in a loss of at most $3 b$, with a run-time of $O(\frac{1}{\epsilon^2})$.
This completes the proof of~\Cref{thm:rsort}.

\section{Linear time $(2-\delta)$-approximation algorithm for Ulam-$k$-Median}
\label{sec:ulamoutline}
\newcommand{\Bad}{{\mathsf{Bad}}}
\newcommand{\Good}{{\mathsf{Good}}}
In this section, we give the main ideas behind the linear time approximation algorithm for the Ulam-$k$-Median problem. Details are deferred to~\Cref{sec:ulamdetails}. Our algorithm builds on the algorithm of~\cite{cdk23}. 
They gave a $(2-\delta)$-approximation algorithm for the Ulam-$1$-Median problem for a constant $\delta > 0$. 
We describe their approach briefly. Consider an input for the Ulam-$1$-Median problem given by a set $S$ of $n$ permutations of $[d]$. 
\begin{itemize}
\item[(i)] They show that there are five special input permutations, say $\sigma_1, \ldots, \sigma_5$, which (as the next step describes) suffice to find a permutation close to the optimal permutation $\sigma^*$. In their algorithm, they try out all $O(n^5)$ possibilities for finding these five input permutations. 
\item[(ii)] Recall that $\LCS(\sigma, \sigma^*)$ denotes the longest common subsequence between $\sigma$ and $\sigma^*$. Define $I_\sigma$ denote the symbols in $[d]$ which do not appear in $\LCS(\sigma, \sigma^*)$. Order the strings $\sigma_1, \ldots, \sigma_5$ in ascending order of $|I_{\sigma_i}|$ values. Then, these permutations satisfy the following properties: 
\begin{itemize}
    \item[(P1)] For every $j > i$, $|I_{\sigma_i} \cap I_{\sigma_j}| \leq \varepsilon \cdot |I_{\sigma_i}|,$ where $\varepsilon > 0$ is a small enough constant. 
    \item[(P2)] $|I_{\sigma_4}| \leq (1+\alpha)\frac{OPT}{n}$, for some small constant $\alpha$.
\end{itemize}
\item[(iii)] Let $\Good$ denote the set of symbols in $[d]$ which are present in at most one of the sets $I_{\sigma_i}, i \in [5]$. Let $\Bad$ denote the remaining symbols, i.e., $\Bad \coloneqq [d] \setminus \Good.$  Property~(P1) shows that the number of symbols in $\Bad$ is at most 
$$
 4 \veps |I_{\sigma_1}| + 3 \veps |I_{\sigma_2}| + 2 \veps |I_{\sigma_3}| + \veps |I_{\sigma_4}| \leq 10 \veps |I_{\sigma_4}|.
$$
Using $\veps \leq 1/100$ and property~(P2), it can be shown that the number of bad symbols is at most $d/10$.
\item[(iv)] Let $a, b \in \Good$. It follows that at least three out of the five permutations $\sigma_1, \ldots, \sigma_5$ agree with $\sigma^*$ on the relative ordering of $a$ and $b$. Thus, they create an instance of the \robust problem where the outcome of a query on a pair $(a,b)$ is given by the majority vote among $\sigma_1, \ldots, \sigma_5$. It follows that when $a, b$ both belong to $\Good$, then the outcome agrees with the optimal permutation. 
\item[(v)] It remains to solve the \robust problem.~\cite{cdk23} use the following approach: query all pairs $(a,b)$ to a get a directed graph $G$. Whenever $G$ has a directed triangle, remove all three vertices in the triangle. Clearly, at least one of the three vertices in this triangle would belong to $\Bad$, and hence, this upper bounds the number of removed points by $3 |\Bad|$. However, this procedure takes $O(d^3)$ time. 
\end{itemize}

We improve on the above approach as follows: (a) In the first step above, instead of trying out all subsets of size 5 of the input, we show that there is a small (constant size) random sample of $S$ that is guaranteed to contain five permutations satisfying (P1) and (P2) with high probability, and (b) we can use our linear time algorithm for \robust in step~(v) above to improve the running time dependence on $d$ to linear, (c) Finally, we show that resolving the $1$-median through a random sampling approach leads directly to a corresponding algorithm for the $k$ median problem. We now outline each of these steps. 

\subsection{Random Sampling Lemma for Ulam-$1$-median}
Recall that for any permutation $\sigma$, we define $I_{\sigma}$ to be the set of symbols that are misaligned with the optimal permutation, i.e., $I_{\sigma}$ contains those symbols that are not in $\LCS(\sigma,\sigma^{*})$, and hence, $|I_{\sigma}| = \Delta(\sigma, \sigma^*)$.
The algorithm of Chakraborty et al.~\cite{cdk23} performs a case analysis to find a good center (we say `good' in the sense of breaking the 2-approximation barrier) for the input set $S$. 
Let $\sigma_c$ denote the permutation in $S$ closest to $\sigma^*$. 
The analysis in \cite{cdk23} breaks into the following two initial cases: (i) $\Delta(\sigma_c, \sigma^*) \leq (1-\alpha)\frac{OPT}{n}$, and (ii) otherwise. In case (i), we can show that $\sigma_c$ is a good center. This means that there {\bf exists} a good center within the input set $S$. 
This existence is sufficient for the algorithm in~\cite{cdk23} since it will go over all combinations of permutation in $S$ to find a good center. However, our aim is to show that a random subset is sufficient to locate a good center. So, we need to consider the following cases instead: (i) there are many points in $S$ near $\sigma^*$, and (ii) there are very few such points. In case (i), a random subset will contain a good center. For (ii), we consider further cases as in \cite{cdk23}. 
Even though case (i) is more straightforward than the subsequent subcases of case (ii), it exhibits our approach.

The key insight of \cite{cdk23} is to use the fact that for two permutations $\sigma_1$ and $\sigma_2$, $\Delta(\sigma_1, \sigma_2) \leq |I_{\sigma_1}| + |I_{\sigma_2}| - |I_{\sigma_1} \cap I_{\sigma_2}| = \Delta(\sigma_1, \sigma^*) + \Delta(\sigma_2, \sigma^*) - |I_{\sigma_1} \cap I_{\sigma_2}|$. This slightly stronger triangle inequality is used cleverly to break the 2-approximation barrier. Their analysis can be subdivided into the following two high-level cases: (a) a good number of permutations are clumped together (i.e., are pair-wise close to each other), and (b) there are at least five permutations that are reasonably spread apart. The measure of being spread/clumped is in terms of the number of common misaligned symbols. More precisely, two permutations $\sigma_1$ and $\sigma_2$ are said to be close if $|I_{\sigma_1} \cap I_{\sigma_2}|$ is sufficiently small. For case (a), \cite{cdk23} argues that there exists a good permutation within the clump. 
For our random subset version, we need to consider potential clumps centered around various permutations. If there are too many such permutations, then a random subset will contain one. On the other hand, if there are too few, then there will be sufficiently many permutations that are spread apart, and hence, a random subset will also have enough permutations that are spread apart. So, we get that in a {\em constant size} random subset of $S$, either there is a permutation that is a good center, or there are five permutations that are reasonably separated (in fact, we can show that there are five permutations that satisfy (P1) and (P2)).\footnote{Chakraborty et al.~\cite{cdk23} also show a similar sampling lemma, however, they required a sample of size $O(\log{n})$, which leads to an additional $(\log{n})^{O(k)}$ factor in the running time for the $k$-median problem, which we wanted to avoid.}
In the latter case, the analysis is the same as that in \cite{cdk23}, except that we use our robust sorting algorithm to find a good center from five permutations instead of the $O(d^3)$ algorithm in \cite{cdk23}. We discuss this next.

\subsection{Using the linear time algorithm for \robust}
Suppose we have five input permutations $\sigma_1, ..., \sigma_5$ that satisfy properties (P1) and (P2) as described in the second step of our approach given at the beginning of this section. 
Let us construct a tournament graph $T = ([d], E)$ where the nodes are the $d$ symbols, and there is a directed edge from symbol $a$ to symbol $b$ if and only if $a$ comes before $b$ in at least three out of five permutations $\sigma_{1}, ..., \sigma_{5}$. The goal is to use $T$ to find an ordering such that the ordering of ``most" symbols matches that in $\sigma^*$. This is where our algorithm deviates from that of \cite{cdk23}. 
The algorithm in \cite{cdk23} achieves this by finding and removing cycles of length three, thereby removing all bad symbols since every such cycle must have a bad symbol. Once all the bad symbols have been removed, the remaining graph has only good symbols. Since the relative ordering of every pair of good symbols matches that in $\sigma^*$, the correct ordering of the remaining symbols is obtained. The time cost of this algorithm in \cite{cdk23} is $O(d^3)$.\footnote{One can also use the 2-factor approximation algorithm of Lokshtanov et al.~\cite{fvst} for the Feedback Vertex Set problem on Tournament graphs (FVST). However, the running time of their algorithm, $O(d^{12})$, is even worse than cubic.}
We give an algorithm with time linear in $d$ to achieve the same goal using our {\em robust sorting} framework. 
When applying our robust sorting framework, total ordering is per the ordering of symbols in the permutations $\sigma^*$, and the comparison operator {\tt $COMPARE_{\sigma_{1}, ..., \sigma_{5}}(a, b)$} checks if at $a$ is before $b$ in at least three out of five of the permutations $\sigma_{1}, ..., \sigma_{5}$.
The remaining aspects, including the good and bad symbols, perfectly align with our current goal. 
That is, the comparison operator gives correct relative ordering (in terms of being consistent with $\sigma^*$) for a pair in which both symbols are good, but if even one symbol in the pair is bad, then no such guarantees exist.
Note that the running time of our algorithm {\tt ROBUST-SORT} is nearly linear in $d$ (ignoring polylogarithmic factors in $d$). 
Using Theorem~\ref{thm:rsort}, we get that that {\tt ROBUST-SORT} returns a permutation $\tilde{\sigma}$ with the property that $\lcs(\sigma^*, \tilde{\sigma}) \geq d - 4 |Bad|$. 
The rest of the analysis is again similar to \cite{cdk23}. We have:
$$
\Delta(\tilde{\sigma}, \sigma^*) = d - \lcs(\tilde{\sigma}, \sigma^*) \leq 4 |Bad| \stackrel{(P1)}{\leq} 40 \veps |I_{\sigma_4}| \stackrel{(P2)}{\leq} 40 \veps (1 + \alpha) \frac{OPT}{n}.
$$
Using this, we can now calculate the cost with respect to the center $\tilde{\sigma}$.
$$
Obj(S, \tilde{\sigma}) = \sum_{x \in S} \Delta(x, \tilde{\sigma}) 
\stackrel{(triangle)}{\leq} \sum_{x \in S} \left( \Delta(x, \sigma^*) + \Delta(\tilde{\sigma}, \sigma^*)\right)
\leq \left( 1 + 40 \veps (1+\alpha)\right) \cdot OPT \\
\leq (1.999) \cdot OPT.
$$
The last inequality is by a specific choice of $\veps$ and $\alpha$. Combining the results of this and the previous subsection, we obtain the following sampling lemma that may be of independent interest.

\begin{lemma}[A sampling lemma]
Let $S$ denote any input for the Ulam-$1$-median problem, a subset of permutations of $1, 2, ..., d$. There exists a constant $\eta$ and an algorithm {\tt ULAM1} such that when given as input a set $X \subset S$ consisting of $\eta$ permutations chosen uniformly at random from $S$, {\tt ULAM1} outputs a list of $O(\eta)$ centers such that with probability at least 0.99, at least one center $\tilde{x}$ in the list satisfies $Obj(S, \tilde{x}) \leq (2-\delta) \cdot OPT$, where $\delta$ is a small but fixed constant. Moreover, the running time of {\tt ULAM1} is $\tilde{O}(d)$.
\end{lemma}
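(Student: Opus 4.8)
The plan is to prove the sampling lemma by carefully formalizing the case analysis sketched in the two preceding subsections, so that the randomness in the sample is exploited in both branches. Let $\sigma^*$ denote the optimal $1$-median and let $OPT = Obj(S, \sigma^*)$. First I would set up the two parameters $\alpha, \veps$ (both small constants, $\veps \le 1/100$, chosen at the very end so that $1 + 40\veps(1+\alpha) < 2 - \delta$), and define for each $\sigma \in S$ the misalignment set $I_\sigma$ with $|I_\sigma| = \Delta(\sigma, \sigma^*)$. The central dichotomy is on how many input permutations are ``close'' to $\sigma^*$ or to each other (measured via $|I_{\sigma_i} \cap I_{\sigma_j}|$). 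Concretely, call $\sigma$ \emph{heavy} if it is the center of a large clump, i.e.\ many $\tau \in S$ satisfy $|I_\sigma \cap I_\tau| \le \veps |I_\sigma|$ (and $|I_\tau| \ge |I_\sigma|$), and also treat separately the permutations $\sigma_c$ with $\Delta(\sigma_c, \sigma^*)$ very small. I would show: if at least a constant fraction of $S$ lies in such a clump (or near $\sigma^*$), then a uniform sample of $\eta$ points hits it with probability $\ge 0.99$, and \cite{cdk23}'s argument already shows any point of the clump (or any sufficiently-close point) is a $(2-\delta)$-center, so outputting all $\eta$ sampled points as candidate centers suffices for that branch.

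For the complementary branch — few heavy permutations, few points near $\sigma^*$ — I would argue that the input must contain \emph{many} disjointly-spread permutations: one can greedily extract a long chain $\tau_1, \tau_2, \dots$ ordered by increasing $|I_{\tau_i}|$ such that consecutive (indeed all) pairs satisfy the separation condition $|I_{\tau_i} \cap I_{\tau_j}| \le \veps |I_{\tau_i}|$ for $i < j$, and the ``few heavy / few near-optimal'' assumption guarantees this chain covers a constant fraction of $S$. Hence with probability $\ge 0.99$, an $\eta$-sample of appropriate constant size contains five permutations $\sigma_1, \dots, \sigma_5$ from this chain, and after sorting them by $|I_{\sigma_i}|$ they satisfy property (P1). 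Property (P2) — that $|I_{\sigma_4}| \le (1+\alpha)\frac{OPT}{n}$ — follows because if too many input permutations had large $\Delta(\cdot,\sigma^*)$ then $OPT = \sum_{\sigma} |I_\sigma|$ forces the chain's fourth element (and everything below it) to have bounded misalignment; more precisely the ordering of the chain by $|I|$ plus a counting/averaging argument on $\sum_\sigma |I_\sigma| = OPT$ gives the bound. Since {\tt ULAM1} does not know which five sampled points are the good ones, it iterates over all $\binom{\eta}{5}$ quintuples — a constant number — and for each runs the construction of the next paragraph, adding the resulting permutation to the output list; the list size is $O(\eta) + \binom{\eta}{5} = O(1)$, which I would absorb into $O(\eta)$ by enlarging $\eta$ if needed.

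Given a good quintuple $\sigma_1, \dots, \sigma_5$, the algorithm builds the tournament $T = ([d], E)$ with an arc $a \to b$ iff $a$ precedes $b$ in at least three of the five permutations, defines $\Bad \subseteq [d]$ as the symbols appearing in $\ge 2$ of the $I_{\sigma_i}$, and invokes {\tt ROBUST-SORT} on $T$. The key observations are: (1) every pair of $\Good$ symbols is oriented in $T$ consistently with $\sigma^*$, so $T$ is $|\Bad|$-imperfect with respect to $\sigma^*$ (restricted appropriately); (2) by (P1), $|\Bad| \le 4\veps|I_{\sigma_1}| + 3\veps|I_{\sigma_2}| + 2\veps|I_{\sigma_3}| + \veps|I_{\sigma_4}| \le 10\veps|I_{\sigma_4}|$, and with (P2) this is $\le 10\veps(1+\alpha)\frac{OPT}{n}$, in particular $\le d/10$; (3) Theorem~\ref{thm:rsort} (used with a fixed small $\epsilon$, say absorbed into constants) gives an output $\tilde\sigma$ with $\E[\lcs(\sigma^*, \tilde\sigma)] \ge d - 4|\Bad|$, hence — after one more standard argument to convert the expectation bound into ``with probability $\ge 0.99$'' by a Markov/boosting step, e.g.\ running {\tt ROBUST-SORT} a constant number of times and keeping the best by objective value — a permutation with $\Delta(\tilde\sigma, \sigma^*) \le 4|\Bad| \le 40\veps(1+\alpha)\frac{OPT}{n}$. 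The triangle inequality then yields $Obj(S, \tilde\sigma) \le \sum_{x \in S}(\Delta(x,\sigma^*) + \Delta(\tilde\sigma,\sigma^*)) = OPT + n \cdot \Delta(\tilde\sigma,\sigma^*) \le (1 + 40\veps(1+\alpha))\,OPT \le (2-\delta)\,OPT$ for the chosen constants. The running time is $\tilde O(d)$: building $T$ is $O(d)$ per queried pair, {\tt ROBUST-SORT} queries and runs in $\tilde O(d)$ by Theorem~\ref{thm:rsort}, and the constant number of quintuples and repetitions only multiplies by a constant.

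I expect the main obstacle to be the combinatorial extraction in the second branch: making precise the claim that ``few heavy permutations and few near-optimal permutations'' forces a \emph{large} (constant-fraction) family of mutually well-separated permutations whose $|I|$-ordering gives both (P1) \emph{and} (P2) simultaneously. The tension is that (P1) wants the five chosen permutations to be pairwise far in the $I$-overlap sense, while (P2) wants the fourth smallest to still be close to $\sigma^*$ in absolute terms; reconciling these requires choosing the clump-threshold and the ``near-optimal'' threshold $(1 \pm \alpha)\frac{OPT}{n}$ compatibly and doing the averaging argument on $\sum_\sigma |I_\sigma| = OPT$ carefully. A secondary, more routine obstacle is the expectation-to-high-probability conversion for the output of {\tt ROBUST-SORT}, since Theorem~\ref{thm:rsort} only bounds $\E[\lcs]$; here I would either repeat and select by the (efficiently computable) objective $Obj(S, \cdot)$, or note that $Obj(S, \cdot)$ itself is the quantity we ultimately bound and apply Markov to the random variable $Obj(S,\tilde\sigma) - OPT \ge 0$ together with a constant number of independent trials.
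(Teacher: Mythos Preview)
Your overall architecture matches the paper's: a case split where either a uniformly random input point is already a $(2-\delta)$-center, or five random points are pairwise well-separated and {\tt ROBUST-SORT} on their majority tournament yields a good center. The {\tt ROBUST-SORT} portion (your third paragraph) is essentially identical to the paper's argument.

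However, the case analysis in your first two paragraphs has a genuine flaw and a structural gap.

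\textbf{The inequality in your definition of ``heavy'' is backwards.} You call $\sigma$ heavy (the center of a clump) when many $\tau$ satisfy $|I_\sigma \cap I_\tau| \le \veps |I_\sigma|$. But small overlap means $\sigma$ and $\tau$ are \emph{far apart} in the relevant sense, not clumped. The argument from \cite{cdk23} that you invoke --- a clumped point is a good center --- uses the strengthened triangle inequality $\Delta(\sigma,\tau) \le |I_\sigma| + |I_\tau| - |I_\sigma \cap I_\tau|$, which produces a gain precisely when $|I_\sigma \cap I_\tau|$ is \emph{large}. The paper defines the ball $B(x_j,\veps) = \{x_i \in \bar F : |I_i \cap I_j| > \veps |I_j|\}$ (large overlap) and shows that if $OPT_{B(x_j,\veps)} \ge c_3 \cdot OPT_{\bar F}$ then $x_j$ is a $(2-\delta)$-center. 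With your inequality direction you cannot conclude anything in your first branch.

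\textbf{The paper's case split is finer, and you are missing two of its branches.} Before the clump/spread dichotomy, the paper first (i) handles the $Near$ set (points with $|I_\sigma| \le (1-\alpha)\,OPT/n$): if it is large, a random point is a good center; and then (ii) handles the $Far$ set (points with $|I_\sigma| > (1+\alpha)\,OPT/n$): if $OPT_F \ge c_2 \cdot OPT$, a separate averaging argument shows a random point works. Only after both are ruled out does the paper know that the \emph{middle} band $\bar F = \{\sigma : (1-\alpha)\,OPT/n < |I_\sigma| \le (1+\alpha)\,OPT/n\}$ carries almost all of $OPT$, and it searches for the five permutations \emph{inside $\bar F$}. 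This is exactly what makes (P2) automatic: every element of $\bar F$ already satisfies $|I_\sigma| \le (1+\alpha)\,OPT/n$. Your plan to recover (P2) after the fact by an averaging argument on $\sum_\sigma |I_\sigma| = OPT$ is the very obstacle you flag at the end; the paper sidesteps it entirely by restricting to $\bar F$ from the outset. You also have no analogue of branch~(ii), and without it you cannot lower-bound $|\bar F \setminus G|$ and hence cannot guarantee the random sample contains five well-separated points.

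Once the inequality is fixed and the $Near/\bar F/Far$ decomposition is in place, your greedy chain extraction is precisely what the paper does (it shows five random points land well-separated in $\bar F \setminus G$ with probability at least $1/20^5$, since each ball covers at most $|\bar F \setminus G|/5$ points), and the rest goes through. Your remark on converting the expectation guarantee of Theorem~\ref{thm:rsort} into a high-probability statement via Markov and constant repetition is well taken; the paper is in fact somewhat casual on exactly this point.
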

The details of the proof can be found in ~\Cref{sec:ulamdetails}.
Next, we see how to use the above lemma to obtain a linear time algorithm for the Ulam-$k$-median problem.

\subsection{From Ulam-$1$-median to Ulam-$k$-median}
We use the algorithm {\tt ULAM1} in the sampling lemma above as a subroutine to design an algorithm for the Ulam-$k$-median problem.
This algorithm is based on the $D$-sampling technique.
The algorithm and its analysis follow the $D$-sampling based $(1+\veps)$-approximation results of \cite{jks} for the $k$-means problem.
For readers familiar with the $D$-sampling-based algorithms of \cite{jks}, our approach can be summarised in the following manner: {\it the high-level algorithm description and its analysis are the same as the algorithm for the $k$-means problem in \cite{jks}, except that here we use our sampling Lemma above, instead of the Inaba's sampling lemma~\cite{inaba} that is used for the $k$-means problem.}
Here is a short summary for the unfamiliar reader.
Let $S_1, ..., S_k$ be the dataset partition that denotes the optimal $k$ clustering, and let $\sigma^*_1, ..., \sigma^*_k$ denote the optimal centers, respectively. Let us try to use {\tt ULAM1} to find good centers for each of $S_1, ..., S_k$. We would need $\eta$ uniformly sampled points each from $S_1, ..., S_k$. 
The issue is that the optimal clustering $S_1, ..., S_k$ is not known.
If the clusters were balanced, i.e., $|S_1| \approx |S_2| \approx ... \approx |S_k|$, then uniformly sampling $O(\eta k)$ points from $S$ and then considering all possible partitions of these points would give the required uniform samples $X_1, ..., X_k$ from each of the optimal clusters. We can then use {\tt ULAM1($X_i$)} to find good center candidates for $S_i$. 
In the general case, where the clusters may not be balanced, we use the $D$-sampling technique to boost the probability of sampling from small-sized optimal clusters, which may get ignored when sampling uniformly at random from $S$. We defer the details to ~\Cref{sec:ulamdetails}.

\section{Conclusion and open problems}
We give an algorithm for robust sorting. For a total order $\pi$ on a set $V$ of elements, we are given an imperfect comparison operator that behaves in the following manner: There is an (unknown) subset $B \subset V$ such that for every pair $a, b \in V\setminus B$, the comparator is consistent with the total order $\pi$, but if even one of $a, b$ is from $B$, then the comparator can give an arbitrary (but deterministic) response. We give an algorithm that outputs an ordering $\pi'$ such that the order of at least $|V| - (3+\epsilon) \cdot |B|$ elements are consistent with $\pi$ (i.e., $LCS(\pi, \pi') \geq |V| - (3+\epsilon)|B|$). Our algorithm runs in time $\tilde{O}(|V|/\epsilon^2)$. This means that it does not compare every pair of elements. 
Even though $3 + \epsilon$ was sufficient for the Ulam median application, it is natural to ask whether the factor $3 + \epsilon$ can be improved. More concretely, we identify the following open problem: 

\begin{quote}
    {\em What is the best constant $\alpha$, such that there exists a randomized algorithm that asks $\tilde{O}(|V|)$ queries in expectation, and outputs an ordering $\pi'$ with $\E[\LCS(\pi, \pi')] \geq |V| -  \alpha |B|$? In particular, as our algorithm provides a $(3+\eps)$-approximation, it would be interesting to see if $\alpha \leq 3$.}
\end{quote}

The above open question restricts the number of queries to $\tilde{O}(|V|)$. However, the problem remains interesting even if there is no bound on the number of queries. In particular, we ask the following open question:
    \begin{quote}
        
     {\em What is the best constant $\beta$, such that there exists a polynomial time randomized algorithm with no bound on the number of queries, that outputs an ordering $\pi'$ with $\E[\LCS(\pi, \pi')] \geq |V| -  \beta |B|$? In particular, as a reduction to FVST yields a $3$-approximation to \robust, it would be interesting to see if $\beta < 3$.}
    \end{quote}

\noindent
{\bf{Discussion.}} On the applied side, our robust sorting model has potential applications in certain problems involving ranking based on the subjective opinions of experts. The element of subjectivity makes ranking very tricky. A few examples are ranking Gymnastics, ice skating, and artistic performances. More serious scenarios include ranking organ recipient lists, such as a kidney transplant waitlist, where moral considerations may be required to break ties after objective processing. When subjectivity is involved in evaluation, it is harder for any expert to rank all the candidates than to answer pairwise comparison queries such as ``{\em whom do you prefer between the two given candidates?}" The relevant question is whether we can use such comparison queries to construct a reasonable ranking. The issue in such a process is that finding an expert without an explicit or implicit bias towards/against particular individuals is hard. The typical solution used is to take the opinions of multiple experts with the assumption that these individual biases should cancel out when considering the majority opinion. However, this may not be true, and the bias towards/against a small set of (unknown) individuals may remain. Let us call this set of individuals $B$ and let the remaining set of candidates be $G$. So, what we have is pairwise preference information that is consistent with the true ranking of individuals in $G$, but the comparison queries are arbitrary when the pair contains individuals from the biased set $B$. In such a crowdsourced setting, every pairwise comparison comes with a cost, and hence, the total number must be minimized (e.g., asking a judge to compare every pair of candidates may be too cumbersome). Furthermore, it can be argued that a sublinear number of queries is not sufficient to draw any reasonable ranking list. So, our nearly linear query algorithm may be an effective option in drawing out an initial list on which further processing could be done.


\addcontentsline{toc}{section}{References}
\bibliographystyle{alpha}
\bibliography{main}

\newcommand{\etalchar}[1]{$^{#1}$}
\begin{thebibliography}{CAGK{\etalchar{+}}19}

\bibitem[ACN08]{acn08}
Nir Ailon, Moses Charikar, and Alantha Newman.
\newblock Aggregating inconsistent information: Ranking and clustering.
\newblock {\em J. ACM}, 55(5), November 2008.

\bibitem[BGJK20]{bgjk20}
Anup Bhattacharya, Dishant Goyal, Ragesh Jaiswal, and Amit Kumar.
\newblock {On Sampling Based Algorithms for k-Means}.
\newblock In Nitin Saxena and Sunil Simon, editors, {\em 40th IARCS Annual Conference on Foundations of Software Technology and Theoretical Computer Science (FSTTCS 2020)}, volume 182 of {\em Leibniz International Proceedings in Informatics (LIPIcs)}, pages 13:1--13:17, Dagstuhl, Germany, 2020. Schloss Dagstuhl -- Leibniz-Zentrum f{\"u}r Informatik.

\bibitem[BJK18]{bjk18}
Anup Bhattacharya, Ragesh Jaiswal, and Amit Kumar.
\newblock Faster algorithms for the constrained k-means problem.
\newblock {\em Theory of Computing Systems}, 62(1):93--115, 2018.

\bibitem[BM08]{bm08}
Mark Braverman and Elchanan Mossel.
\newblock Noisy sorting without resampling.
\newblock In {\em Proceedings of the Nineteenth Annual ACM-SIAM Symposium on Discrete Algorithms}, SODA '08, page 268–276, USA, 2008. Society for Industrial and Applied Mathematics.

\bibitem[BPR{\etalchar{+}}17]{byrka17}
Jaros\l{}aw Byrka, Thomas Pensyl, Bartosz Rybicki, Aravind Srinivasan, and Khoa Trinh.
\newblock An improved approximation for k-median and positive correlation in budgeted optimization.
\newblock {\em ACM Trans. Algorithms}, 13(2), March 2017.

\bibitem[CAGK{\etalchar{+}}19]{cgkll19}
Vincent Cohen-Addad, Anupam Gupta, Amit Kumar, Euiwoong Lee, and Jason Li.
\newblock {Tight FPT Approximations for k-Median and k-Means}.
\newblock In Christel Baier, Ioannis Chatzigiannakis, Paola Flocchini, and Stefano Leonardi, editors, {\em 46th International Colloquium on Automata, Languages, and Programming (ICALP 2019)}, volume 132 of {\em Leibniz International Proceedings in Informatics (LIPIcs)}, pages 42:1--42:14, Dagstuhl, Germany, 2019. Schloss Dagstuhl -- Leibniz-Zentrum f{\"u}r Informatik.

\bibitem[CDK21]{cdk21}
Diptarka Chakraborty, Debarati Das, and Robert Krauthgamer.
\newblock {\em Approximating the Median under the Ulam Metric}, pages 761--775.
\newblock 2021.

\bibitem[CDK23]{cdk23}
Diptarka Chakraborty, Debarati Das, and Robert Krauthgamer.
\newblock {Clustering Permutations: New Techniques with Streaming Applications}.
\newblock In Yael Tauman~Kalai, editor, {\em 14th Innovations in Theoretical Computer Science Conference (ITCS 2023)}, volume 251 of {\em Leibniz International Proceedings in Informatics (LIPIcs)}, pages 31:1--31:24, Dagstuhl, Germany, 2023. Schloss Dagstuhl -- Leibniz-Zentrum f{\"u}r Informatik.

\bibitem[FK99]{fk99}
Alan Frieze and Ravi Kannan.
\newblock Quick approximation to matrices and applications.
\newblock {\em Combinatorica}, 19(2):175--220, 1999.

\bibitem[FRPU94]{FRPU94}
Uriel Feige, Prabhakar Raghavan, David Peleg, and Eli Upfal.
\newblock Computing with noisy information.
\newblock {\em {SIAM} J. Comput.}, 23(5):1001--1018, 1994.

\bibitem[GJK20]{gj20}
Dishant Goyal, Ragesh Jaiswal, and Amit Kumar.
\newblock {FPT Approximation for Constrained Metric k-Median/Means}.
\newblock In Yixin Cao and Marcin Pilipczuk, editors, {\em 15th International Symposium on Parameterized and Exact Computation (IPEC 2020)}, volume 180 of {\em Leibniz International Proceedings in Informatics (LIPIcs)}, pages 14:1--14:19, Dagstuhl, Germany, 2020. Schloss Dagstuhl -- Leibniz-Zentrum f{\"u}r Informatik.

\bibitem[GK99]{gk99}
Sudipto Guha and Samir Khuller.
\newblock Greedy strikes back: Improved facility location algorithms.
\newblock {\em Journal of Algorithms}, 31(1):228--248, 1999.

\bibitem[GLLP19]{gllp19}
Barbara Geissmann, Stefano Leucci, Chih-Hung Liu, and Paolo Penna.
\newblock {Optimal Sorting with Persistent Comparison Errors}.
\newblock In Michael~A. Bender, Ola Svensson, and Grzegorz Herman, editors, {\em 27th Annual European Symposium on Algorithms (ESA 2019)}, volume 144 of {\em Leibniz International Proceedings in Informatics (LIPIcs)}, pages 49:1--49:14, Dagstuhl, Germany, 2019. Schloss Dagstuhl -- Leibniz-Zentrum f{\"u}r Informatik.

\bibitem[GLLP20]{gllp20}
Barbara Geissmann, Stefano Leucci, Chih-Hung Liu, and Paolo Penna.
\newblock Optimal dislocation with persistent errors in subquadratic time.
\newblock {\em Theory of Computing Systems}, 64(3):508--521, 2020.

\bibitem[GX23]{gx23}
Yuzhou Gu and Yinzhan Xu.
\newblock Optimal bounds for noisy sorting.
\newblock In {\em Proceedings of the 55th Annual ACM Symposium on Theory of Computing}, STOC 2023, page 1502–1515, New York, NY, USA, 2023. Association for Computing Machinery.

\bibitem[IKI94]{inaba}
Mary Inaba, Naoki Katoh, and Hiroshi Imai.
\newblock Applications of weighted {V}oronoi diagrams and randomization to variance-based {$k$}-clustering: (extended abstract).
\newblock In {\em Proceedings of the tenth annual symposium on Computational geometry}, SCG '94, pages 332--339, New York, NY, USA, 1994. ACM.

\bibitem[JKS14]{jks}
Ragesh Jaiswal, Amit Kumar, and Sandeep Sen.
\newblock A simple {$D^2$}-sampling based {PTAS} for {$k$}-means and other clustering problems.
\newblock {\em Algorithmica}, 70(1):22--46, 2014.

\bibitem[KK07]{kk07}
Richard~M. Karp and Robert Kleinberg.
\newblock Noisy binary search and its applications.
\newblock SODA '07, page 881–890, USA, 2007. Society for Industrial and Applied Mathematics.

\bibitem[KMS07]{ms07}
Claire Kenyon-Mathieu and Warren Schudy.
\newblock How to rank with few errors.
\newblock In {\em Proceedings of the Thirty-Ninth Annual ACM Symposium on Theory of Computing}, STOC '07, page 95–103, New York, NY, USA, 2007. Association for Computing Machinery.

\bibitem[KSS10]{kss}
Amit Kumar, Yogish Sabharwal, and Sandeep Sen.
\newblock Linear-time approximation schemes for clustering problems in any dimensions.
\newblock {\em J. ACM}, 57(2):5:1--5:32, February 2010.

\bibitem[LMM{\etalchar{+}}21]{fvst}
Daniel Lokshtanov, Pranabendu Misra, Joydeep Mukherjee, Fahad Panolan, Geevarghese Philip, and Saket Saurabh.
\newblock 2-approximating feedback vertex set in tournaments.
\newblock {\em ACM Trans. Algorithms}, 17(2), April 2021.

\bibitem[MVWV16]{mnich16}
Matthias Mnich, Virginia Vassilevska~Williams, and L\'{a}szl\'{o}~A. V\'{e}gh.
\newblock {A 7/3-Approximation for Feedback Vertex Sets in Tournaments}.
\newblock In Piotr Sankowski and Christos Zaroliagis, editors, {\em 24th Annual European Symposium on Algorithms (ESA 2016)}, volume~57 of {\em Leibniz International Proceedings in Informatics (LIPIcs)}, pages 67:1--67:14, Dagstuhl, Germany, 2016. Schloss Dagstuhl -- Leibniz-Zentrum f{\"u}r Informatik.

\bibitem[WGW22]{wgw22}
Ziao Wang, Nadim Ghaddar, and Lele Wang.
\newblock Noisy sorting capacity.
\newblock In {\em 2022 IEEE International Symposium on Information Theory (ISIT)}, pages 2541--2546, 2022.

\end{thebibliography}

\newpage

\appendix

\section{The Ulam-$k$-Median Problem}
\label{sec:ulamdetails}
In this section, we give a complete discussion on the Ulam $k$-median problem and our linear time approximation algorithm for this problem. 
The Ulam $k$-median problem is a special case of the $k$-median problem with which we start our discussion.
The $k$-Median problem on a general metric space $(\X, D)$ is defined in the following manner -- {\it Given a client set $S \subseteq \X$, a facility set $F \subseteq \X$, and an integer $k>0$, find a subset $C \subseteq F$ (called centers) such that $|C| = k$ and the cost function, $Obj(S, C) \coloneq \sum_{s \in S} \min_{c \in C} D(s, c)$, is minimized.} The problem size is denoted by $n = |S| + |F|$. It is known that the above metric $k$-median problem is hard to approximate within a factor of $(1+2/e)$ \cite{gk99}. On the upper bound side, various constant-factor approximation algorithms are known. The current best polynomial time approximation guarantee for the problem is $(2.675+\varepsilon)$~\cite{byrka17}. The problem has also been studied in the parameterized setting with the output size, $k$, as the parameter. Fixed Parameter Tractable (FPT) approximation algorithm (i.e., an algorithm with a running time $f(k, \veps) \cdot n^{O(1)}$) with a guarantee $(1+2/e + \veps)$, which closely matches the lower bound, is known~\cite{cgkll19}. 

The Ulam $ k$-median problem is simply the $ k$-median problem defined over the Ulam metric $(\Pi^d, \Delta)$, where $\Pi^d$ is the set of all permutations of $1, 2, 3, ..., d$, which can also be seen as all $d$-length strings with distinct symbols from set $\{1, 2, ..., d\}$.
The distance function $\Delta$ is defined as $\Delta(x, y) \coloneq d - \lcs(x, y)$, where $\lcs(x, y)$ denotes the length of the {\em longest common subsequence} of permutations $x, y \in \Pi^d$. For example, $\Pi[3] = \{123, 132, 213, 231, 312, 321\}$ and $\Delta(123, 213) = 3 - \lcs(123, 213) = 3 - 2 = 1$. 
Let us highlight some of the key aspects of the Ulam $k$-median problem where the computational complexity discussions deviate from those for the $k$-median problem. In the Ulam $k$-median problem, the facility set $F$ contains all the elements of the metric space, i.e., $F = \Pi^d$. So, $|F| = |\Pi^d| = (d!)$. 
In the $k$-median problem, the input size is $(|S| + |F|)$, and hence an efficient algorithm is one that runs in time polynomial in $(|S|+|F|)$. In the case of the Ulam $k$-median problem, the input is the set $S$ of $n$ strings from $\Pi^d$. The space required to describe such an input is $n d$ (assuming $\log{d}$ word size). 
So, an efficient algorithm for the Ulam $k$-median problem is one that runs in time polynomial in $n d$.
In this work, we will give a nearly linear time FPT algorithm (with $k$ as the parameter), meaning that our algorithm's running time should be $ \tilde{O}(f(k) \cdot nd)$.

We start with the Ulam $1$-median problem, also known as the {\em median string} problem for the Ulam metric. The problem is to find a string $s^* \in \Pi^d$ such that $Obj(S, s^*) \coloneqq \sum_{s \in S} \Delta(s, s^*)$ is minimized. The exhaustive search algorithm runs in time $O(n \cdot (d!))$ which is prohibitive. There is a $2$-factor approximation algorithm that simply picks the least cost string from the given dataset $S$. The approximation analysis follows from the triangle inequality. Let $\bar{s} \in S$ be the closest string in $S$ from $s^*$. Then we have:
$$
Obj(S, \bar{s}) = \sum_{s \in S} \Delta(s, \bar{s}) = \sum_{s \in S} (\Delta(s, s^*) + \Delta(s^*, \bar{s})) \leq 2 \cdot \sum_{s \in S} \Delta(s, s^*) = 2 \cdot OPT.
$$
Note that the above algorithm uses only the triangle inequality and does not use any other properties specific to the Ulam metric. 
This 2-factor approximation for 1-median also extends to the $k$-median setting since one can pick the best subset of size $k$ from $S$ in time $\Omega(n^k)$. Breaking the approximation guarantee of $2$ for specific metrics, such as the Ulam metric, was open for a long time. Recently, a sequence of works~\cite{cdk21,cdk23} breaks this 2-factor barrier for the Ulam $k$-median problem. The algorithm is an FPT algorithm that gives an approximation guarantee of $(2 - \delta)$ for a very small but fixed constant $\delta > 0$. The running time of the FPT algorithm is $(k \log{nd})^{O(k)} nd^3$. 
Note that this parameterized algorithm is slightly superlinear in $n$ and cubic in $d$.
This contrasts the several FPT approximation algorithms for the $k$-median/means problems in the literature~\cite{kss,jks,bjk18,bgjk20,gj20} that have linear running time in the input size. For example, the running time of the FPT $(1+\veps)$-approximation algorithms for the Euclidean $k$-median/means problems in \cite{kss,jks,bjk18,bgjk20}  is linear in $nd$, where $d$ is the dimension. Similarly, the running time for the FPT constant factor approximation algorithms~\cite{gj20} in the metric setting is linear in $n$. The starting point of this work was the following question: 
\begin{quote}
{\it Is it possible to break the barrier of 2-approximation for the Ulam $k$-median problem using an FPT algorithm with a {\bf linear} running time in the input size, $nd \log{d}$?}
\end{quote}
We needed two sets of ideas, one for reducing the time dependency on $n$ from superlinear to linear and another for reducing the dependency on $d$ from cubic to $\tilde{O}(d)$. 

\paragraph{\bf For linear dependence on $n$}: For improving the dependency on $n$, we use the $ D$-sampling-based ideas that have been used in the past (e.g., \cite{jks,bjk18}) to remove the $(\log{n})^k$ factors from the running time and make the dependency linear in $n$. 
Significant technical challenges existed in extending the $D$-sampling techniques to the Ulam $k$-median setting.
For example, the linear time algorithms for $k$-means/median~\cite{jks,bjk18} crucially use a simple sampling-based subroutine for the 1-median problem where one can find a good center for a set of points from a small randomly sampled subset. The algorithm of \cite{cdk23} also uses a subroutine for the 1-median problem. However, the guarantee of finding a good center from a small randomly sampled subset is not shown in \cite{cdk23}. What is shown is the guarantee that five points always exist in the dataset from which a good center can be found. This means that one must try $O(n^5)$ possibilities for the 1-median setting and $O(n^{5k})$ possibilities for the k-median setting. The number of possibilities can be reduced using a {\em coreset} in place of the dataset. However, this introduces a dependence on $(\log{n})^{O(k)}$ since a coreset size has a logarithmic dependency on $n$. We remove this dependence by showing that one can find a good center from a small randomly sampled subset. This enables the framework of \cite{jks,bjk18} to be applicable in the setting of the Ulam $k$-median problem. We give the details in the following subsections.

\paragraph{\bf For linear dependence on $d$}: Interestingly, the quest to reduce the dependency from cubic in \cite{cdk23} to almost linear in $d$ gave rise to a fundamental problem on {\em robust sorting} (or sorting using an imperfect comparator) that we think could have independent applications. Hence, we decided to make the robust sorting framework and algorithm the main subject of discussion in this paper. 
Let us outline the key aspects of improving the dependency on $d$ using our robust sorting framework. 
We must first outline the $(2-\delta)$-approximation algorithm of \cite{cdk23} at a high level. Focusing on the 1-median problem will be sufficient since the algorithm for $k$-median is through its extension.
Consider the 1-median problem -- {\it given a set $S \subset \Pi^d$ of size $n$, find $s^* \coloneqq \argmin_{s \in \Pi^d} \sum_{x \in S} \Delta(x, s)$.}
For any string $s$, let $slcs(s, s^*)$ denote the subset of symbols included in the longest common subsequence of $s$ and $s^*$ and let $I_{s} \coloneqq [d]\setminus slcs(s, s^*)$ be the symbols that are not part of the longest common subsequence, which can also be thought of symbols in $s$ that are {\em misaligned} with the optimal string $s^*$.
The algorithm of Chakraborty et al.~\cite{cdk23} shows that there always exists a set of five strings $s_1, s_2, s_3, s_4, s_5 \in S$ with the following three properties: 
\begin{enumerate}
\item[(A)] $|I_{s_1}| \leq |I_{s_2}| \leq |I_{s_3}| \leq |I_{s_4}| \leq |I_{s_5}|$
\item[(B)] For every pair $j > i$, $|I_{s_i} \cap I_{s_j}| \leq \veps \cdot |I_{s_i}|$.
\item[(C)] $|I_{s_4}| \leq (1+\alpha)\frac{OPT}{n}$, for some small constant $\alpha$.
\end{enumerate}
The algorithm of \cite{cdk23} uses such five strings to construct a string $\tilde{s}$ with good agreement with the optimal string $s^*$. 
Such a string $\tilde{s}$ would be a good candidate for a $(2 - \delta)$-approximate center.
Interestingly, towards finding a string $\tilde{s}$ with good agreement with $s^*$ we observe that for most pairs $(a, b)$ of symbols, the five strings $s_1, ..., s_5$ help find the correct relative ordering of $(a, b)$ in $s^*$. Let us define the set, $Bad$ of bad symbols as:
$$
Bad \coloneqq \cup_{i \neq j \in \{1, 2, 3, 4, 5\}} (I_{s_i} \cap I_{s_j}), 
$$
and the set of good elements as $Good \coloneqq [d]\setminus Bad$.
The reason for categorizing the $d$ symbols into good and bad is that if we consider any symbol $a \in Good$, then it belongs to at most one of the five sets $I_{s_1}, ..., I_{s_5}$. 
This means that for any pair $(a, b)$ of symbols in $Good$ (i.e., $a, b \in Good$), in at least three out of the five strings $s_{1}, ..., s_{5}$, the relative ordering of $a$ and $b$ match that in the optimal string $s^*$. In other words, for any such pair, the five strings reveal their relative ordering in $s^*$. We can use this information to construct a string that has good agreement with $s^*$. 
The main issue is that we do not know which symbols are bad or good. 
We know that the number of bad symbols is much smaller than that of good symbols. This follows from property (B):
$$
|Bad| \leq 4 \veps |I_{s_1}| + 3 \veps |I_{s_2}| + 2 \veps |I_{s_3}| + \veps |I_{s_4}| \leq 10 \veps |I_{s_4}|.
$$
Using $\veps \leq 1/100$, we get that the fraction of bad symbols is not more than $(1/10)^{th}$.
Let us construct a tournament graph $T = ([d], E)$ where the nodes are the $d$ symbols, and there is a directed edge from symbol $a$ to symbol $b$ if and only if $a$ comes before $b$ in at least three out of five strings $s_{1}, ..., s_{5}$. The goal is to use $T$ to find an ordering such that the ordering of ``most" symbols matches that in $s^*$. This is where our algorithm deviates from that of \cite{cdk23}. 
The algorithm in \cite{cdk23} achieves this by finding and removing cycles of length three, thereby removing all bad symbols since every such cycle must have a bad symbol. Once all the bad symbols have been removed, the remaining graph has only good symbols . Since the relative ordering of every pair of good symbols matches that in $s^*$, the correct ordering of the remaining symbols is obtained. The time cost of this algorithm in \cite{cdk23} is $O(d^3)$.\footnote{One can also use the 2-factor approximation algorithm of Lokshtanov et al.~\cite{fvst} for the Feedback Vertex Set problem on Tournament graphs (FVST). However, the running time of their algorithm, $O(d^{12})$, is even worse than cubic.}
We give an algorithm with time linear in $d$ to achieve the same goal using our {\em robust sorting} framework. 
Before continuing with the remaining analysis, it is important to appreciate the fact that an $\tilde{O}(d)$-time algorithm for sorting with an imperfect comparator does not even get to make all $\binom{d}{2}$ pairwise comparisons (and hence is sublinear in the size of the tournament $T$).
When applying our robust sorting framework, total ordering is per the ordering of symbols in the string $s^*$, and the comparison operator {\tt $COMPARE_{s_{1}, ..., s_{5}}(a, b)$} checks if at $a$ is before $b$ in at least three out of five of the strings $s_{1}, ..., s_{5}$.
The remaining aspects, including the good and bad symbols, perfectly align with our current goal. 
That is, the comparison operator gives correct relative ordering (in terms of being consistent with $s^*$) for a pair in which both symbols are good, but if even one symbol in the pair is bad, then no such guarantees exist.
Note that the running time of our algorithm {\tt ROBUST-SORT} is nearly linear in $d$ (ignoring polylogarithmic factors in $d$). 
Using Theorem~\ref{thm:rsort}, we get that that {\tt ROBUST-SORT} returns a string $\tilde{s}$ with the property that $\lcs(s^*, \tilde{s}) \geq d - 4 |Bad|$. 
The rest of the analysis is again similar to \cite{cdk23}. We have:
$$
\Delta(\tilde{s}, s^*) = d - \lcs(\tilde{s}, s^*) \leq 4 |Bad| \stackrel{(B)}{\leq} 40 \veps |I_{i_4}| \stackrel{(C)}{\leq} 40 \veps (1 + \alpha) \frac{OPT}{n}.
$$
Using this, we can now calculate the cost with respect to the center $\tilde{s}$.
$$
Obj(S, \tilde{s}) = \sum_{x \in S} \Delta(x, \tilde{s}) 
\stackrel{(triangle)}{\leq} \sum_{x \in S} \left( \Delta(x, s^*) + \Delta(\tilde{s}, s^*)\right)
\leq \left( 1 + 40 \veps (1+\alpha)\right) \cdot OPT \\
\leq (1.999) \cdot OPT.
$$
The last inequality is by a specific choice of $\veps$ and $\alpha$. We summarize our main result in the next theorem and give the detailed analysis in the following subsections. 

\begin{theorem}[Main theorem for Ulam-$k$-Median]
There is a randomized algorithm for the Ulam $k$-median problem that runs in time $\tilde{O}(nd \cdot k^k)$ and returns a center set $C$ with $Obj(S, C) \leq (2-\delta) \cdot OPT$ with probability at least $0.99$, where $\delta$ is a small but fixed constant.
\end{theorem}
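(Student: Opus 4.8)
The plan is to lift the constant-size sampling lemma for the Ulam-$1$-median problem (the \texttt{ULAM1} routine, which runs in $\tilde{O}(d)$ time and, given $\eta$ uniform samples from a point set, outputs $O(\eta)$ candidates one of which is a $(2-\delta)$-center with probability $\ge 0.99$) to the $k$-median setting via the iterative $D$-sampling framework of \cite{jks}, with \texttt{ULAM1} playing the role that Inaba's sampling lemma plays for $k$-means. Write $S_1,\dots,S_k$ for the clusters of an optimal $k$-clustering, $\sigma^*_1,\dots,\sigma^*_k$ for the optimal centers, and $OPT_i \coloneqq Obj(S_i,\sigma^*_i)$, so $OPT=\sum_i OPT_i$. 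The target is to produce, on some computation path, a candidate $\tilde\sigma_i$ for each $i$ with $Obj(S_i,\tilde\sigma_i)\le (2-\delta)\,OPT_i + (\text{slack}_i)$ where $\sum_i \text{slack}_i \le \veps\cdot OPT$; outputting the best size-$k$ subset among all candidates produced then gives a $(2-\delta')$-approximation for a slightly smaller fixed constant $\delta'>0$.

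First I would recall why uniform sampling from $S$ alone fails: a tiny optimal cluster contributes a vanishing fraction of a uniform sample, so \texttt{ULAM1} never sees $\eta$ of its points. The remedy, exactly as in \cite{jks}, is to run $k$ rounds of $D$-sampling: maintain the set of centers chosen so far and, in each round, draw a batch of $\Theta(\eta k)$ points from $S$ with probability proportional to the distance to the current center set. Having fixed good centers for some $i$ of the clusters, a standard dichotomy shows that either the remaining $k-i$ clusters contribute only an $O(\veps)$ fraction of the current total cost — in which case they may be safely absorbed by already-chosen centers — or they carry a constant fraction of the $D$-sampling mass, so with constant probability the batch contains $\ge \eta$ points that fall inside a single uncovered cluster $S_j$ and are distributed essentially uniformly on $S_j$ (because the current centers are then far from all of $S_j$, the $D$-weights across $S_j$ are nearly equal). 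Branching over all ways to designate a size-$\eta$ block inside each batch — a tree with $k^{O(k)}$ leaves over the $k$ rounds — guarantees that along at least one path we feed \texttt{ULAM1} a genuine near-uniform $\eta$-sample of each $S_i$; on that path the sampling lemma yields a good center for every cluster with probability $\ge 0.99^k$, which is boosted back to $0.99$ by $2^{O(k)}$ independent repetitions of the whole procedure.

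The genuinely Ulam-specific step — and the one I expect to be the main obstacle — is the $D$-sampling accounting in the presence of the weaker-than-Euclidean per-cluster guarantee. Because \texttt{ULAM1} only returns a $(2-\delta)$-approximate $1$-median of the \emph{sampled} set rather than a $(1+\veps)$-approximation, one must verify (i) that on the already-covered clusters the chosen center is within a constant factor of $\sigma^*_i$, so that the residual cost attributable to covered clusters stays $O(1)\cdot\sum_{\text{covered}}OPT_i$ and the uncovered clusters indeed retain a constant share of the $D$-sampling mass whenever they are not already cheap; (ii) that a $D$-sample landing in an uncovered $S_j$ is uniform enough on $S_j$ to invoke the sampling lemma, which again reduces to comparing the current-center distance against the diameter of $S_j$ via the ordinary triangle inequality; and (iii) that the additive slack accumulated across the $k$ rounds — including the cost charged to clusters that get absorbed cheaply instead of getting their own round — telescopes to at most $\veps\cdot OPT$ for a suitably small constant $\veps$. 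The Ulam metric enters this argument only through the triangle inequality and the bound $\Delta(\cdot,\cdot)\le d$, so the potential-function analysis of \cite{jks} carries over with constants adjusted; the care lies in checking that the final constant $\delta>0$ survives the composition of the $1$-median loss, the $D$-sampling slack, and the probability bookkeeping.

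Finally I would tally resources: each \texttt{ULAM1} call costs $\tilde{O}(d)$; each $D$-sampling round costs $O(nd)$ to recompute distances from every point of $S$ to the current centers (each LCS-length evaluation being linear-time, or a sufficiently accurate near-linear-time estimate); there are $k$ rounds, a branching tree with $k^{O(k)}$ leaves, and $2^{O(k)}$ outer repetitions for amplification. Multiplying gives $\tilde{O}(nd\cdot k^{O(k)})$, i.e.\ the claimed $\tilde{O}(nd\cdot k^{k})$ after folding the polynomial-in-$k$ factors into the exponent and the $\tilde{O}$, and the overall success probability is $\ge 0.99$ by a union bound over the repetitions. The full details are deferred to the subsections below.
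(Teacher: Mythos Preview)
Your overall approach is correct and matches the paper's: lift the constant-size sampling lemma for Ulam-$1$-median to $k$-median via the $D$-sampling framework of \cite{jks}, with \texttt{ULAM1} playing the role of Inaba's lemma, branching over all size-$\eta$ subsets of each $D$-sampled batch of $\Theta(\eta k/\veps^2)$ points, and amplifying via $2^{O(k)}$ outer repetitions. The running-time accounting also lines up (one minor correction: an Ulam distance is computed in $O(d\log d)$ time via the LIS reduction, not linear time, but this is absorbed by the $\tilde O$).

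There is one imprecision in your point (ii) worth flagging before you fill in details. You write that the $D$-sample landing in an uncovered cluster $S_j$ is ``distributed essentially uniformly on $S_j$ (because the current centers are then far from all of $S_j$, the $D$-weights across $S_j$ are nearly equal)'', and that checking this ``reduces to comparing the current-center distance against the diameter of $S_j$ via the ordinary triangle inequality''. This is not how the argument runs, either in \cite{jks} or in the paper, and the claim as stated can fail: nothing prevents some points of $S_j$ from sitting arbitrarily close to already-chosen centers even when $S_j$ carries a constant share of the total $D$-mass, so the within-cluster $D$-distribution need not be near-uniform and no diameter comparison rescues it. The correct mechanism is to split $S_j$ into a low-weight part $L$ (points $p$ with $\Delta(p,C^{(i)})/\Delta(S_j,C^{(i)}) \le \frac{\veps/2}{|S_j|}$) and a high-weight part $H = S_j\setminus L$. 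A direct triangle-inequality calculation gives $\Delta(L,C^{(i)}) \le \frac{\veps}{2}\,\Delta(S_j,\sigma^*_j)$, so $L$ is absorbed by existing centers with $\veps$-slack; meanwhile every point of $H$ has $D$-probability at least $\frac{\veps^2}{4k\,|S_j|}$, so oversampling by $\Theta(k/\veps^2)$ together with a coupling argument yields, on some branch, a genuinely uniform $\eta$-sample from $H$ (not from $S_j$). Then \texttt{ULAM1} returns a $(2-\delta)$-center for $H$, and combining the two pieces gives the per-cluster invariant. Since you intend to import the \cite{jks} analysis you would presumably arrive at this split when writing the details, but the one-line justification you give now would not survive on its own.
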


\subsection{A sampling algorithm for Ulam-$1$-Median}
A crucial ingredient in obtaining FPT approximation schemes (i.e., $(1+\veps)$-approximation algorithm) for the $k$-means/median problems is a lemma that shows that for any optimal cluster, a $(1+\veps)$-approximate center for the cluster can be found using a small number (typically $O(\frac{1}{\veps})$) of uniformly sampled points from that cluster. Such a lemma is typically stated in terms of the 1-means/median problem since there is only one cluster. For the $1$-means problem, the lemma is known as Inaba's Lemma~\cite{inaba}. A similar lemma has been shown for the $1$-median problem~\cite{kss}. 
Such a lemma allows us to find good centers as long as we can manage a large enough set of uniformly sampled points from every optimal cluster.
If the optimal clusters are {\em balanced} (i.e., all clusters have a roughly equal number of points), then this task is simple -- {\em uniformly sample and consider all possible partitions}. If the clusters are not balanced, we can use the trick of $D$-sampling that boosts the probability of sampling points from small-sized clusters. This is the core of the argument in the $D$-sampling-based approximation schemes~\cite{jks,bjk18}. 
We give such a sampling lemma for the Ulam-$1$-median problem where the approximation guarantee is $(2-\delta)$ instead of $(1+\veps)$ as in the $k$-means/median problems.
In conjunction with $D$-sampling, this gives an FPT  $(2-\delta/100)$-approximation algorithm for the Ulam-$k$-median problem. Here $\delta$ is a small constant. 
We state this lemma more formally below: 

\begin{lemma}[A sampling lemma]\label{lemma:sampling}
Let $S$ denote any input for the Ulam-$1$-median problem, a subset of permutations of $1, 2, ..., d$. There exists a constant $\eta$ and an algorithm {\tt ULAM1} such that when given as input a set $X \subset S$ consisting of $\eta$ permutations chosen uniformly at random from $S$, {\tt ULAM1} outputs a list of $O(\eta)$ centers such that with probability at least 0.99, at least one center $\tilde{x}$ in the list satisfies $Obj(S, \tilde{x}) \leq (2-\delta) \cdot OPT$, where $\delta$ is a small but fixed constant. Moreover, the running time of {\tt ULAM1} is $\tilde{O}(d)$.
\end{lemma}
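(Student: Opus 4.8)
The plan is to follow the case analysis of Chakraborty et al.~\cite{cdk23}, but to upgrade each of their \emph{existential} statements (``there is a good center in $S$'', ``there are five permutations satisfying (P1)--(P2)'') into a \emph{counting} statement (``a constant fraction of $S$ are good centers'', ``a constant fraction of the $5$-subsets of a large subset of $S$ satisfy (P1)--(P2)''), so that a uniform sample of constant size $\eta$ already captures a usable object. Concretely, given the sample $X$ with $|X|=\eta$, {\tt ULAM1} outputs the candidate list consisting of (i) every permutation in $X$, and (ii) for each of the $\binom{\eta}{5}=O(1)$ five-subsets $\{\sigma_1,\dots,\sigma_5\}\subseteq X$, the permutations returned by a constant number of runs of $\RS$ on the tournament on $[d]$ whose arc $(a,b)$ is present iff $a$ precedes $b$ in at least three of $\sigma_1,\dots,\sigma_5$. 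By \Cref{thm:rsort} each $\RS$ call runs in $\tilde O(d)$ time and there are $O(1)$ of them, so the total running time is $\tilde O(d)$ and the list has size bounded by a function of $\eta$ alone.

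For correctness, fix the optimal center $\sigma^*$ and recall $I_\sigma=[d]\setminus\LCS(\sigma,\sigma^*)$, so $|I_\sigma|=\Delta(\sigma,\sigma^*)$ and $\sum_{\sigma\in S}|I_\sigma|=OPT$. Call $\sigma\in S$ \emph{central} if $\Delta(\sigma,\sigma^*)\le(1-\alpha)OPT/n$; by the triangle inequality any central $\sigma$ satisfies $Obj(S,\sigma)\le OPT+n\cdot(1-\alpha)OPT/n=(2-\alpha)OPT$. If an $\alpha'$-fraction of $S$ is central for a suitable constant $\alpha'$, then for $\eta$ large enough $X$ contains a central permutation with probability at least $0.999$ and list item (i) succeeds. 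Otherwise central permutations are rare; set $S'\coloneqq\{\sigma\in S:(1-\alpha)OPT/n<|I_\sigma|\le(1+\alpha)OPT/n\}$. By Markov's inequality at least $\tfrac{\alpha}{1+\alpha}n$ permutations satisfy $|I_\sigma|\le(1+\alpha)OPT/n$, and subtracting the few central ones gives $|S'|=\Omega(n)$, with every member of $S'$ having $|I_\sigma|=\Theta(OPT/n)$.

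On $S'$ define the \emph{conflict graph} $H$ by joining $\sigma,\sigma'$ iff $|I_\sigma\cap I_{\sigma'}|>\veps\cdot\min(|I_\sigma|,|I_{\sigma'}|)$; any $H$-independent $5$-set, sorted by $|I_\sigma|$, satisfies (P1) (the defining inequality of $H$) and (P2) (since it lies in $S'$). If $H$ has a vertex of degree at least $\gamma|S'|$ for a suitable constant $\gamma$, then by the clumping analysis of \cite{cdk23} every permutation in that dense neighbourhood is a $(2-\delta)$-approximate center, and there are $\Omega(n)$ of them, so $X$ hits one with probability at least $0.999$. Otherwise $H$ has maximum degree below $\gamma|S'|$ and hence at most $\tfrac{\gamma}{2}|S'|^2$ edges, so a uniformly random $5$-subset of $S'$ is $H$-independent with probability at least $1-\binom52\gamma=\Omega(1)$; since $|S'|=\Omega(n)$, a uniformly random $5$-subset of $S$ lies in $S'$ and is $H$-independent with constant probability, so for $\eta$ a large enough constant $X$ contains such a $5$-subset $\{\sigma_1,\dots,\sigma_5\}$ with probability at least $0.999$. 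For this $5$-subset the argument of \Cref{sec:ulamoutline} gives $|Bad|\le10\veps|I_{\sigma_4}|\le10\veps(1+\alpha)OPT/n$; feeding its majority tournament into $\RS$ and applying \Cref{thm:rsort} with $b=|Bad|$, a constant number of runs together with Markov's inequality yields (with probability at least $0.999$) a permutation $\tilde\sigma$ with $\Delta(\tilde\sigma,\sigma^*)=O(\veps)OPT/n$, whence $Obj(S,\tilde\sigma)\le OPT+n\cdot O(\veps)OPT/n=(1+O(\veps))OPT\le(2-\delta)OPT$ for suitable constants $\veps,\alpha,\delta$. A union bound over the $O(1)$ failure events --- $X$ missing the relevant central permutation, clump member, or $5$-subset, and all the $\RS$ runs failing --- bounds the total failure probability by $0.01$.

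The step I expect to be the main obstacle is exactly this upgrade from cdk23's existential dichotomy to the counting dichotomy: showing that when no single clump is large the conflict graph $H$ is sparse enough that a positive constant fraction of \emph{all} $5$-subsets of $S'$ are independent, so a constant-size sample (rather than the $O(\log n)$-size sample of \cite{cdk23}) already captures a usable $5$-tuple, while on the other side a large clump actually yields $\Omega(n)$ good centers rather than a single one. The accompanying quantitative bookkeeping --- relating $\veps$, $\alpha$, $\gamma$, the clump radius and the final slack $\delta$, and absorbing the extra constant factors incurred in passing from the expected-loss guarantee of \Cref{thm:rsort} to a high-probability guarantee --- is delicate but routine once this structural statement is in place.
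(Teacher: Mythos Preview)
Your high-level plan matches the paper's: upgrade each of cdk23's existential statements to a counting statement so that a constant-size uniform sample already captures the relevant witness, and then invoke $\RS$ on the resulting five-tuple. The paper's proof has exactly this structure.

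However, your case split on the conflict graph $H$ has a genuine gap. You split on \emph{maximum degree}: if some vertex $x_j$ has degree at least $\gamma|S'|$, you assert that ``every permutation in that dense neighbourhood is a $(2-\delta)$-approximate center.'' This is not what the clumping analysis shows, and it is not true in general. What the analysis gives (and what the paper proves as its Lemma~12) is that the high-degree vertex $x_j$ \emph{itself} is a good center, because $\sum_i |I_i\cap I_j|$ is large; a neighbour $x_i$ of $x_j$ may have large $|I_i\cap I_j|$ but tiny $|I_i\cap I_l|$ for every other $l$, and then $x_i$ is not a good center. With only one high-degree vertex guaranteed, a constant-size sample need not hit it. The correct dichotomy --- the one the paper uses --- is on the \emph{number} of high-degree vertices: either a constant fraction of $S'$ has high degree, so a random sample hits one of these (each of which is a good center by the Lemma~12 argument); or all but a small fraction of $S'$ have low degree, and then your independent-$5$-subset argument goes through on that large low-degree set. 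This is a one-line fix to your plan, but it is essential.

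A secondary point: you omit the paper's case ``$OPT_F$ is large'' (its Lemma~9). Without it, Markov only yields $|S'|\ge\Omega(\alpha)\,n$ and $OPT_{S'}\ge\Omega(\alpha)\,OPT$ rather than $\Omega(1)\,OPT$, and you should verify that the inequality ``high-degree vertex beats $2\,OPT$'' still closes with this weaker lower bound. It plausibly does with smaller constants, but your sketch does not address it.
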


In the remaining section, we will build the key ingredients for the algorithm ${\tt ULAM1}$.
We will use a few constants as parameters in our analysis. 
The value of these parameters will be determined at the end of the analysis.
The first parameter is $c_1$, which captures the fraction of points that are close to the optimal center.
Let $S = \{x_1, ..., x_n\}$
and let $x^{*}$ be an optimal median of $S$. So, $OPT = \sum_{x_i \in S} \Delta(x_i, x^{*})$. 
For each $x_i$, let $I_i$ denote the set of unaligned symbols with $x^{*}$. So, $\Delta(x_i, x^*) = |I_i|$. 
Without loss of generality, we will assume that $|I_1| \leq |I_2| \leq ... \leq |I_n|$. 
In the next lemma, we argue that if a significant number of points are sufficiently close to $x^*$, then a randomly chosen point will be a good center with good probability. 

\begin{lemma}\label{lemma:1}
Let $0.0001 \leq \alpha \leq 1$.
If $|I_j| \leq (1 - \alpha) \cdot \frac{OPT}{n}$ for some $j \geq c_1 \cdot n$, then for a randomly chosen point $x$ from $\{x_1, ..., x_n\}$ satisfies $Obj(S, x) \leq (2 - \alpha) \cdot OPT$ with probability at least $c_1$.
\end{lemma}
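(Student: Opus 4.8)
The plan is to exploit the sortedness assumption $|I_1| \le |I_2| \le \cdots \le |I_n|$ together with a single application of the (ordinary) triangle inequality for the Ulam metric; this is the ``easy case'' alluded to in the outline, but it already illustrates the sampling approach. First I would observe that since the $|I_i|$ are nondecreasing, the hypothesis $|I_j| \le (1-\alpha)\frac{OPT}{n}$ for some index $j \ge c_1 n$ immediately yields $\Delta(x_i, x^*) = |I_i| \le (1-\alpha)\frac{OPT}{n}$ for \emph{every} $i \in \{1, \dots, j\}$. Write $S_{\mathrm{close}}$ for this set of ``close'' points; by the choice of $j$ we have $|S_{\mathrm{close}}| \ge j \ge c_1 n$.

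Next I would note that a uniformly random $x$ drawn from $\{x_1, \dots, x_n\}$ lies in $S_{\mathrm{close}}$ with probability $|S_{\mathrm{close}}|/n \ge c_1$, so it suffices to show that \emph{every} $x \in S_{\mathrm{close}}$ is a $(2-\alpha)$-approximate center. Fix such an $x$. For each $x' \in S$ the triangle inequality gives $\Delta(x', x) \le \Delta(x', x^*) + \Delta(x^*, x)$, and summing over all $x' \in S$,
\begin{align*}
Obj(S, x) = \sum_{x' \in S} \Delta(x', x) \le \sum_{x' \in S} \Delta(x', x^*) + n \cdot \Delta(x, x^*) = OPT + n \cdot \Delta(x, x^*).
\end{align*}
Since $x \in S_{\mathrm{close}}$ we have $\Delta(x, x^*) \le (1-\alpha)\frac{OPT}{n}$, hence $Obj(S, x) \le OPT + (1-\alpha)\,OPT = (2-\alpha)\,OPT$, which is exactly the claimed bound. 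Combining this with the probability estimate from the first step completes the argument.

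I do not expect a genuine obstacle here: the only point that requires any care is that the conclusion concerns a \emph{randomly chosen} point rather than a fixed one, and this is handled entirely by the lower bound $|S_{\mathrm{close}}|/n \ge c_1$ established in the first step. The constant $c_1$ is left as a free parameter and will be fixed later, once the various thresholds in {\tt ULAM1} (and the complementary ``few close points'' case, which needs the stronger Ulam triangle inequality) are balanced; for this lemma it is enough that the ``many close points'' regime produces a $(2-\alpha)$-approximate center with probability at least $c_1$.
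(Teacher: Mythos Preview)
Your proposal is correct and follows essentially the same approach as the paper: use the sortedness of the $|I_i|$ to conclude that at least $c_1 n$ points satisfy $\Delta(x_l,x^*)\le(1-\alpha)\frac{OPT}{n}$, then apply the ordinary triangle inequality to bound $Obj(S,x_l)$ by $(2-\alpha)\,OPT$ for any such $x_l$. The paper's proof is merely a more compressed version of exactly this argument.
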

\begin{proof}
The point $x$ chosen has index $l \leq j$ with probability at least $c_1$. For such a point $x_l$, we have:
$$
Obj(S, x_l) \leq \sum_{x_i \in S} \Delta(x_i, x_l)
\stackrel{(triangle)}{\leq} \sum_{x_i \in S} (\Delta(x_i, x^*) + \Delta(x_l, x^*)) \leq (2 - \alpha) OPT.
$$
This completes the proof of the lemma.\qed
\end{proof}

For the remaining analysis, let $\alpha \in (0, 1/3]$ be a fixed constant, and we will assume that the least integer $j$ such that $|I_j| > (1-\alpha) \frac{OPT}{n}$ satisfies $j < c_1 \cdot n$. 
This means only a small number of points are sufficiently close to the optimal center $x^*$. 
In this case, we would need to do more than randomly pick a center from $\{x_1, ..., x_n\}$ to find a good center.
Just as we defined {\em near} points in the previous lemma, we will define {\em far} and {\em middle} points for the remaining analysis. Let $s$ be the least integer such that $|I_s| > (1 - \alpha) \frac{OPT}{n}$ and let $t$ be the least integer such that $|I_t| > (1 + \alpha) \frac{OPT}{n}$. Then, we define the following sets:
\begin{itemize}
\item Near points: $Near = \{x_1, ..., x_{s-1}\}$,
\item Middle points: $\bar{F} = \{x_s, ..., x_{t-1}\}$,
\item Far points: $F = \{x_t, ..., x_{n}\}$.
\end{itemize}
We are working under the assumption that 
\begin{equation}\label{eqn:assumption-1}
|Near| < c_1 \cdot n.
\end{equation}
Just as we arrived at the above condition, we also show that $OPT_{F}$ cannot be too large. Otherwise, a random point is a good center with a sufficiently large probability.

\begin{lemma}\label{lemma:2}
Let $c_2 \in (0, 1/3]$ be any constant.
If $OPT_{F} \geq c_2 \cdot OPT$, then a randomly chosen point $x$ from $\{x_1, ..., x_n\}$ satisfies $Obj(S, x) \leq \left(2 - \left(\frac{3\alpha c_2}{8}\right)^2 \right)OPT$ with probability at least $\frac{\alpha}{2 + \alpha} \cdot \left(1 - \frac{1}{1 + \frac{3\alpha c_2}{16}} \right)$.
\end{lemma}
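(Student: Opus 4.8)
The plan is to mimic the structure of Lemma \ref{lemma:1}: exhibit a sizable set of points that, when picked as a center, beat the factor-$2$ bound by a multiplicative slack, and then lower-bound the probability of sampling such a point. The key quantitative input is that $OPT_F \ge c_2 \cdot OPT$, i.e. the ``far'' points carry a constant fraction of the total cost, while each far point $x_i$ has $|I_i| > (1+\alpha)\frac{OPT}{n}$. First I would use the strengthened triangle inequality for the Ulam metric, $\Delta(\sigma_1,\sigma_2) \le |I_{\sigma_1}| + |I_{\sigma_2}| - |I_{\sigma_1}\cap I_{\sigma_2}|$, so that picking a point $x_l$ gives
\[
Obj(S,x_l) \le \sum_{x_i\in S}\bigl(\Delta(x_i,x^*) + \Delta(x_l,x^*) - |I_i \cap I_l|\bigr) = OPT + n|I_l| - \sum_{x_i\in S}|I_i\cap I_l|.
\]
To make this at most $(2-\beta)OPT$ for some slack $\beta$, it suffices that $n|I_l| - \sum_i |I_i\cap I_l| \le (1-\beta)OPT$, i.e. that $x_l$ has a large-enough overlap mass $\sum_i |I_i\cap I_l|$ with the rest of the dataset relative to $n|I_l|$.

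Next I would argue that many far points have this property, by a counting/averaging argument over pairs. The intuition is: the far points collectively have a lot of misaligned symbols (their $I_i$'s are large, summing to $\ge c_2 \cdot OPT$ over $F$ alone, with $|F|$ possibly small so individual $|I_i|$ are large), and these misaligned symbols live in the universe $[d]$; if the total misalignment mass concentrated on few symbols we would get large pairwise intersections for free, and if it were spread out we would instead need a more careful argument. Concretely, I expect to (a) restrict attention to the far set $F$, (b) show that a constant fraction of points in $F$ — say those with $|I_i|$ in a bounded range, between $(1+\alpha)\frac{OPT}{n}$ and something like $\frac{3}{\alpha c_2}\cdot\frac{OPT}{n}$ via a Markov-type argument ruling out too many extremely-far points — have average overlap with the dataset bounded below, and (c) conclude that picking such a point yields the stated slack $\bigl(\frac{3\alpha c_2}{8}\bigr)^2$. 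The precise constants $\frac{\alpha}{2+\alpha}$ and $1 - \frac{1}{1+\frac{3\alpha c_2}{16}}$ in the probability bound strongly suggest a two-stage filtering: first keep an $\frac{\alpha}{2+\alpha}$-fraction (this is exactly $|F|/n$-type bookkeeping once we note $|I_t| > (1+\alpha)\frac{OPT}{n}$ forces $|F| \le \frac{n}{1+\alpha}$ complement-style, or rather that $Near\cup\bar F$ has bounded cost), and then within $F$ discard the $\frac{1}{1+\frac{3\alpha c_2}{16}}$-fraction of outliers that are too far, leaving the productof-two-fractions lower bound.

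The main obstacle I anticipate is step (b): controlling $\sum_{x_i\in S}|I_i\cap I_l|$ from below for a typical far point $x_l$. Unlike Lemma \ref{lemma:1}, a bare triangle inequality does not suffice here — we genuinely need the $-|I_i\cap I_l|$ term, and bounding the overlap requires understanding how the misaligned-symbol sets $\{I_i\}$ distribute over $[d]$. The clean way is a double-counting identity: $\sum_{l}\sum_{i}|I_i\cap I_l| = \sum_{a\in[d]} m_a^2$ where $m_a = |\{i : a\in I_i\}|$ is the number of points misaligned at symbol $a$, and $\sum_a m_a = \sum_i |I_i| = OPT$. By Cauchy–Schwarz this is at least $\frac{OPT^2}{d}$, but that is too weak in general; the fix is to observe that since $x^*$ is optimal, no single symbol can be misaligned in too many points (otherwise a local swap improves $x^*$), which caps each $m_a$ and forces the mass to spread, or alternatively to restrict the double-count to the far points where $m_a$ values are inflated. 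Making this averaging yield the exact slack $\bigl(\frac{3\alpha c_2}{8}\bigr)^2$ with the right probability constants is the delicate part; everything else is routine application of the triangle inequality and the definitions of $Near$, $\bar F$, $F$.
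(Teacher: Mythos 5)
Your proposed route diverges from the paper's in two substantive ways, and both divergences are problematic.

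First, you reach for the strengthened Ulam-specific inequality $\Delta(\sigma_1,\sigma_2) \le |I_{\sigma_1}|+|I_{\sigma_2}|-|I_{\sigma_1}\cap I_{\sigma_2}|$ and then try to control the cross-term $\sum_i |I_i\cap I_l|$ for a typical far point $x_l$. The paper's Lemma~\ref{lemma:2} does not touch the intersection term at all --- it is proved with the ordinary triangle inequality alone. (The strengthened inequality enters only later, in Lemmas~\ref{lemma:4} and~\ref{lemma:5}, and there it is used \emph{pointwise} inside a ball $B(x_j,\veps)$ defined by a large-intersection condition, not via any global averaging over symbols.) Your double-counting identity $\sum_l\sum_i |I_i\cap I_l|=\sum_{a\in[d]} m_a^2$ is correct but, as you yourself note, Cauchy--Schwarz gives only $\frac{OPT^2}{d}$, which is far too weak when $OPT\ll \sqrt{d}\cdot n$, and the proposed fix --- that optimality of $x^*$ should cap each $m_a$ --- does not hold: a symbol can be misaligned in almost every input permutation and $x^*$ can still be optimal, because moving a single symbol can cost you alignment with a differently-misaligned majority. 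This step of your plan is a genuine dead end, not a detail to be filled in.

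Second, and more fundamentally, you are pointing the analysis at the wrong set of centers. You propose to show that a good center lies among the \emph{far} points $F$, filtered by a bounded range of $|I_i|$. The paper instead takes the candidate center from the \emph{near} side: it defines $T=\{x_j:\Delta(x_j,x^*)\le(1+\tfrac{\alpha}{2})\tfrac{OPT}{n}\}$, and the $\frac{\alpha}{2+\alpha}$ factor in the probability is exactly the Markov lower bound $|T|/n\ge \frac{\alpha}{2+\alpha}$ (it is not an $|F|/n$ bookkeeping quantity). The whole point of the lemma is a \emph{gap} argument between near and far points that requires no intersection control at all: for any $x_j\in T$ and any far point $x_i\in F$, the ordinary triangle inequality gives
\[
\Delta(x_i,x_j)\le \Delta(x_i,x^*)+\Delta(x_j,x^*)\le \Delta(x_i,x^*)+\frac{1+\alpha/2}{1+\alpha}\,\Delta(x_i,x^*)\le \Bigl(2-\tfrac{3\alpha}{8}\Bigr)\Delta(x_i,x^*),
\]
because $\Delta(x_j,x^*)\le(1+\tfrac{\alpha}{2})\tfrac{OPT}{n}$ while $\Delta(x_i,x^*)>(1+\alpha)\tfrac{OPT}{n}$. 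Far points carry cost $OPT_F\ge c_2\cdot OPT$, so a center in $T$ has expected objective at most $2\cdot OPT - \frac{3\alpha}{8}\cdot OPT_F\le(2-\frac{3\alpha c_2}{8})OPT$; the second factor $1-\frac{1}{1+3\alpha c_2/16}$ in the probability is then just Markov applied to this expectation, not a second filtering of outliers inside $F$. You correctly intuited a two-stage product structure, but misidentified both stages. The crucial idea you are missing is that the $(1-\alpha)$-to-$(1+\alpha)$ gap in the definition of $F$ already buys a multiplicative savings for a near center against every far point, with no Ulam-specific inequality and no overlap counting needed.
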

\begin{proof}
Let us define the set $T = \left\{x_j : \Delta(x_j, x^*) \leq (1+\frac{\alpha}{2})\frac{OPT}{n}\right\}$.
From Markov's we know that $|T| \geq \frac{\alpha}{2+\alpha} \cdot n$. 
The cost of a point $x_i \in F$ from any center $x_j \in T$ is given by:
\begin{eqnarray}\label{eqn:lemma2eq1}
\Delta(x_i, x_j) &\leq& \Delta(x_i, x^*) + \Delta(x_j, x^*) \nonumber \\
&\leq& \Delta(x_i, x^*) + \left(1 + \frac{\alpha}{2} \right) \frac{OPT}{n} \nonumber \\
&\leq& \Delta(x_i, x^*) + \frac{(1+\alpha/2)}{1+\alpha} \Delta(x_i, x^*) \nonumber\\
&\leq& \left( 2-\frac{3\alpha}{8}\right)\Delta(x_i, x^*)
\end{eqnarray}
We will calculate the expected cost of $S$ from a center $c$ chosen uniformly at random from $T$.
\begin{eqnarray*}
\E[Obj(S, c)] &=& \frac{1}{|T|} \sum_{x_j \in T} Obj(S, x_j) \\
&=& \frac{1}{|T|} \sum_{x_j \in T}  \left( \sum_{x_i \in S\setminus F} \Delta(x_i, x_j) + \sum_{x_i \in F} \Delta(x_i, x_j)\right)\\
&\leq& \frac{1}{|T|} \sum_{x_j \in T} \sum_{x_i \in S\setminus F} \Delta(x_i, x_j) + \sum_{x_i \in F} \left(2 - \frac{3\alpha}{8}\right)\Delta(x_i, x^*) \quad \textrm{(using \ref{eqn:lemma2eq1})} \\
&\leq& \frac{1}{|T|} \sum_{x_j \in T} \sum_{x_i \in S\setminus F} \left( \Delta(x_i, x^*) + \Delta(x_j, x^*)\right) +  \left(2 - \frac{3\alpha}{8}\right)OPT_F\\
&\leq& 2 \cdot OPT_{S\setminus F} + \left(2 - \frac{3\alpha}{8}\right)OPT_F\\
&=& 2 \cdot OPT - \frac{3\alpha}{8} \cdot OPT_F\\
&\leq& 2 \cdot OPT - \frac{3\alpha c_2}{8} \cdot OPT = \left(2 - \frac{3\alpha c_2}{8} \right) OPT\\
\end{eqnarray*}
So, by Markov, the probability that the cost for a randomly chosen center from $T$, the cost is larger than $\left( 2 - \frac{3\alpha c_2}{8}\right) \cdot \left( 1 + \frac{3 \alpha c_2}{16}\right) OPT = \left( 2 - (3\alpha c_2/8)^2\right) OPT$ is at most $\frac{1}{1 + \frac{3\alpha c_2}{16}}$. 
So, the probability that for a randomly chosen center from $S$, the cost is at most $\left( 2 - (3\alpha c_2/8)^2\right) OPT$ is at least $\frac{\alpha}{2 + \alpha} \cdot \left(1 - \frac{1}{1 + \frac{3\alpha c_2}{16}} \right)$

This completes the proof of the lemma. \qed
\end{proof}

The above lemma says that for any constant $c_2 \in (0, 1/3]$, if $OPT_F \geq c_2 \cdot OPT$, then a randomly chosen point breaks the 2-approximation for the cost. 
So, we shall assume that $OPT_F < c_2 \cdot OPT$ for the remaining analysis. Using this with the fact that $OPT = OPT_{Near} + OPT_{\bar{F}} + OPT_F$ and $|Near| \leq c_1 \cdot n$, we get that:
\begin{equation}\label{eqn:1}
OPT_{\bar{F}} \geq (1 - c_2 - c_1(1-\alpha)) \cdot OPT.
\end{equation}
We define the $\veps$-``ball" around $x_j \in \bar{F}$, denoted by $B(x_j, \veps)$, to be the set of all $x_i$'s that satisfy the following two conditions:
\begin{enumerate}
\item $x_i \in \bar{F}$,
\item $|I_i \cap I_j| > \veps |I_j|$.
\end{enumerate}

In other words $B(x_j, \veps) = \{x_i \in \bar{F} : |I_i \cap I_j| > \veps |I_j|\}$.
Let $G$ be the subset of all those points $x_j \in \bar{F}$ such that $OPT_{B(x_j, \veps)} \geq c_3 \cdot OPT_{\bar{F}}$, where $c_3 \leq 1$ is some constant to be decided later.  
First, we obtain a lower bound on the cardinality of $B(x_j, \veps)$ for $x_j \in G$.

\begin{lemma}\label{lemma:3}
For any $x_j \in G$, $|B(x_j, \veps)| \geq \frac{c_3(1 - c_2 - c_1(1-\alpha))}{1+\alpha} \cdot n$.
\end{lemma}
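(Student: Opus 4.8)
The plan is to exploit the definition of $G$ together with the upper bound on the cost of individual points in $\bar{F}$. Recall that for $x_j \in \bar F$ we have $|I_j| \le (1+\alpha)\frac{OPT}{n}$, hence $\Delta(x_i, x^*) = |I_i| \le (1+\alpha)\frac{OPT}{n}$ for every $x_i \in \bar F$. Consequently, for every point $x_i$ in the ball $B(x_j,\veps)$, its contribution to $OPT$ is at most $(1+\alpha)\frac{OPT}{n}$, so
\[
OPT_{B(x_j,\veps)} \;=\; \sum_{x_i \in B(x_j,\veps)} \Delta(x_i, x^*) \;\le\; |B(x_j,\veps)| \cdot (1+\alpha)\frac{OPT}{n}.
\]

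Now I would chain this with the two lower bounds already available. Since $x_j \in G$, by definition $OPT_{B(x_j,\veps)} \ge c_3 \cdot OPT_{\bar F}$, and by~\eqref{eqn:1} we have $OPT_{\bar F} \ge (1 - c_2 - c_1(1-\alpha)) \cdot OPT$. Combining,
\[
c_3 (1 - c_2 - c_1(1-\alpha)) \cdot OPT \;\le\; OPT_{B(x_j,\veps)} \;\le\; |B(x_j,\veps)| \cdot (1+\alpha)\frac{OPT}{n}.
\]
Cancelling $OPT$ (which is positive; the degenerate $OPT = 0$ case is trivial since then every point is a perfect center) and rearranging yields
\[
|B(x_j,\veps)| \;\ge\; \frac{c_3(1 - c_2 - c_1(1-\alpha))}{1+\alpha}\cdot n,
\]
which is exactly the claimed bound.

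There is essentially no hard obstacle here: the lemma is a direct consequence of the upper bound $|I_i| \le (1+\alpha)\frac{OPT}{n}$ for middle points (which holds by the very definition of $\bar F$ as $\{x_s,\dots,x_{t-1}\}$ and the choice of $t$), the definition of $G$, and the previously established inequality~\eqref{eqn:1}. The only point requiring a line of care is ensuring $OPT > 0$ so the division is legitimate; as noted, the $OPT = 0$ case makes the statement vacuous or trivial. I would present the three-line inequality chain above and conclude.
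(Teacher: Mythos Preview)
Your proposal is correct and follows essentially the same approach as the paper: upper bound $OPT_{B(x_j,\veps)}$ by $|B(x_j,\veps)|\cdot(1+\alpha)\frac{OPT}{n}$ using the definition of $\bar F$, lower bound it by $c_3\cdot OPT_{\bar F}$ using the definition of $G$, then invoke \eqref{eqn:1} and rearrange. The paper's proof is the same three-line chain, only more terse.
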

\begin{proof}
Since for every $x_i \in \bar{F}$, $|I_i| \leq (1+\alpha)\cdot \frac{OPT}{n}$, we have:
$$
|B(x_j, \veps)| \cdot (1+\alpha) \cdot \frac{OPT}{n} \geq OPT_{B(x_j, \veps)} \geq c_3 \cdot OPT_{\bar{F}}.
$$
Using Eqn. (\ref{eqn:1}), we obtain the statement of the lemma.\qed
\end{proof}

Next, we analyze the cost of opening a center at $x_j$ in the subset $G$. The cost analysis can be broken into the following cases:
\begin{enumerate}
\item[(i)] For all $x_i \in F$, $\Delta(x_i, x_j) \leq \Delta(x_i, x^*) + \Delta(x_j, x^*) \leq \Delta(x_i, x^*) + \Delta(x_i, x^*) = 2 \cdot \Delta(x_i, x^*)$.
\item[(ii)] For all $x_i \in \bar{F}$, $\Delta(x_i, x_j) \leq \Delta(x_i, x^*) + \Delta(x_j, x^*) \leq \Delta(x_i, x^*) + (1+3\alpha) \cdot \Delta(x_i, x^*) = (2+3\alpha) \cdot \Delta(x_i, x^*)$. The second inequality follows from (i) $\Delta(x_j, x^*) \leq (1+\alpha) (OPT/n)$, and (ii) $\Delta(x_i, x^*) \geq (1-\alpha)(OPT/n)$.
\item[(iii)] For all $x_i \in Near$, $\Delta(x_i, x_j) \leq \Delta(x_i, x^*) + \Delta(x_j, x^*) \leq \Delta(x_i, x^*) + (1+\alpha)\cdot (OPT/n)$.
\item[(iv)] For all $x_i \in B(x_j, \veps)$,  $\Delta(x_i, x_j) \leq |I_i| + |I_j| - |I_i \cap I_j| \leq |I_i| + (1-\veps) |I_j| = \Delta(x_i, x^*) + (1-\veps) \Delta(x_j, x^*)$. The second inequality follows from $x_i \in B(x_j, \veps)$.
\end{enumerate}

The last inequality gives us the following lemma.

\begin{lemma}\label{lemma:4}
Let $x_j \in G$. Then $\sum_{x_i \in B(x_j, \veps)} \Delta(x_i, x_j) \leq \left(2 - \veps (1+3\alpha-3\alpha/\veps)\right) \cdot OPT_{B(x_j, \veps)}$
\end{lemma}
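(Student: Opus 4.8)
The plan is to bound $\Delta(x_i,x_j)$ pointwise for each $x_i \in B(x_j,\veps)$ by a constant multiple of $\Delta(x_i,x^*)$, and then sum. The starting point is exactly the inequality recorded in case~(iv) above: since $x_i \in B(x_j,\veps)$ we have $|I_i \cap I_j| > \veps |I_j|$, so the strengthened Ulam triangle inequality gives
\begin{equation*}
\Delta(x_i,x_j) \;\le\; |I_i| + |I_j| - |I_i \cap I_j| \;\le\; |I_i| + (1-\veps)|I_j| \;=\; \Delta(x_i,x^*) + (1-\veps)\,\Delta(x_j,x^*).
\end{equation*}

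Next I would convert the ``additive'' term $(1-\veps)\Delta(x_j,x^*)$ into a multiplicative factor of $\Delta(x_i,x^*)$ using that both $x_i$ and $x_j$ lie in $\bar F$. Indeed $x_j \in G \subseteq \bar F$ gives $\Delta(x_j,x^*) = |I_j| \le (1+\alpha)\frac{OPT}{n}$, while $x_i \in B(x_j,\veps) \subseteq \bar F$ gives $\Delta(x_i,x^*) = |I_i| \ge (1-\alpha)\frac{OPT}{n}$, hence $\frac{OPT}{n} \le \frac{1}{1-\alpha}\Delta(x_i,x^*)$ and therefore
\begin{equation*}
\Delta(x_j,x^*) \;\le\; \frac{1+\alpha}{1-\alpha}\,\Delta(x_i,x^*) \;\le\; (1+3\alpha)\,\Delta(x_i,x^*),
\end{equation*}
where the last step uses $\alpha \le 1/3$, for which $\frac{1+\alpha}{1-\alpha} \le 1+3\alpha$ (equivalently $\alpha(1-3\alpha)\ge 0$). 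Substituting this back yields
\begin{equation*}
\Delta(x_i,x_j) \;\le\; \bigl(1 + (1-\veps)(1+3\alpha)\bigr)\,\Delta(x_i,x^*) \;=\; \bigl(2 - \veps(1+3\alpha - 3\alpha/\veps)\bigr)\,\Delta(x_i,x^*),
\end{equation*}
the identity being just $1 + (1-\veps)(1+3\alpha) = 2 + 3\alpha - \veps - 3\alpha\veps$.

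Finally, summing this bound over all $x_i \in B(x_j,\veps)$ and recalling $OPT_{B(x_j,\veps)} = \sum_{x_i \in B(x_j,\veps)} \Delta(x_i,x^*)$ gives the claimed inequality. There is no real obstacle here; the only points needing care are the two elementary estimates, namely the bound $\frac{1+\alpha}{1-\alpha}\le 1+3\alpha$ valid precisely because $\alpha \le 1/3$ (this is why that hypothesis on $\alpha$ is imposed), and the algebraic rewriting of the constant $1+(1-\veps)(1+3\alpha)$ into the form $2 - \veps(1+3\alpha-3\alpha/\veps)$ used in the statement. Note also that for this lemma to be useful one implicitly wants $\veps$ large compared to $\alpha$ so that $1+3\alpha-3\alpha/\veps > 0$ and the factor is genuinely below $2$, but that is a matter of parameter choice made later and is not needed for the inequality itself.
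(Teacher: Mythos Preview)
Your proof is correct and follows essentially the same approach as the paper: start from the case~(iv) bound $\Delta(x_i,x_j)\le \Delta(x_i,x^*)+(1-\veps)\Delta(x_j,x^*)$, then use membership of both $x_i$ and $x_j$ in $\bar F$ together with $\frac{1+\alpha}{1-\alpha}\le 1+3\alpha$ to convert the additive term into a multiplicative one, and sum. The paper's write-up inserts a (logically unnecessary) case split on whether $|I_i|\le |I_j|$ or $|I_i|>|I_j|$ before applying the $\frac{1+\alpha}{1-\alpha}$ bound, whereas you apply the bound uniformly; your version is slightly cleaner but otherwise identical in substance.
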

\begin{proof}
Let $R = B(x_j, \veps)$. The lemma follows from the sequence of inequalities below:
\begin{eqnarray*}
\sum_{x_i \in R} \Delta(x_i, x_j) &\leq& \sum_{x_i \in R} \left( \Delta(x_i, x^*) + (1-\veps) \Delta(x_j, x^*)\right) \quad \textrm{(using (iv))}\\
&=& OPT_{R} + (1-\veps) \cdot \left( \sum_{x_i \in R, |I_i| \leq |I_j|} |I_j| + \sum_{x_i \in R, |I_i| > |I_j|} |I_i|\right)\\
&\leq& OPT_{R} + (1-\veps) \cdot \left( \sum_{x_i \in R, |I_i| \leq |I_j|} \frac{1+\alpha}{1-\alpha}|I_i| + \sum_{x_i \in R, |I_i| > |I_j|} |I_i|\right) \\
&& \quad \textrm{(since $\forall x_i\in R, (1-\alpha) (OPT/n) \leq |I_i| \leq (1+\alpha) (OPT/n)$ )}\\
&\leq& OPT_R + (1-\veps) \cdot (1+3\alpha) \cdot OPT_R\\
&\leq& OPT_R \cdot \left(2 - \veps (1+3\alpha-3\alpha/\veps)\right).
\end{eqnarray*}
This completes the proof of the lemma.\qed
\end{proof}
We can now analyze the cost with respect to a center $x_j$ chosen from the subset $G$.

\begin{lemma}\label{lemma:5}
Let $x_j \in G$ and  $\left( c_3 \veps (1+3\alpha) - 3\alpha\right) (1 - c_2 - c_1(1-\alpha)) - c_1(1+\alpha) \geq 0.0001$.
Then $Obj(S, x_j) \leq (1.999) \cdot OPT$.
\end{lemma}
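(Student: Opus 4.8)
Since the center set here is the singleton $\{x_j\}$, we have $Obj(S,x_j)=\sum_{x_i\in S}\Delta(x_i,x_j)$, so the plan is to bound this sum by splitting $S$ into four groups and applying to each the sharpest per-point inequality available. Write $R\coloneqq B(x_j,\veps)$, and recall $R\subseteq\bar F$, so $S$ is the disjoint union of $Near$, $F$, $R$, and $\bar F\setminus R$. For $x_i\in F$ use case~(i), giving $\sum_{x_i\in F}\Delta(x_i,x_j)\le 2\cdot OPT_F$; for $x_i\in\bar F\setminus R$ use case~(ii), giving $(2+3\alpha)\cdot OPT_{\bar F\setminus R}$; for $x_i\in Near$ use case~(iii) together with $|Near|<c_1 n$ (assumption~\eqref{eqn:assumption-1}), giving $OPT_{Near}+c_1(1+\alpha)\cdot OPT$; and for $x_i\in R$ use Lemma~\ref{lemma:4} (whose bound rewrites as $\bigl(2-(\veps(1+3\alpha)-3\alpha)\bigr)\cdot OPT_R$). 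It is important to route the ball points through Lemma~\ref{lemma:4} rather than the crude case~(ii) bound, since that is where a genuine saving comes from.

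Adding the four estimates and subtracting $2\cdot OPT=2\,(OPT_{Near}+OPT_{\bar F\setminus R}+OPT_R+OPT_F)$, the $F$ part contributes nothing extra and one is left with
\[
Obj(S,x_j)-2\cdot OPT\ \le\ -OPT_{Near}+c_1(1+\alpha)\cdot OPT+3\alpha\cdot OPT_{\bar F\setminus R}-(\veps(1+3\alpha)-3\alpha)\cdot OPT_R .
\]
Now the structure of $G$ enters: $x_j\in G$ means $OPT_R\ge c_3\cdot OPT_{\bar F}$, hence $OPT_{\bar F\setminus R}=OPT_{\bar F}-OPT_R\le(1-c_3)\cdot OPT_{\bar F}$. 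Substituting these (and using that the hypothesis forces $\veps(1+3\alpha)-3\alpha>0$ since $c_3\le1$), the two $3\alpha c_3$ contributions cancel and the last two terms combine to at most $-(c_3\veps(1+3\alpha)-3\alpha)\cdot OPT_{\bar F}$. Dropping $-OPT_{Near}\le 0$ and invoking~\eqref{eqn:1}, i.e.\ $OPT_{\bar F}\ge(1-c_2-c_1(1-\alpha))\cdot OPT$, we obtain
\[
Obj(S,x_j)-2\cdot OPT\ \le\ -\Bigl[(c_3\veps(1+3\alpha)-3\alpha)(1-c_2-c_1(1-\alpha))-c_1(1+\alpha)\Bigr]\cdot OPT ,
\]
which by the hypothesis of the lemma is at most $-0.0001\cdot OPT$, so $Obj(S,x_j)\le(2-0.0001)\cdot OPT\le 1.999\cdot OPT$ for the eventual calibration of the constants.

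The whole argument is essentially bookkeeping, so there is no single hard step; the one thing to get right is the accounting of the middle band. The $3\alpha$ overhead incurred on $\bar F\setminus R$ (from comparing a middle center against middle points) must be absorbed, and it is absorbed because a $c_3$-fraction of $OPT_{\bar F}$ lies inside the ball $B(x_j,\veps)$, where Lemma~\ref{lemma:4} replaces the factor $2$ by $2-(\veps(1+3\alpha)-3\alpha)$; after the cancellation this leaves a net gain of size $c_3\veps(1+3\alpha)-3\alpha$ on $OPT_{\bar F}$. The remaining care is purely in the sign conditions --- one needs $\veps(1+3\alpha)>3\alpha$ (so the Lemma~\ref{lemma:4} bound is a genuine saving) and $1-c_2-c_1(1-\alpha)>0$ (so \eqref{eqn:1} is non-vacuous) --- and in checking that these are consistent with the final choice of $c_1,c_2,c_3,\alpha,\veps$ that also makes the bracketed quantity exceed $0.0001$.
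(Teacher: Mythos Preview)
Your proof is correct and follows essentially the same approach as the paper: both split $S$ into $Near$, $F$, $\bar F\setminus R$, and $R=B(x_j,\veps)$, apply cases (i)--(iii) and Lemma~\ref{lemma:4} to the respective pieces, then use $OPT_R\ge c_3\,OPT_{\bar F}$ and Eqn.~\eqref{eqn:1} to extract the net gain $(c_3\veps(1+3\alpha)-3\alpha)\,OPT_{\bar F}$. The only cosmetic difference is that you subtract $2\cdot OPT$ first and track the residual, whereas the paper carries the full sum through; the algebra and the sign checks are identical.
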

\begin{proof}
Let $R = B(x_j, \veps)$. The bound follows from the sequence of inequalities below:
\begin{eqnarray*}
Obj(S, x_j) &=& \sum_{x_i \in Near} \Delta(x_i, x_j)  + \sum_{x_i \in \bar{F}\setminus R} \Delta(x_i, x_j) + \sum_{x_i \in F} \Delta(x_i, x_j) + \sum_{x_i \in R} \Delta(x_i, x_j)\\
&\leq& \sum_{x_i \in Near} \Delta(x_i, x^*) + c_1(1+\alpha)OPT + \sum_{x_i \in \bar{F}\setminus R} \Delta(x_i, x_j) + \sum_{x_i \in F} \Delta(x_i, x_j) + \sum_{x_i \in R} \Delta(x_i, x_j)\\
&& \quad \textrm{(Using (iii) of cost analysis and $|Near| < c_1 n$)}\\
&\leq&\sum_{x_i \in Near} \Delta(x_i, x^*) + c_1(1+\alpha)OPT + (2+3\alpha) \sum_{x_i \in \bar{F}\setminus R} \Delta(x_i, x^*) + \sum_{x_i \in F} \Delta(x_i, x_j) + \sum_{x_i \in R} \Delta(x_i, x_j)\\
&& \quad \textrm{(Using (ii) of cost analysis)}\\
&\leq&\sum_{x_i \in Near} \Delta(x_i, x^*) + c_1(1+\alpha)OPT + (2+3\alpha) \sum_{x_i \in \bar{F}\setminus R} \Delta(x_i, x^*) + 2 \sum_{x_i \in F} \Delta(x_i, x^*) + \sum_{x_i \in R} \Delta(x_i, x_j)\\
&& \quad \textrm{(Using (i) of cost analysis)}\\
&=& OPT_{Near} + (2+3\alpha) OPT_{\bar{F}\setminus R} + 2 OPT_{F} + c_1(1+\alpha)OPT + \sum_{x_i \in R} \Delta(x_i, x_j)\\
&\leq& OPT_{Near} + (2+3\alpha) OPT_{\bar{F}\setminus R} + 2 OPT_{F} + c_1(1+\alpha)OPT + \left(2 - \veps (1+3\alpha-3\alpha/\veps)\right) \cdot OPT_{R}\\
&\leq& 2 OPT - \left( c_3 \veps (1+3\alpha) - 3\alpha\right) OPT_{\bar{F}} + c_1(1+\alpha)OPT \qquad \textrm{(Using $OPT_R \geq c_3 OPT_{\bar{F}}$)}\\
&\leq& 2 OPT -  \left( c_3 \veps (1+3\alpha) - 3\alpha\right) (1 - c_2 - c_1(1-\alpha))OPT + c_1(1+\alpha)OPT \qquad \textrm{(Using Eqn. (\ref{eqn:1}))}
\end{eqnarray*}
This completes the proof of the lemma.\qed
\end{proof}

For the analysis of the algorithm, we consider two cases based on the size of the subset $G$:
\begin{itemize}
\item \underline{Case 1 ($|G| \geq c_4 n$)}: In this case, by Lemma~\ref{lemma:5}, a randomly chosen center will give a $1.999$-approximation with probability at least $c_4$.
\item \underline{Case 2 ($|G| < c_4 n$)}: This case is analyzed in the remainder of the analysis. Here, we will argue that with high probability, five uniformly sampled points give a center that breaks the 2-approximation.
\end{itemize}
Consider any point $x_j \in \bar{F}\setminus G$. We know that $OPT_{B(x_j, \veps)} < c_3 OPT_{\bar{F}}$. Let us use this to upper bound the size of $B(x_j, \veps)$.
Using $|B(x_j, \veps)| \cdot (1-\alpha) \cdot \frac{OPT}{n} \leq OPT_{B(x_j, \veps)} \leq c_3 \cdot OPT_{\bar{F}} \leq c_3 \cdot OPT$, we get
$$
|B_{x_j, \veps}| \leq \frac{c_3}{(1-\alpha)} \cdot n
$$
On the other hand, we can use $|\bar{F}| \cdot (1+\alpha) \cdot \frac{OPT}{n} \geq OPT_{\bar{F}} \geq (1 - c_2 - c_1(1-\alpha)) \cdot OPT$, we get $|\bar{F}| \geq \frac{(1 - c_2 - c_1(1-\alpha))}{1+\alpha} \cdot n$, which gives
$$
|\bar{F} \setminus G| \geq \left(\frac{(1 - c_2 - c_1(1-\alpha))}{1+\alpha} - c_4 \right) \cdot n
$$
The following lemma gives the condition under which five randomly chosen points help find a center that beats the 2-approximation.

\begin{lemma}\label{lemma:6}
Let $x_{i_1}, x_{i_2}, ..., x_{i_5}$ be five points chosen at random from the dataset. The following three properties are satisfied with a probability of at least $\frac{1}{20^5}$:
\begin{enumerate}
\item $x_{i_1}, ..., x_{i_5} \in \bar{F} \setminus G$,
\item $|I_{i_1}| \leq |I_{i_2}| \leq |I_{i_3}| \leq |I_{i_4}| \leq |I_{i_5}|$, and
\item For every pair $r > s \in \{1, ..., 5\}$, $|I_{i_r} \cap I_{i_s}| \leq \veps |I_{i_s}|$.
\end{enumerate}
\end{lemma}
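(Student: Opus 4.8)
The plan is to show directly that five i.i.d.\ uniform samples from $S$ land, with constant probability, inside the large set $\bar{F}\setminus G$ in a pairwise ``well‑separated'' configuration, and that any such configuration automatically satisfies (1)--(3) once the five points are relabelled in non‑decreasing order of $|I_{\cdot}|$. We are in Case~2, so $|G|<c_4n$, and the two facts that drive everything have already been recorded: $|\bar{F}\setminus G|\ge \rho_0\,n$ with $\rho_0:=\frac{1-c_2-c_1(1-\alpha)}{1+\alpha}-c_4$, and $|B(x_j,\veps)|\le \frac{c_3}{1-\alpha}\,n$ for every $x_j\in\bar{F}\setminus G$.

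First I would draw $x_{i_1},\dots,x_{i_5}$ independently and uniformly from $S$ and let $E_1$ be the event that all five lie in $W:=\bar{F}\setminus G$ and are pairwise distinct. Since each draw lands in $W$ with probability $|W|/n\ge\rho_0$ and, conditioned on all five lying in $W$, distinctness fails with probability at most $\binom{5}{2}/|W|=O(1/n)$, we get $\Pr[E_1]\ge \rho_0^{5}-O(1/n)$, a positive constant for the parameter choices we will make. On $E_1$, after relabelling so that $|I_{i_1}|\le\cdots\le|I_{i_5}|$, property~(2) holds trivially and property~(1) holds because $W=\bar{F}\setminus G$.

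Next I would obtain property~(3) by a union bound over pairs. Say the sample is \emph{well‑separated} if $x_{i_a}\notin B(x_{i_b},\veps)$ for every ordered pair $a\ne b$; since $x_{i_r}\notin B(x_{i_s},\veps)$ is exactly $|I_{i_r}\cap I_{i_s}|\le\veps|I_{i_s}|$, a well‑separated sample satisfies (3) no matter in which order its points were drawn, so I can ignore the sorting issue here. Conditioning on $E_1$ and on all draws other than $x_{i_a}$, the point $x_{i_a}$ is uniform over $W$ with the other four samples removed, and because $x_{i_b}\in\bar{F}\setminus G$ we have $|B(x_{i_b},\veps)\cap W|\le\frac{c_3}{1-\alpha}n$, so $\Pr[x_{i_a}\in B(x_{i_b},\veps)\mid E_1]\le \frac{c_3 n/(1-\alpha)}{|W|-4}\le \frac{c_3}{(1-\alpha)\rho_0}+O(1/n)$. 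Summing over the $20$ ordered pairs, $\Pr[\text{not well‑separated}\mid E_1]\le \frac{20c_3}{(1-\alpha)\rho_0}+O(1/n)$. Hence, provided the constants are chosen so that $\frac{20c_3}{(1-\alpha)\rho_0}<1$, the event ``$E_1$ and well‑separated'' has probability at least $\Pr[E_1]\cdot\bigl(1-\frac{20c_3}{(1-\alpha)\rho_0}-O(1/n)\bigr)$, which is a positive constant, and on it properties (1)--(3) all hold. Choosing $c_1,c_2,c_3,c_4,\alpha,\veps$ appropriately (those choices are fixed globally at the end of the analysis) makes this constant at least $1/20^{5}$.

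The one genuinely fiddly point is the joint feasibility of the constants. \Cref{lemma:5} forces $c_3\veps(1+3\alpha)>3\alpha$, so $c_3$ cannot be too small relative to $\alpha/\veps$; \Cref{lemma:1} forces $\alpha$ to be at least a fixed positive constant; and $\veps\le 1/100$. Consequently one must take $\veps$ near its upper bound and make $c_1,c_2,c_4$ small enough that $\rho_0$ is close to $1$, which is exactly what also makes $\frac{20c_3}{(1-\alpha)\rho_0}<1$ in the paragraph above. Since $1/20^{5}$ is a very weak target, there is plenty of slack once these inequalities (together with the ones needed for \Cref{lemma:2} and \Cref{lemma:5}) are simultaneously satisfied; verifying this is precisely the parameter‑pinning step carried out at the end of the analysis.
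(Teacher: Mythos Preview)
Your argument is correct but follows a different route from the paper. The paper argues \emph{sequentially}: it orders $\bar F\setminus G$ by $|I_\cdot|$, requires $x_{i_1}$ to land in the bottom $\tfrac{1}{10}$ block (probability $\ge\tfrac12\cdot\tfrac1{10}=\tfrac1{20}$), then removes $B(x_{i_1},\veps)$ (size $\le n/10\le|\bar F\setminus G|/5$) and recurses to choose $x_{i_2},\dots,x_{i_5}$, obtaining $(1/20)^5$ as a product over steps. Your approach instead samples all five at once and applies a union bound over the $20$ ordered ball–membership events.

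What each buys: the paper's step-by-step scheme only needs each ball to occupy at most a $1/5$ fraction of $\bar F\setminus G$, which is exactly the assumption $\tfrac{c_3}{1-\alpha}\le\tfrac1{10}$; this is why their choice $c_3=\tfrac1{11}$ works out of the box. Your union bound incurs an extra factor of $20$, so you need roughly $\tfrac{c_3}{1-\alpha}\lesssim\tfrac{\rho_0}{20}$, i.e.\ $c_3$ about half the paper's value; with the paper's constants your bound on $\Pr[\text{not well-separated}\mid E_1]$ is about $1.8$ and hence vacuous. You flagged this correctly as the ``fiddly'' point, and indeed a feasible reassignment exists (shrink $\alpha$ toward its floor $10^{-4}$ so that Lemma~5's constraint $c_3\veps>3\alpha$ permits $c_3\approx 0.03$), but note it does \emph{not} coincide with the paper's final parameter table. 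The upside of your route is that by checking both directions and relabelling at the end, you cleanly decouple the ordering property~(2) from the separation property~(3), whereas the paper's blockwise recursion ties them together.
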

\begin{proof}
We will work with the assumption that $\left(\frac{(1 - c_2 - c_1(1-\alpha))}{1+\alpha} - c_4 \right) \geq \frac{1}{2}$, which implies that $|\bar{F}\setminus G| \geq \frac{n}{2}$.
Further, we will work with the assumption that $\frac{c_3}{1-\alpha} \leq \frac{1}{10}$, which implies $|B_{x_{j}, \veps}| \leq \frac{c_3 n}{1-\alpha} \leq \frac{n}{10} \leq \frac{|\bar{F}\setminus G|}{5}$. Let us consider the elements of $\bar{F} \setminus G$ in increasing order of distance from $x^*$ and, in particular, the first block of $\frac{|\bar{F} \setminus G|}{10}$ elements. The probability that the randomly chosen $x_{i_1}$ belongs to this block is at least $\frac{1}{2} \cdot \frac{1}{10} = \frac{1}{20}$. 
We will now recurse on the set of elements $\bar{F} \setminus (G \cup B_{x_{i_1}, \veps})$ when selecting $x_{i_2}, ..., x_{i_5}$. So, the probability that the five randomly chosen points satisfy the three conditions is at least $(1/20)^5$. \qed
\end{proof}

We will use the five points $x_{i_1}, ..., x_{i_5}$ of the previous lemma to construct a string $x$ such that $\Delta(S, x) \leq 1.999 \cdot OPT$. 
The string $x$ should have agreement on the ordering of the symbols when compared to the optimal string $x^*$. Interestingly, for most pairs $(a, b)$ of symbols, the five strings $x_{i_1}, ..., x_{i_5}$ help find the correct relative ordering of $(a, b)$ in $x^*$. We follow the same high-level outline as in \cite{cdk23}.
Let us define the set $Bad$ of bad symbols as:
$$
Bad \coloneqq \cup_{r \neq s \in \{1, 2, 3, 4, 5\}}{(I_{i_r} \cap I_{i_s})},
$$
and the set $Good$ of good symbols as $Good = [d] \setminus B$. The reason for categorizing the $d$ symbols into good and bad is that if we consider any symbol $a \in Good$, then it belongs to at most one of the five sets $I_{i_1}, ..., I_{i_5}$. This means that for any pair $(a, b)$ of symbols in $Good$ (i.e., $a, b \in Good$), in at least three out of the five strings $x_{i_1}, ..., x_{i_5}$, the relative ordering of $a$ and $b$ match that in the optimal string $x^*$. In other words, for any such pair, the five strings reveal their relative ordering in $x^*$. We can use this information to construct a string that has good agreement with $x^*$. 
The main issue is that we do not know which symbols are bad or good. 
We know that the number of bad symbols is much smaller than that of good symbols. This follows from:
$$
|Bad| \leq 4 \veps |I_{i_1}| + 3 \veps |I_{i_2}| + 2 \veps |I_{i_3}| + \veps |I_{i_4}| \leq 10 \veps |I_{i_4}|.
$$
Using $\veps \leq 0.01$, we get that the fraction of bad symbols is not more than $(1/10)^{th}$.
Let us construct a tournament graph $T = ([d], E)$ where the nodes are the $d$ symbols, and there is a directed edge from symbol $a$ to symbol $b$ iff $a$ comes before $b$ in at least three out of five strings $x_{i_1}, ..., x_{i_5}$. The goal is to use $T$ to find an ordering such that the ordering of ``most" symbols matches that in $x^*$. This is where our algorithm deviates from that of \cite{cdk23}. 
The algorithm in \cite{cdk23} achieves this by finding and removing cycles of length three, thereby removing all bad symbols since every such cycle must have a bad symbol. The remaining graph has only good symbols once all the bad symbols have been removed. Since the relative ordering of every pair of good symbols matches that in $x^*$, the correct ordering of the remaining symbols is obtained. The time cost of this algorithm in \cite{cdk23} is $O(d^3)$, which makes the algorithm not linear. We give a much more efficient algorithm to achieve the same goal using our {\em robust sorting} framework. 
When applying our robust sorting framework, total ordering is per the ordering of symbols in the string $x^*$, and the comparison operator {\tt $COMPARE_{x_{i_1}, ..., x_{i_5}}(a, b)$} checks if at $a$ is before $b$ in at least three out of five of the strings $x_{i_1}, ..., x_{i_5}$.
The remaining aspects, including the good and bad symbols, perfectly align with our current goal.
Note that the running time of our algorithm {\tt ROBUST-SORT} is nearly linear in $d$ (ignoring polylogarithmic factors in $d$). 
Using Theorem~\ref{thm:rsort}, we get that that {\tt ROBUST-SORT} returns a string $x_T$ with the property that $LCS(\tilde{x}, x^*) \geq d - 4|Bad|$.
The rest of the analysis is similar to \cite{cdk23}. We have:
$$
\Delta(\tilde{x}, x^*) = d - LCS(\tilde{x}, x^*) \leq 4 |Bad| \leq 40 \veps |I_{i_4}| \leq 40 \veps (1 + \alpha) \frac{OPT}{n} \quad \textrm{(since $x_{i_4} \in \bar{F}$)}
$$
Using this, we can now calculate the cost with respect to the center $\tilde{x}$.
\begin{eqnarray*}
Obj(S, \tilde{x}) &=& \sum_{x_i \in S} \Delta(x_i, \tilde{x}) \\
&\leq& \sum_{x_i \in S} \left( \Delta(x_i, x^*) + \Delta(\tilde{x}, x^*)\right) \qquad \textrm{(by triangle inequality)}\\
&\leq& \left( 1 + 40 \veps (1+\alpha)\right) \cdot OPT \\
&\leq& (1.999) \cdot OPT.
\end{eqnarray*}
The last inequality is by our choice of $\veps$ and $\alpha$. 
In fact, we will fix all the constants introduced in the analysis and ensure that the constraints posed on them are satisfied. Let's write the constraints on all the constants.
\begin{eqnarray*}
0.0001 \leq \alpha &\leq& \frac{1}{3} \\
0 < c_2 &\leq& \frac{1}{3} \\
 2 - \left(\frac{3\alpha c_2}{8}\right)^2 &\leq& 2 - 10^{-10}\\
\frac{\alpha}{2 + \alpha} \cdot \left( 1 - \frac{1}{1 + \frac{3\alpha c_2}{16}}\right) &\geq& 10^{-12}\\
 \left( c_3 \veps (1+3\alpha) - 3\alpha\right) (1 - c_2 - c_1(1-\alpha)) - c_1(1+\alpha) &\geq& 0.0001\\ 
\left(\frac{(1 - c_2 - c_1(1-\alpha))}{1+\alpha} - c_4 \right) &\geq& \frac{1}{2}\\
\frac{c_3}{1 - \alpha} &\leq& \frac{1}{10} \\
40 \veps (1+\alpha) &\leq& 0.999 \\
\end{eqnarray*}
The following values of constants satisfy the above constraints\footnote{We fixed constants that break the approximation barrier of 2. We did not try to optimize the constants beyond breaking the 2-barrier.}:
$$
\veps = \frac{1}{41}, c_3 = \frac{1}{11}, \alpha = \frac{c_3 \veps}{6}, c_1 = c_2 = c_4 = \frac{c_3 \veps}{10}.
$$
We now have all the ingredients to prove Lemma~\ref{lemma:sampling}. Consider eight randomly sampled points $y_1, ..., y_8$ from $S$ and consider the list of centers $\{y_1, y_2, y_3,$ {\tt ROBUST-SORT(\tt {$[d]$, $COMPARE_{y_4, ..., y_8}(., .)$})}\}. If the condition of Lemma~\ref{lemma:1} holds, then $y_1$ is a good center with probability at least $c_1$. If the condition of Lemma~\ref{lemma:2} holds, then $y_2$ is a good center with probability at least $10^{-12}$. If $|G| \geq c_4 n$, then by Lemma~\ref{lemma:6}, $y_3$ is a good center with probability at least $c_4$. In the remaining case, the five centers $y_4, ..., y_8$ give a good center (using robust-sorting) with a probability of at least $\frac{1}{20^5}$.
So, the probability of success of this randomized strategy is given by: 
$$
\min{\left(c_1, 10^{-12}, c_4, \frac{1}{20^5} \right)} = 10^{-12}.
$$
Let $\eta =16 \cdot 10^{12}$. If we repeat this $O(\eta)$ times and, in the process, sample $O(\eta)$ points, then the probability that our list contains a good center is at least $0.99$. This completes the proof of Lemma~\ref{lemma:sampling}.

\subsection{The Ulam-$k$-Median problem}
In this section, we use the sampling algorithm {\tt ULAM1} for the Ulam-1-median problem defined in Lemma~\ref{lemma:sampling} as a subroutine to design an algorithm for the Ulam-$k$-median problem that gives an approximation guarantee of $(2-\delta)$. 
This algorithm is based on the $D$-sampling technique.
The algorithm and its analysis follow the $D$-sampling based $(1+\veps)$-approximation results of \cite{jks} for the $k$-means problem.
For readers who are familiar with the $D$-sampling-based algorithms of \cite{jks}, the contents in this section can be summarised in the following manner: {\it the high-level algorithm description and its analysis are the same as the algorithm for the $k$-means problem in \cite{jks}, except that here we use our sampling Lemma~\ref{lemma:sampling}, instead of the Inaba's sampling lemma~\cite{inaba} that is used for the $k$-means problem.}
We provide the details for the unfamiliar reader.

The high-level ideas have already been outlined in the beginning of the previous subsection. Let $S_1, ..., S_k$ be the dataset partition that denotes the optimal $k$ clustering, and let $x^*_1, ..., x^*_k$ denote the optimal centers, respectively. Let us try to use {\tt ULAM1} to find good centers for each of $S_1, ..., S_k$. We would need $\eta$ uniformly sampled points each from $S_1, ..., S_k$. 
The issue is that the optimal clustering $S_1, ..., S_k$ is not known.
If the clusters were balanced, i.e., $|S_1| \approx |S_2| \approx ... \approx |S_k|$, then uniformly sampling $O(\eta k)$ points from $S$ and then considering all possible partitions of these points would give the required uniform samples $X_1, ..., X_k$ from each of the optimal clusters. We can then use {\tt ULAM1($X_i$)} to find good center candidates for $S_i$. 
In the general case, where the clusters may not be balanced, we use the $D$-sampling technique to boost the probability of sampling from small-sized optimal clusters, which may get ignored when sampling uniformly at random from $S$. We sample recursively. In the first step, we sample uniformly at random from $S$. This sampled set has many representatives from the largest optimal set $S_{max}$. So. if we sample $O(\eta k)$ points and consider every possible subset of size $\eta$, then at least one represents a uniform sample from $S_{max}$ with high probability. We will try out every possible subset, and for each subset, we will consider the list of possible centers produced by {\tt ULAM1}. 
For every choice of the first center $c$, we will recursively sample, but sample with $D$-distribution with respect to the center set $C \coloneqq \{c\}$.\footnote{Sampling with $D$-distribution or in other words, $D$-sampling with respect to a center set $C$, means sampling from a distribution over the dataset where the sampling probability of a point is proportional to its distance from the closest center in the set $C$.}
This way, we will add $k$ centers to our center set $C$ in a recursive algorithm of depth $k$. 
We can label the nodes in the corresponding recursion tree with the centers accumulated until that point in the algorithm. The root node starts with an empty set $C \coloneqq \{\}$ of centers, while every node at depth $i$ has a center set of size $i$.
We will argue that with high probability, there is one path in the recursion tree where the leaf node has good centers $c_1, ..., c_k$ (for $(2-\delta)$ approximation) for every cluster $S_1, ..., S_k$. 
Algorithm~\ref{fig:k} gives a pseudocode description of this algorithm. In subsequent discussions, we will use the the distance function $\Delta$ to also to write the objective function, i.e., for set of points $X$ and center set $C$, $\Delta(X, C) \coloneqq Obj(X, C) = \sum_{x \in X} \min_{c \in C} \Delta(x, c)$.

\begin{center}
\begin{Algorithm}[h]
\begin{boxedminipage}{5.8in}
{\tt ULAMk($S, k$)}

\hspace{0.1in} - Let $\veps = \delta/100$ and $N = \frac{16 \eta k}{\veps^2}$.

\hspace{0.1in} - Repeat $O(2^k)$ times and output the set of centers $C$ that give least cost

\hspace{0.3in} - Make a call to {\tt Sample-centers$(S, k, \veps, 0, \{\})$} and select $C$ from the set  

\hspace{0.4in} of solutions that give the least cost.

\vspace{0.1in}

{\tt Sample-centers$(S, k, \veps, i, C)$}

\hspace{0.1in} (1) If $(i = k)$ then add $C$ to the set of solutions

\hspace{0.1in} (2) else

\hspace{0.3in} (a) Sample a multiset $T$ of $N$ points with $D$-sampling (w.r.t. centers $C$)

\hspace{0.3in} (b) For all subsets $R \subset T$ of size $\eta$:

\hspace{0.7in} For every center $c$ in the set {\tt ULAM1($R$)}

\hspace{0.9in} (i) $C \leftarrow C \cup \{c\}$.

\hspace{0.9in} (ii) {\tt Sample-centers$(S, k, \veps, i+1, C)$}
\end{boxedminipage}
\caption{$D$-sampling based algorithm for the Ulam $k$-median problem.}
\label{fig:k}
\end{Algorithm}
\end{center}
We show the following main result regarding the algorithm {\tt ULAMk} above. Let $\delta$ be the constant associated with {\tt ULAM1}, i.e., the algorithm gives $(2-\delta)$-approximation for the Ulam 1-median problem.

\begin{theorem}[Main theorem for Ulam-$k$-Median]\label{thm:main-ulamk}
The algorithm {\tt ULAMk} runs in time $\tilde{O}(nd \cdot k^k)$ and returns a center set $C$ with $\Delta(S, C) \leq (2-\delta'') \cdot OPT$ with probability at least $0.99$, where $\delta'' = \delta/100$.
\end{theorem}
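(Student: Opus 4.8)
The plan is to mirror, essentially line for line, the $D$-sampling analysis of Jaiswal--Kumar--Sen~\cite{jks} for Euclidean $k$-means, with our sampling lemma (\Cref{lemma:sampling}, {\tt ULAM1}) playing the role of Inaba's lemma. Fix an optimal clustering $S_1,\dots,S_k$ with optimal centers $x^*_1,\dots,x^*_k$, write $OPT_j=\Delta(S_j,x^*_j)$ so that $OPT=\sum_j OPT_j$, and set $\veps=\delta/100$ as in the algorithm. The target is: a single call to {\tt Sample-centers}$(S,k,\veps,0,\{\})$ produces, with probability at least some $p_0\ge 2^{-O(k)}$, a root-to-leaf path in the recursion tree whose accumulated center set $C$ satisfies $\Delta(S,C)\le(2-\delta'')\,OPT$. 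Since the $O(2^k)$ outer repetitions use independent randomness and {\tt ULAMk} outputs the cheapest $C$ over all leaves of all repetitions, the overall failure probability is at most $(1-p_0)^{O(2^k)}$, which is far below $0.01$ once the constant in the exponent is chosen appropriately.

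First I would set up the inductive invariant and run induction on the recursion depth $i$ along a (carefully chosen, possibly adversarially ordered) \emph{good path}. Call an optimal cluster $S_j$ \emph{covered} by the current partial center set $C$ if $\Delta(S_j,C)\le(2-\delta)\,OPT_j+\tfrac{\veps}{k}\,OPT$. The invariant at depth $i$ is: either the partial center set along the good path covers at least $i$ clusters, or already $\Delta(S,C)\le(2-\delta'')\,OPT$ (this second disjunct is self-sustaining, since adding more centers only decreases the $D$-cost). If at some depth all $k$ clusters are covered, summing the covered bound gives $\Delta(S,C)\le(2-\delta)\,OPT+\veps\,OPT=(2-0.99\delta)\,OPT\le(2-\delta'')\,OPT$, so that leaf is good. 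For the inductive step, suppose not all clusters are covered and (otherwise we are in the self-sustaining case) $\Delta(S,C)$ exceeds a threshold $\tau\cdot OPT$ with $\tau=1+O(\veps)$ chosen so that $(1-\tfrac{\veps}{10})\tau\ge 2-0.99\delta$ yet $\tau<2-\delta''$; a short calculation then shows the residual $D$-cost of the uncovered clusters is at least $\tfrac{\veps}{10}\Delta(S,C)$, hence some uncovered cluster $S_{j^\star}$ carries $D$-mass at least $\tfrac{\veps}{10k}$. Since the algorithm $D$-samples $N=\tfrac{16\eta k}{\veps^2}$ points, the expected number landing in $S_{j^\star}$ is $\Omega(\eta/\veps)\ge\eta$, so by a Chernoff bound at least $\eta$ of them lie in $S_{j^\star}$ with probability $1-2^{-\Omega(\eta)}$; at depth $0$ this is exactly the (cleaner) statement that a uniform sample of $N$ points contains $\ge\eta$ uniformly random members of the largest optimal cluster. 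Because {\tt Sample-centers} enumerates \emph{all} $\eta$-subsets $R\subseteq T$ and all centers in {\tt ULAM1}$(R)$, it suffices that for one such $R$ the list {\tt ULAM1}$(R)$ contains a $(2-\delta)$-approximate center for $S_{j^\star}$; applying \Cref{lemma:sampling} with $S_{j^\star}$ in place of $S$, this holds with probability $\ge 0.99$ provided $R$ is (close to) a uniform i.i.d.\ sample of $S_{j^\star}$. Multiplying the per-level success probabilities $\ge 0.99-2^{-\Omega(\eta)}$ over the $k$ levels yields $p_0\ge 2^{-O(k)}$.

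The step I expect to be the main obstacle is exactly the parenthetical caveat above: the $\eta$ points of $T$ inside $S_{j^\star}$ are drawn from the $D$-distribution \emph{restricted} to $S_{j^\star}$, with $\Pr[x]\propto\Delta(x,C)$, which is not uniform on $S_{j^\star}$, whereas \Cref{lemma:sampling} is stated for uniform samples. This is the delicate heart of the $D$-sampling framework, and I would import its resolution from~\cite{jks}: since $S_{j^\star}$ is uncovered, a $(1-O(\veps))$-fraction of its points (its ``core'', isolated by a Markov bound on $\Delta(x,x^*_{j^\star})$) have $D$-weights that cluster tightly around $\Delta(x^*_{j^\star},C)$, so $D$-sampling restricted to the core is within total variation $O(\veps\eta)$ of uniform sampling of the core, which in turn is within $O(\veps\eta)$ of uniform sampling of $S_{j^\star}$ itself; the $O(\veps)$ losses are absorbed into the slack in the definition of ``covered'' and into the gap between $\delta$ and $\delta''=\delta/100$. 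The main work in transferring this to the Ulam setting is that the stronger triangle inequality for $\Delta$ and the parameters in the $1$-median analysis must be compatible with these choices of constants, which the generous $(2-\delta)$ versus $(1+\veps)$ target makes routine rather than tight.

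Finally, for the running time: the recursion tree has depth $k$, and at each internal node the branching is over $\binom{N}{\eta}\cdot O(\eta)$ children, which is $(k/\veps^2)^{O(\eta)}=k^{O(1)}$ since $\eta$ and $\veps$ are fixed constants; hence the tree has $k^{O(k)}$ nodes. Each internal node performs one $D$-sampling step (computing $\Delta(x,C)$ for all $x$, in time $\tilde O(ndk)$ using that an LCS of two length-$d$ permutations is computable in $\tilde O(d)$ via patience sorting) and $\binom{N}{\eta}$ calls to {\tt ULAM1}, each costing $\tilde O(d)$ by \Cref{lemma:sampling}; each leaf performs one evaluation of $\Delta(S,C)$ in time $\tilde O(ndk)$. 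Multiplying by the $k^{O(k)}$ nodes and the $O(2^k)$ outer repetitions gives total running time $\tilde O(nd)\cdot k^{O(k)}$, matching the claimed $\tilde O(nd\cdot k^k)$ bound, and combining this with the success-probability argument above completes the proof of \Cref{thm:main-ulamk}.
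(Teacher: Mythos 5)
Your overall framework matches the paper's: induct on recursion depth with a ``covered'' invariant, use the D-sampling lower bound on the mass of uncovered clusters, enumerate $\eta$-subsets and feed them to {\tt ULAM1}, and account for the tree size to get the runtime. The running-time bookkeeping is fine. The gap is in the one step you flag as ``the delicate heart'': your proposed resolution of the non-uniformity of $D$-sampling within a cluster is not the one that works, and as sketched it is false.

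You claim that among the ``core'' points of an uncovered cluster $S_{j^\star}$ (those with small $\Delta(x,x^*_{j^\star})$), the $D$-weights $\Delta(x,C)$ cluster tightly around $\Delta(x^*_{j^\star},C)$, so that the restriction of the $D$-distribution to the core is within total variation $O(\veps\eta)$ of uniform. This is not true in general. The triangle inequality only gives an \emph{upper} bound $\Delta(x,C)\le\Delta(x^*_{j^\star},C)+\Delta(x,x^*_{j^\star})$; it gives no useful lower bound, and a core point $x$ that happens to sit near an existing center in $C$ can have $\Delta(x,C)$ orders of magnitude below $\Delta(x^*_{j^\star},C)$. Hence the $D$-weights on the core can span a large multiplicative range and the restricted distribution can be far from uniform in TV. The actual resolution in the paper (and in \cite{jks}) avoids TV altogether: it splits $S_{j_{i+1}}$ into $L_{j_{i+1}}$, the points whose relative $D$-weight $\Delta(p,C^{(i)})/\Delta(S_{j_{i+1}},C^{(i)})$ is below $\tfrac{\veps/2}{m_{j_{i+1}}}$, and the rest $H_{j_{i+1}}$. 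Points in $L$ are simply charged to the \emph{existing} centers at total cost $\le\tfrac{\veps}{2}\Delta(S_{j_{i+1}},x^*_{j_{i+1}})$, so {\tt ULAM1} is applied to $H_{j_{i+1}}$, not to $S_{j^\star}$. Each point of $H$ has $D$-sampling probability $\ge\veps^2/(4k\,m_{j_{i+1}})$, and the sampling theorem (Lemmas~\ref{lem:clinaba} and~\ref{lem:final}) is a \emph{coupling} argument: a $D$-sample dominates a process that with probability $\veps$ outputs a uniform element of $H$, so $N=16\eta k/\veps^2$ $D$-samples contain, with probability $\ge1/2$, a sub-multiset of size $\eta$ that is \emph{exactly} distributed as $\eta$ i.i.d.\ uniform draws from $H_{j_{i+1}}$ --- no TV-closeness claim is needed. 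Applying \Cref{lemma:sampling} to that $H$-sample and then adding the $L$-cost gives $\Delta(S_{j_{i+1}},C^{(i)}\cup\{c'\})\le(2-\delta')\Delta(S_{j_{i+1}},x^*_{j_{i+1}})$, which is what your induction actually needs. So the fix is not merely a rephrasing of your sketch; it changes both the set to which {\tt ULAM1} is applied and the probabilistic argument from TV-approximation to exact coupling.
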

\underline{\it Running time of {\tt ULAMk}}: Calls to {\tt Sample-centers} is made $O(2^k)$ times (for probability amplification) within {\tt ULAMk}. The subroutine {\tt Sample-centers} is a recursive algorithm with depth $k$ and branching factor of $\binom{N}{\eta} \cdot O(\eta) = O(k)$.
Calculating the distance of a point from a center costs $O(d \log{d})$ time.
This is because the LCS of two permutations can be found using a reduction to the {\em Longest Increasing Subsequence} problem that takes $O(d \log{d})$ time.
Given this, updating the distribution (for $D$-sampling) after picking every new center costs $\tilde{O}(nd)$ time.
Within every recursive call to {\tt Sample-centers} we also need to make a call to {\tt ULAM1} that costs $\tilde{O}(d)$ time.
So, the total time spent within the calls made to {\tt Sample-centers} is $\tilde{O}(nd) \cdot O(2^k) \cdot O\left( k^k\right) = \tilde{O}(nd \cdot k^k)$.
Finding the best center set out of all possibilities also costs $\tilde{O}(nd \cdot k^k)$ time.
So, the overall running time is $\tilde{O}(nd \cdot k^k)$.

To prove the approximation guarantee in the above theorem, we will show that the recursion tree for {\tt Sample-centers} satisfies the following invariant $P(i)$ for every $i$:
\begin{quote}
$P(i)$: With probability at least $\frac{1}{2^i}$, there exists a node at depth $i$ of the recursion tree labeled with the center set $C^{(i)} \coloneqq \{c_1, ..., c_i\}$ such that the following is satisfied for some $i$ optimal clusters $S_{j_1}, ..., S_{j_i}$:
\begin{equation}\label{eqn:ulamk1}
\Delta(S_{j_1}\cup ... \cup S_{j_i}, C^{(i)}) \leq (2 - \delta') \left(\sum_{r=1}^{i} \Delta(S_{j_r}, x^*_{j_r}) \right), \textrm{where $\delta' = \delta/10$} .
\end{equation}
\end{quote}
Note that if the invariant holds, then {\tt ULAMk} outputs a $(2 - \delta'')$-approximate solution with probability 0.99 since it repeats {\tt Sample-centers} $O(2^k)$ times and picks the best solution. Since $P(0)$ holds trivially, let us show that $P(i+1)$ holds given than $P(0), ..., P(i)$ holds. Consider any node at depth $i$ labeled with centers $C^{(i)}$ satisfying (\ref{eqn:ulamk1}). 
Let $J = \{j_{1}, ..., j_{i}\}$ denote the subset of cluster indices, which, in some sense, are covered by centers in $C^{(i)}$.
Consider the recursive call {\tt Sample-centers($S, k, i+1, C^{(i)}$)}. We will show that with probability at least $1/2$, one of the candidate centers $c$ considered within this recursive call is such that
$$
\Delta(S_{j_1}\cup ... \cup S_{j_i} \cup S_{j_{i+1}}, C^{(i)} \cup \{c\}) \leq (2-\delta') \left(\sum_{r=1}^{i+1} \Delta(S_{j_r}, x^*_{j_r}) \right).
$$
In other words, a previously uncovered cluster $S_{j_{i+1}}$ is now covered (with respect to $(2-\delta')$ approximation).
Within the body of the recursive call {\tt Sample-centers($S, k, i+1, C^{(i)}$)}, a multiset $R$ is selected using $D$-sampling with respect to $C^{(i)}$. 
If the probability of sampling points in $\cup_{l \notin J} S_{l}$ is very small, we argue that $C^{(i)}$ is a good solution for the entire set of points $S$.

\begin{lemma}
Let $0 < \veps \leq \delta'/10$. If $\frac{\sum_{l \notin J} \Delta(S_l, C^{(i)})}{\sum_{l \in [k]} \Delta(S_l, C^{(i)})} \leq \frac{\veps}{2}$, then $\Delta(S, C^{(i)}) \leq (2-\delta'') \cdot OPT$.
\end{lemma}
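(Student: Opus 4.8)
The plan is to combine the two facts available at the node in question: first, the inductive invariant --- this node was fixed to satisfy~(\ref{eqn:ulamk1}), so the cost of the already-``covered'' clusters $\{S_l : l \in J\}$ under $C^{(i)}$ is at most $(2-\delta')$ times their optimal cost; second, the hypothesis, which says the remaining clusters contribute at most an $\veps/2$ fraction of $\Delta(S,C^{(i)})$. Together these should bound $\Delta(S,C^{(i)})$ by at most $(1+\veps)(2-\delta')\cdot OPT$, which is below $(2-\delta'')\cdot OPT$ for the constants fixed earlier.

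First I would set up notation. Since $S_1,\dots,S_k$ partition $S$, the denominator in the hypothesis equals $\Delta(S,C^{(i)})$ itself; I may assume it is positive, as the claim is otherwise trivial. Write $A := \sum_{l\in J}\Delta(S_l,C^{(i)})$ and $B := \sum_{l\notin J}\Delta(S_l,C^{(i)})$, so $\Delta(S,C^{(i)}) = A+B$ and the hypothesis reads $B \le \tfrac{\veps}{2}(A+B)$. Rearranging gives $A \ge (1-\tfrac{\veps}{2})\,\Delta(S,C^{(i)})$, whence $\Delta(S,C^{(i)}) \le \tfrac{A}{1-\veps/2} \le (1+\veps)\,A$, using $\veps \le \delta'/10 \le 1$.

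Next I would substitute the invariant. By~(\ref{eqn:ulamk1}), $A = \Delta\bigl(S_{j_1}\cup\cdots\cup S_{j_i},\,C^{(i)}\bigr) \le (2-\delta')\sum_{r=1}^{i}\Delta(S_{j_r},x^*_{j_r}) \le (2-\delta')\cdot OPT$, the last step because $\sum_{r=1}^{i}\Delta(S_{j_r},x^*_{j_r})$ is a partial sum of $OPT = \sum_{l=1}^{k}\Delta(S_l,x^*_l)$. Hence $\Delta(S,C^{(i)}) \le (1+\veps)(2-\delta')\cdot OPT$, and it remains to check the numerical inequality $(1+\veps)(2-\delta') \le 2-\delta''$. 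Expanding, $(1+\veps)(2-\delta') \le 2-\delta'+2\veps$; substituting $\delta' = \delta/10$, $\veps \le \delta'/10 = \delta/100$ and $\delta'' = \delta/100$ bounds the right-hand side by $2 - \tfrac{\delta}{10} + \tfrac{\delta}{50} = 2 - \tfrac{2\delta}{25} \le 2 - \tfrac{\delta}{100} = 2-\delta''$.

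I do not anticipate a genuine obstacle: the argument is two short rearrangements plus one constant check. The only points needing care are recognizing that the ratio's denominator is literally $\Delta(S,C^{(i)})$ (so the hypothesis directly controls the fraction of cost outside $J$) and confirming that the constants fixed earlier indeed satisfy $\veps \le \delta'/10$ with $\delta' = \delta/10$, $\delta'' = \delta/100$, which leaves ample slack in the last inequality.
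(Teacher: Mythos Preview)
Your proposal is correct and follows essentially the same approach as the paper: both split $\Delta(S,C^{(i)})$ into the $J$-part and its complement, use the hypothesis to bound the total by $\tfrac{1}{1-\veps/2}\le (1+\veps)$ times the $J$-part, invoke the invariant~(\ref{eqn:ulamk1}) to bound the $J$-part by $(2-\delta')\cdot OPT$, and then verify $(1+\veps)(2-\delta')\le 2-\delta''$. Your writeup is actually a bit cleaner in that you make the constant check explicit, whereas the paper simply asserts it.
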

\begin{proof}
We have:
\begin{eqnarray*}
\Delta(S, C^{(i)}) &=& \sum_{l \in J} \Delta(S_l, C^{(i)}) + \sum_{l \notin J} \Delta(S_l, C^{(i)}) \\
&\leq& \sum_{l \in J} \Delta(S_l, C^{(i)}) + \frac{\veps/2}{1 - \veps/2} \cdot \sum_{l \in J} \Delta(S_l, C^{(i)})\\
&=& \frac{1}{1 - \veps/2} \cdot \sum_{l \in J} \Delta(S_l, C^{(i)}) \\
&=& \frac{1}{1 - \veps/2} \cdot \sum_{l \in [k]} \Delta(S_l, C^{(i)}) \\
&\leq& (2-\delta') \cdot (1+\veps) \cdot OPT\\
&\leq& (2 - \delta'') \cdot OPT.
\end{eqnarray*}
This completes the proof of the lemma. \qed
\end{proof}
Let $j_{i+1}$ be the index of the cluster with the largest value of $\Delta(S_{j_{i+1}}, C^{(i)})$.
The following corollary of the above lemma will be useful for further analysis.

\begin{corollary}
Let $0 < \veps \leq \delta'/10$. If $\Delta(S, C^{(i)}) > (2-\delta'') \cdot OPT$, then 
$\frac{\sum_{l \notin J} \Delta(S_l, C^{(i)})}{\sum_{l \in [k]} \Delta(S_l, C^{(i)})} \geq \frac{\veps}{2k}$.
\end{corollary}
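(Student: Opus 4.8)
This corollary is essentially the contrapositive of the preceding lemma, sharpened by a one-line averaging step over the uncovered clusters. The first thing to check is that the hypothesis $0<\veps\le\delta'/10$ here is precisely the hypothesis under which the lemma is stated, and that the value of $\veps$ actually used inside {\tt ULAMk} (namely $\veps=\delta/100$) together with $\delta'=\delta/10$ and $\delta''=\delta/100$ satisfies it, in fact with equality $\veps=\delta'/10$. Hence the lemma applies verbatim.

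Given that, I would argue as follows. The lemma asserts: if $\sum_{l\notin J}\Delta(S_l,C^{(i)})\le\frac{\veps}{2}\sum_{l\in[k]}\Delta(S_l,C^{(i)})$, then $\Delta(S,C^{(i)})\le(2-\delta'')\cdot OPT$. Contrapositively, under the standing assumption $\Delta(S,C^{(i)})>(2-\delta'')\cdot OPT$ we must have
\[
\sum_{l\notin J}\Delta(S_l,C^{(i)})\;>\;\tfrac{\veps}{2}\sum_{l\in[k]}\Delta(S_l,C^{(i)})\;\ge\;\tfrac{\veps}{2k}\sum_{l\in[k]}\Delta(S_l,C^{(i)}),
\]
which is already the claimed ratio bound. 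To obtain the form that will actually be used in the sequel, I would additionally note that $|[k]\setminus J| = k-i \le k$ and that $j_{i+1}$ is chosen to maximise $\Delta(S_l,C^{(i)})$ over $l\notin J$, so averaging gives
\[
\Delta(S_{j_{i+1}},C^{(i)})\;\ge\;\tfrac{1}{k-i}\sum_{l\notin J}\Delta(S_l,C^{(i)})\;>\;\tfrac{\veps}{2k}\sum_{l\in[k]}\Delta(S_l,C^{(i)}).
\]
Equivalently, $D$-sampling with respect to $C^{(i)}$ lands in the single previously-uncovered cluster $S_{j_{i+1}}$ with probability at least $\veps/(2k)$, which is exactly what is needed to push the invariant $P(i)$ to $P(i+1)$.

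There is no real obstacle here; all the substance sits in the lemma. The only points that need a moment's care are bookkeeping of constants — confirming $\veps\le\delta'/10$ for the value of $\veps$ threaded through {\tt Sample-centers}, and that the slack $2-\delta''$ is loose enough against $(2-\delta')(1+\veps)$ for the lemma's inequality chain to close (both hold comfortably for $\delta'=\delta/10$, $\delta''=\delta/100$, $\veps=\delta/100$) — and deciding whether to keep the weaker statement with $\sum_{l\notin J}\Delta(S_l,C^{(i)})$ in the numerator or the per-cluster statement for $S_{j_{i+1}}$; I would state (or at least remark on) the latter, since that is the quantity that governs the success probability of the next $D$-sampling step.
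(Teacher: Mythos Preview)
Your proposal is correct and matches the paper's intent: the paper states the corollary without proof, treating it as an immediate consequence of the preceding lemma, and your contrapositive-plus-averaging argument is exactly the derivation the reader is meant to supply. Your additional remark that the per-cluster bound for $S_{j_{i+1}}$ (rather than the sum over $l\notin J$) is what actually drives the $D$-sampling probability downstream is on point and clarifies why the $1/k$ factor appears.
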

So, there is a lower bound on the probability of sampling an element from $S_{j_{i+1}}$ when $D$-sampling with respect to $C^{(i)}$. However, the probability of sampling every element from $S_{j_{i+1}}$ is not the same. Some elements will have a very small probability of getting sampled. We show that such elements can be assigned to one of the centers in $C^{(i)}$ without a significant cost penalty. 
Let $m_{j_{i+1}} = |S_{j_{i+1}}|$ and $r_{j_{i+1}} = \frac{\Delta(S_{j_{i+1}}, x^*_{j_{i+1}})}{m_{j_{i+1}}}$.
Let $L_{j_{i+1}} \subset S_{j_{i+1}}$ be those points such that: 
$$
\forall p \in L_{j_{i+1}}, \quad \frac{\Delta(p, C^{(i)})}{\Delta(S_{j_{i+1}}, C^{(i)})} \leq \frac{\veps/2}{m_{j_{i+1}}},
$$
and $H_{j_{i+1}} = S_{j_{i+1}} \setminus L_{j_{i+1}}$.

\begin{lemma}
For any point $p \in L_{j_{i+1}}$, $\Delta(p, C^{(i)}) \leq \frac{\veps/2}{1 - \veps/2} \cdot (r_{j_{i+1}} + \Delta(p, x^*_{j_{i+1}}))$.
\end{lemma}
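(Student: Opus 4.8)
The plan is to unwind the definition of $L_{j_{i+1}}$ and combine it with a triangle-inequality bound on the total cost $\Delta(S_{j_{i+1}}, C^{(i)})$, then rearrange. First I would record what membership in $L_{j_{i+1}}$ buys us: for $p \in L_{j_{i+1}}$ we have, directly from the defining inequality,
$$\Delta(p, C^{(i)}) \le \frac{\veps/2}{m_{j_{i+1}}}\cdot \Delta(S_{j_{i+1}}, C^{(i)}).$$
So it suffices to get a good upper bound on $\Delta(S_{j_{i+1}}, C^{(i)})$ that involves $m_{j_{i+1}}r_{j_{i+1}} = \Delta(S_{j_{i+1}}, x^*_{j_{i+1}})$, the quantity $\Delta(p, x^*_{j_{i+1}})$, and possibly $\Delta(p, C^{(i)})$ itself (which we will then move to the left-hand side).

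The key step is to bound $\Delta(S_{j_{i+1}}, C^{(i)})$ by routing every point of the cluster to the center of $C^{(i)}$ nearest to $p$. Let $c \in C^{(i)}$ achieve $\Delta(p, c) = \Delta(p, C^{(i)})$. For each $q \in S_{j_{i+1}}$, two applications of the triangle inequality (through $x^*_{j_{i+1}}$ and then through $p$) give
$$\Delta(q, C^{(i)}) \le \Delta(q, c) \le \Delta(q, x^*_{j_{i+1}}) + \Delta(x^*_{j_{i+1}}, p) + \Delta(p, c) = \Delta(q, x^*_{j_{i+1}}) + \Delta(p, x^*_{j_{i+1}}) + \Delta(p, C^{(i)}).$$
Summing over the $m_{j_{i+1}}$ points $q \in S_{j_{i+1}}$ yields
$$\Delta(S_{j_{i+1}}, C^{(i)}) \le \Delta(S_{j_{i+1}}, x^*_{j_{i+1}}) + m_{j_{i+1}}\,\Delta(p, x^*_{j_{i+1}}) + m_{j_{i+1}}\,\Delta(p, C^{(i)}) = m_{j_{i+1}}\big(r_{j_{i+1}} + \Delta(p, x^*_{j_{i+1}}) + \Delta(p, C^{(i)})\big).$$

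Plugging this into the first displayed inequality, the factors of $m_{j_{i+1}}$ cancel and we obtain
$$\Delta(p, C^{(i)}) \le \frac{\veps}{2}\big(r_{j_{i+1}} + \Delta(p, x^*_{j_{i+1}}) + \Delta(p, C^{(i)})\big),$$
so that $\big(1 - \tfrac{\veps}{2}\big)\Delta(p, C^{(i)}) \le \tfrac{\veps}{2}\big(r_{j_{i+1}} + \Delta(p, x^*_{j_{i+1}})\big)$, which rearranges to exactly the claimed bound (note $1 - \veps/2 > 0$ since $\veps \le \delta'/10$ is tiny). There is no real obstacle here; the only thing to be careful about is choosing to route the whole cluster through $p$'s nearest center rather than through $p$ itself, and keeping track of the self-referential $\Delta(p, C^{(i)})$ term so it can be absorbed on the left. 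The argument is the Ulam-metric analogue of the standard $D$-sampling "cheap points" lemma from \cite{jks}.
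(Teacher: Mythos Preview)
Your proposal is correct and follows essentially the same approach as the paper: both pick $c\in C^{(i)}$ nearest to $p$, route every $q\in S_{j_{i+1}}$ through $x^*_{j_{i+1}}$ and $p$ to bound $\Delta(S_{j_{i+1}},C^{(i)})$, then combine with the defining inequality of $L_{j_{i+1}}$ and solve the resulting self-referential inequality. The only cosmetic difference is that the paper presents the computation as a descending chain of ratios, whereas you sum first and substitute; the content is identical.
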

\begin{proof}
Let $c \coloneqq \arg\min_{t \in C^{(i)}}{\Delta(p, t)}$, i.e., the closest center to $p$ in $C^{(i)}$. We have:
\begin{eqnarray*}
\frac{\veps/2}{m_{j_{i+1}}} &\geq& \frac{\Delta(p, C^{(i)})}{\Delta(S_{j_{i+1}}, C^{(i)})} \\
&\geq& \frac{\Delta(p, c)}{\Delta(S_{j_{i+1}}, c)}\\
&\geq& \frac{\Delta(p, c)}{\Delta(S_{j_{i+1}}, x^*{_{j_{i+1}}}) + m_{j_{i+1}} \cdot \Delta(x^*_{j_{i+1}}, c)} \quad \textrm{(using triangle inequality)}\\
&=& \frac{\Delta(p, c)}{m_{j_{i+1}} \cdot r_{j_{i+1}} + m_{j_{i+1}} \cdot \Delta(x^*_{j_{i+1}}, c)} \\
&\geq& \frac{\Delta(p, c)}{m_{j_{i+1}} \cdot r_{j_{i+1}} + m_{j_{i+1}} \cdot \Delta(x^*_{j_{i+1}}, p) + m_{j_{i+1}} \cdot \Delta(p, c)} \ \ \ \textrm{(using triangle inequality)}
\end{eqnarray*}
Simplifying, we get the following: 
\begin{eqnarray*}
\Delta(p, c) &\leq& \frac{\veps/2}{1 - \veps/2} \cdot (r_{j_{i+1}} + \Delta(p, x^*_{j_{i+1}}))
\end{eqnarray*}
This completes the proof of the lemma.\qed
\end{proof}
The next lemma shows that there is only a small cost penalty of assigning points in $L_{j_{i+1}}$ to centers in $C^{(i)}$.

\begin{lemma}
$\Delta(L_{j_{i+1}}, C^{(i)}) \leq \frac{\veps}{2} \cdot \Delta(S_{j_{i+1}}, x^*_{j_{i+1}})$.
\end{lemma}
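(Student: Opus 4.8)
The plan is to obtain this lemma as a direct consequence of the preceding pointwise estimate, by summing it over all $p \in L_{j_{i+1}}$. Recall that for every $p \in L_{j_{i+1}}$ we have already shown $\Delta(p, C^{(i)}) \le \frac{\veps/2}{1-\veps/2}\,\bigl(r_{j_{i+1}} + \Delta(p, x^*_{j_{i+1}})\bigr)$, where $r_{j_{i+1}} = \Delta(S_{j_{i+1}}, x^*_{j_{i+1}})/m_{j_{i+1}}$ and $m_{j_{i+1}} = |S_{j_{i+1}}|$. Adding these inequalities over $p \in L_{j_{i+1}}$ gives
\[
\Delta(L_{j_{i+1}}, C^{(i)}) \;\le\; \frac{\veps/2}{1-\veps/2}\Bigl(|L_{j_{i+1}}|\cdot r_{j_{i+1}} \;+\; \sum_{p \in L_{j_{i+1}}}\Delta(p, x^*_{j_{i+1}})\Bigr).
\]

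Next I would bound the two terms in the parenthesis separately, charging each against the optimal cluster cost $\Delta(S_{j_{i+1}}, x^*_{j_{i+1}})$. For the first term, since $L_{j_{i+1}} \subseteq S_{j_{i+1}}$ we have $|L_{j_{i+1}}| \le m_{j_{i+1}}$, hence $|L_{j_{i+1}}|\cdot r_{j_{i+1}} \le m_{j_{i+1}}\, r_{j_{i+1}} = \Delta(S_{j_{i+1}}, x^*_{j_{i+1}})$. For the second term, using $L_{j_{i+1}} \subseteq S_{j_{i+1}}$ and nonnegativity of the distances, $\sum_{p \in L_{j_{i+1}}}\Delta(p, x^*_{j_{i+1}}) \le \sum_{p \in S_{j_{i+1}}}\Delta(p, x^*_{j_{i+1}}) = \Delta(S_{j_{i+1}}, x^*_{j_{i+1}})$. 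Substituting both back yields $\Delta(L_{j_{i+1}}, C^{(i)}) \le \frac{\veps}{1-\veps/2}\,\Delta(S_{j_{i+1}}, x^*_{j_{i+1}})$; since we only ever use $\veps$ with $\veps \le \delta'/10$, so that $\veps$ is a small constant far below $1$, this already gives the desired conclusion up to the exact constant, and to land precisely at $\frac{\veps}{2}$ one defines $L_{j_{i+1}}$ with the slightly smaller threshold $\tfrac{\veps/4}{m_{j_{i+1}}}$ in place of $\tfrac{\veps/2}{m_{j_{i+1}}}$, which propagates harmlessly through the earlier pointwise lemma.

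There is no genuinely hard step here: the lemma is a bookkeeping corollary of the pointwise bound proved just above, which already contained the only real idea (the nested applications of the triangle inequality through $x^*_{j_{i+1}}$ and through the closest center $c \in C^{(i)}$). The one mild subtlety worth flagging is that the term $|L_{j_{i+1}}|\cdot r_{j_{i+1}}$ must be bounded via $|L_{j_{i+1}}| \le m_{j_{i+1}}$, i.e. charged to the \emph{whole} cluster's optimal cost, rather than to $\Delta(L_{j_{i+1}}, x^*_{j_{i+1}})$, which need not be comparable to $\Delta(S_{j_{i+1}}, x^*_{j_{i+1}})$; and one should keep straight which numerical constant the downstream $k$-median argument actually invokes when reassigning the "light" points of $S_{j_{i+1}}$ to the already-chosen centers in $C^{(i)}$.
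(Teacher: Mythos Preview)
Your approach is exactly the paper's: sum the preceding pointwise bound over $p\in L_{j_{i+1}}$, enlarge the sum to $S_{j_{i+1}}$, and use $m_{j_{i+1}}\,r_{j_{i+1}}=\Delta(S_{j_{i+1}},x^*_{j_{i+1}})$ to get $\Delta(L_{j_{i+1}},C^{(i)})\le \frac{\veps}{1-\veps/2}\,\Delta(S_{j_{i+1}},x^*_{j_{i+1}})$. One remark on the constant: the paper's final inequality $\frac{\veps}{1-\veps/2}\le \frac{\veps}{2}$ is a slip, and your proposed fix (threshold $\veps/4$) yields $\frac{\veps/2}{1-\veps/4}$, which is still strictly larger than $\veps/2$; the honest bound is $\frac{\veps}{1-\veps/2}\le 2\veps$ for $\veps\le 1$, and since the downstream step only needs $2\veps + (2-\delta)\le 2-\delta'$ with $\veps=\delta/100$ and $\delta'=\delta/10$, this is harmless.
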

\begin{proof}
Using the previous lemma, we get: 
\begin{eqnarray*}
\Delta(L_{j_{i+1}}, C^{(i)}) &=& \sum_{p \in L_{j_{i+1}}} \Delta(p, C^{(i)})\\
&\leq&  \frac{\veps/2}{1-\veps/2} \cdot \sum_{p \in L_{j_{i+1}}}  (r_{j_{i+1}} + \Delta(p, x^*_{j_{i+1}}))\\
&\leq&  \frac{\veps/2}{1-\veps/2} \cdot \sum_{p \in S_{j_{i+1}}}  (r_{j_{i+1}} + \Delta(p, x^*_{j_{i+1}}))\\
&=&  \frac{\veps/2}{1-\veps/2} \cdot \sum_{p \in S_{j_{i+1}}} (m_{j_{i+1}} \cdot r_{j_{i+1}} + \Delta(S_{j_{i+1}}, x^*_{j_{i+1}})) \\
&=& \frac{\veps}{1 - \veps/2} \cdot \Delta(S_{j_{i+1}}, x^*_{j_{i+1}})\\
&\leq& \frac{\veps}{2} \cdot \Delta(S_{j_{i+1}}, x^*_{j_{i+1}})
\end{eqnarray*}
This completes the proof of the lemma.\qed
\end{proof}
Suppose in the current recursive call; we are able to find a center $c'$ such that $\Delta(H_{j_{i+1}}, c') \leq (2-\delta) \cdot \Delta(S_{j_{i+1}}, x^*_{j_{i+1}})$, then using the previous lemma, we get: 
\begin{eqnarray*}
\Delta(S_{j_{i+1}}, C^{(i)}\cup \{c'\}) &\leq& \Delta(H_{j_{i+1}}, c') + \Delta(L_{j_{i+1}}, C^{(i)}) \\
&\leq& (2-\delta) \Delta(S_{j_{i+1}}, x^*_{j_{i+1}}) + \frac{\veps}{2} \Delta(S_{j_{i+1}}, x^*_{j_{i+1}}) \\
&\leq& (2-\delta') \cdot \Delta(S_{j_{i+1}}, x^*_{j_{i+1}}).
\end{eqnarray*}
The center $c'$ can be obtained by oversampling using $D$-distribution, which ensures a sufficient number of uniform samples from $H_{j_{i+1}}$ (since each element in $H_{j_{i+1}}$ has some minimal probability of getting sampled) required by the {\tt ULAM1} algorithm.
The probability of $D$-sampling an element $p \in H_{j_{i+1}}$ is at least $\frac{\veps}{2k} \cdot \frac{\veps}{2 m_{j_{i+1}}} = \frac{(\veps^2/4k)}{m_{j_{i+1}}}$. 
So, if we $D$-sample $\frac{16k \eta}{\veps^2}$ points and consider all possible subsets of size $\eta$, then at least one of the subsets will represent a uniform sample from $H_{j_{i+1}}$ with probability at least $1/2$. This fact is shown in the Appendix (see Section~\ref{sec:sampling-lemma}).
This subset, when given as input to {\tt ULAM1}, outputs a candidate center $c'$ such that $\Delta(H_{j_{i+1}}, c') \leq (2-\delta) \cdot \Delta(H_{j_{i+1}}, x^*_{j_{i+1}})$. This completes the proof of Theorem~\ref{thm:main-ulamk}.

\section{A Sampling Theorem}\label{sec:sampling-lemma}

The issue that we address in this section is obtaining uniform samples from a subset $H \subseteq S$ using a sampler that samples from a distribution $\D$ over $S$ that satisfies the following property:
\begin{itemize}
\item For any element $p \in H$, the probability that $p$ is sampled is at least $\frac{\veps}{m}$, where $m = |H|$.
\end{itemize}
The issue in obtaining uniform samples from $H$ is that we cannot check whether a given element belongs to $H$. 
We need $\eta$ uniform samples from $H$.
Our strategy is to sample a multi-set $T$ with $\frac{4\eta}{\veps}$ elements from $S$ using our $\D$-sampler and then consider all possible subsets of size $\eta$. We will argue that with high probability, there exists a subset $T$ of size $\eta$ that is a uniform sample from $H$. 
Let $\D_{\eta}$ denote the probability distribution over multisets of $S$ of size $\eta$ chosen uniformly at random (with replacement) from $H$. Then we want to show that there exists a subset $R \subset T$ of size $\eta$ that has the same distribution as $\D_{\eta}$.
This argument is summarised in the next two lemmas.

\begin{lemma}
\label{lem:clinaba}
Let $Q$ be a set of $n$ points, and $\veps$ be a parameter, $0 < \veps < 1$. Define a random variable $X$ as follows :
with probability $\veps$, it picks an element of $Q$ uniformly at random, and with probability $1-\veps$, it does not
pick any element (i.e., is null). Let $X_1, \ldots, X_\ell$ be $\ell$ independent copies of $X$, where $\ell = \frac{4\eta}{\veps}.$
Let $P$ denote the multi-set of elements of $Q$ picked by $X_1, \ldots, X_\ell$. Then, with probability at least $3/4$,
$P$ contains a subset $U$ of size $\eta$ that has the same distribution as $\D_{\eta}$.
\end{lemma}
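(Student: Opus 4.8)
The plan is to decouple each sampling step into an independent coin flip deciding whether the step is null and an independent value uniform on $Q$, and then to extract $U$ from $P$ using \emph{only} the coin flips — this obliviousness to the sampled values is exactly what keeps $U$ uniformly distributed. Concretely I would write $X_i$ as the pair $(Z_i,Y_i)$, where $Z_1,\dots,Z_\ell$ are i.i.d.\ $\mathrm{Ber}(\veps)$, $Y_1,\dots,Y_\ell$ are i.i.d.\ uniform on $Q$, and all $2\ell$ variables are mutually independent; $X_i$ outputs $Y_i$ if $Z_i=1$ and is null if $Z_i=0$. Then $P=\{\,Y_i : Z_i=1\,\}$ as a multiset and $N:=|P|=\sum_{i=1}^\ell Z_i\sim\mathrm{Bin}(\ell,\veps)$ with $\E[N]=\ell\veps=4\eta$.

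First I would bound the ``good event'' $E:=\{N\ge\eta\}$. A Chernoff bound gives $\Pr[N<\eta]=\Pr[N\le(1-\tfrac34)\E[N]]\le\exp\!\big(-(\tfrac34)^2\E[N]/2\big)=e^{-9\eta/8}\le\tfrac14$ (any $\eta\ge 2$ suffices, and in the application $\eta$ is a large constant; Chebyshev with $\mathrm{Var}[N]\le 4\eta$ would also do), so $\Pr[E]\ge 3/4$, and crucially $E$ is measurable with respect to $(Z_1,\dots,Z_\ell)$ alone.

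On $E$, I would let $i_1<\dots<i_\eta$ be the $\eta$ smallest indices with $Z_i=1$ and set $U:=(Y_{i_1},\dots,Y_{i_\eta})$, a size-$\eta$ sub-multiset of $P$ that is a well-defined function of the data on $E$. To finish, fix any $z\in\{0,1\}^\ell$ with $\sum_i z_i\ge\eta$: this determines $i_1,\dots,i_\eta$, and conditioning on $\{(Z_1,\dots,Z_\ell)=z\}$ leaves the joint law of $(Y_{i_1},\dots,Y_{i_\eta})$ unchanged (the $Y$'s are independent of the $Z$'s), so under this conditioning $U$ is $\eta$ i.i.d.\ uniform draws from $Q$, i.e.\ $U\sim\D_\eta$. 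Averaging over all such $z$ — which together exhaust $E$ — gives $U\mid E\sim\D_\eta$; no correction is needed since $E$ depends only on the $Z$'s. The only place requiring care (the ``hard part,'' though it is really just a bookkeeping subtlety) is that the rule selecting $U$ out of $P$ must not look at the values $Y_i$: a selection that inspected them could bias the distribution, whereas taking the first $\eta$ non-null coordinates in index order is oblivious and hence safe. Everything else is the routine Chernoff estimate.
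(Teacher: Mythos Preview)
Your proof is correct and follows essentially the same approach as the paper: condition on the set $I$ of non-null indices (your $\{i:Z_i=1\}$), observe that conditioned on $I$ the non-null samples are i.i.d.\ uniform on $Q$, and use a Chernoff bound on $|I|$ (mean $4\eta$) to ensure $|I|\ge\eta$ with the required probability. Your explicit $(Z_i,Y_i)$ decoupling and the remark that the selection rule must be oblivious to the $Y_i$'s make the argument more careful than the paper's terse version, but the underlying idea is identical.
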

\begin{proof}
Define a random variable $I$, which is a subset of the index set $\{1, \ldots, \ell\}$, as follows $I = \{ t : X_t \mbox{ picks an element of $Q$, i.e., it is not null} \}$. 
Conditioned on $I = \{t_1, \ldots, t_r\}$, note that the random variables $X_{t_1}, \ldots, X_{t_r}$ are independent uniform samples from $Q$. 
Thus if $|I| \geq \eta$, then there is a $U \subset P$ that has the same distribution as $\D_{\eta}$. 
But the expected value of $|I|$ is $4 \eta$, and so, from Chernoff bound $|I| \geq \eta$ with probability at least $0.99$. Hence, the statement in the lemma holds.\qed
\end{proof}

\begin{lemma}
\label{lem:final}
With probability at least $3/4$, there exists a subset $R$ of $T$ of size $\eta$ such that $R$ has the same distribution as $\D_{\eta}$
\end{lemma}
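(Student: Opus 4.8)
The plan is to reduce the statement to Lemma~\ref{lem:clinaba} by a coupling (a "splitting") of the $\D$-sampler. Write $q_p \coloneqq \Pr_{\D}[p]$ for $p \in S$; by hypothesis $q_p \ge \veps/m$ for every $p \in H$, where $m = |H|$. The key observation is that $\D$ decomposes as a mixture
\[
\D \;=\; \veps \cdot \mathrm{Unif}(H) \;+\; (1-\veps)\cdot \D',
\]
where $\mathrm{Unif}(H)$ is the uniform distribution on $H$ and $\D'$ is the "residual" distribution assigning mass proportional to $q_p - \veps/m \ge 0$ to each $p \in H$ and proportional to $q_p$ to each $p \notin H$. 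This is a legitimate decomposition: all residual masses are nonnegative by the hypothesis, and they sum to $\sum_{p \in S} q_p - \sum_{p \in H}\veps/m = 1 - \veps$, so dividing by $1-\veps$ (legitimate since $\veps<1$) yields a genuine probability distribution $\D'$.

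Given this, I would realize each of the $\ell = \tfrac{4\eta}{\veps}$ i.i.d.\ draws that constitute $T$ as a two-stage experiment: flip an independent coin that is "heads" with probability $\veps$; on heads, draw the element uniformly from $H$ and call it \emph{marked}; on tails, draw it from $\D'$ and call it \emph{unmarked}. The marginal of each draw is exactly $\D$, so this is a faithful coupling, and the draws remain mutually independent. Let $P$ be the sub-multiset of $T$ consisting of the marked draws. Then the marked/unmarked indicators together with the values of the marked draws are precisely distributed as the variables $X_1,\dots,X_\ell$ of Lemma~\ref{lem:clinaba} with $Q = H$ (each $X_t$ is null with probability $1-\veps$ and otherwise a uniform element of $H$), and $P$ plays the role of the multiset in that lemma. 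Invoking Lemma~\ref{lem:clinaba} verbatim, with probability at least $3/4$ there is a sub-multiset $U \subseteq P$ of size $\eta$ whose distribution is $\D_\eta$. Since $P \subseteq T$ as multisets, taking $R = U$ gives the claim (here $R$ is read as a sub-multiset of the multiset $T$).

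The only point requiring care is the first paragraph: verifying that the splitting is valid — nonnegativity of the residual masses (immediate from $q_p \ge \veps/m$ on $H$) and that the two-stage sampler has the correct marginal and an i.i.d.\ marked-draw process matching the model of Lemma~\ref{lem:clinaba}. Once that coupling is in place the remainder is a direct citation of Lemma~\ref{lem:clinaba}, with no further computation needed.
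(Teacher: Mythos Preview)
Your proof is correct and takes essentially the same approach as the paper: both couple the $\D$-sampler with the null/uniform process of Lemma~\ref{lem:clinaba} and then cite that lemma. The only cosmetic difference is that you phrase the coupling as a mixture decomposition $\D = \veps\cdot\mathrm{Unif}(H) + (1-\veps)\D'$, whereas the paper describes the equivalent thinning/rejection version (sample $Y_t \sim \D$, and if $Y_t = p \in H$ keep it as $X_t$ with probability $\veps/\lambda(p)$); these yield the same joint distribution and the same sub-multiset $P \subseteq T$.
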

\begin{proof}
Let $N = \frac{4\eta}{\veps}$.
Let $Y_1, \ldots, Y_N$ be  $N$ independent random variables defined as follows : 
for any $t$, $1 \leq t \leq N$, $Y_t$ is obtained by sampling an element of $S$ using our $\D$-sampler.
If this sampled element is not in $H$, it is discarded (i.e., $Y_t$ is null); otherwise, $Y_t$ is assigned that element.
We want $\eta$ uniform samples from $H$. 
We can do this by using a simple coupling argument as follows.
For any element $p \in H$, let the probability of it being sampled using our $\D$-sampler be denoted by $\frac{\lambda(p)}{m}$.
Note that $\forall p \in O_{j_i}, \lambda(p) \geq \veps$.
So, for a particular element $p \in H$, $Y_t$ is assigned $p$ with probability $\frac{\lambda(p)}{m} \geq \frac{\veps}{m}$.
One way of sampling a random variable $X_t$ as in Lemma~\ref{lem:clinaba} is as follows --  first sample using $Y_t$. If $Y_t$ is null, then $X_t$ is also null. 
Otherwise, suppose $Y_t$ is assigned an element $p$ of $H$. 
In this case, we define $X_t = p$ with probability $\frac{\veps}{\lambda(p)}$, null otherwise. 
It is easy to check that with probability $\veps$, $X_t$ is a uniform sample from $H$, and null with probability $1-\veps$. 
Now, observe that the set of elements of $H$ sampled by $Y_1, \ldots, Y_N$ is always a superset of $X_1, \ldots, X_N$. 
We can now use Lemma~\ref{lem:clinaba} to finish the proof. \qed
\end{proof}

\end{document}